\documentclass[12pt, a4paper]{article}

\usepackage[margin=1.25in]{geometry}
\usepackage{amsmath,amssymb,amsthm,mathtools}
\usepackage{natbib}
\usepackage{graphicx}
\usepackage[nodisplayskipstretch]{setspace} %
\usepackage[inline]{enumitem}
\usepackage{fullpage,rotating,afterpage}
\usepackage{microtype} %
\usepackage{longtable}
\usepackage{subfigure}
\usepackage{ragged2e}

\usepackage{siunitx}
\usepackage{array}
\usepackage{hyperref}
\hypersetup{
	colorlinks   = true, 
	urlcolor     = blue,
	linkcolor    = black, 
	citecolor   =  black 
}

\usepackage[T1]{fontenc} 
\usepackage{tikz}
\usepackage{pdflscape}
\usepackage{booktabs}
\usepackage[justification=centering]{caption}
\usepackage[flushleft]{threeparttable}
\usepackage{multirow}
\usepackage{bm}
\usepackage{xcolor}

\usepackage{footnote}

\usepackage{xr}

\usepackage{algorithm}
\usepackage{algpseudocode}

\newcommand{\E}{\mathbb{E}}

\newcommand{\IR}{\mathbb{R}}

\newcommand{\1}[1]{\mathbf{1}\{#1\}}
\newcommand{\onepm}[1]{\mathbf{1}^\pm_{\{#1\}}} 
\newcommand{\Var}{\operatorname{Var}}

\newcommand{\Cov}{\operatorname{Cov}}

\newcommand{\wh}{\widehat}
\newcommand{\wt}{\widetilde}
\newcommand{\mc}{\mathcal}

\newcommand{\sss}{\scriptscriptstyle}

\DeclareMathOperator*{\argmin}{argmin}

\newtheorem{proposition}{Proposition} 

\newtheorem{theorem}{Theorem}

\newtheorem{lemma}{Lemma}
\newtheorem{assumption}{Assumption}
\theoremstyle{definition}

\newtheorem{remark}{Remark}

\newtheorem{assumptionalt}{Assumption}[assumption]
\newenvironment{ass}[1]{
	\renewcommand\theassumptionalt{#1}
	\assumptionalt
}{\endassumptionalt}

\newtheorem{exampleA}{Example}

\newtheorem{exampleB}{Example}

\numberwithin{equation}{section}

\usepackage{verbatim,color}

\def\boxit#1{\vbox{\hrule\hbox{\vrule\kern6pt
			\vbox{\kern6pt#1\kern6pt}\kern6pt\vrule}\hrule}}

\usepackage{dcolumn} 
\newcolumntype{d}[1]{D{.}{.}{#1}}

\newcommand{\captionfonts}{\small}

\makeatletter  
\long\def\@makecaption#1#2{%
	\vskip\abovecaptionskip
	\sbox\@tempboxa{{\captionfonts #1: #2}}%
	\ifdim \wd\@tempboxa >\hsize
	{\captionfonts #1: #2\par}
	\else
	\hbox to\hsize{\hfil\box\@tempboxa\hfil}%
	\fi
	\vskip\belowcaptionskip}
\makeatother   

\usepackage{titlesec}
\titleformat{\section}[block]{\centering\normalfont}{\thesection.}{0.5em}{\uppercase }
\titleformat{\subsection}[runin]{\normalfont}{\thesubsection.}{0.4em plus .1em minus .2em}{\bfseries}[.]
\titleformat{\subsubsection}[runin]{\normalfont}{\thesubsubsection.}{0.4em plus .1em minus .2em}{\it}[.]
\titlespacing*\section{0pt}{18pt plus 4pt minus 2pt}{5pt plus 2pt minus 2pt}
\titlespacing*\subsection{0pt}{10pt plus 2pt minus 1pt}{5pt plus 2pt minus 2pt }
\titlespacing*\subsubsection{0pt}{4pt plus 1pt minus 1pt}{5pt plus 2pt minus 2pt}

\usepackage{chngcntr}
\usepackage{apptools}
\AtAppendix{\counterwithin{lemma}{section}}
\AtAppendix{\counterwithin{assumption}{section}}
\AtAppendix{\counterwithin{corollary}{section}}
\AtAppendix{\counterwithin{theorem}{section}}
\AtAppendix{\counterwithin{proposition}{section}}
\AtAppendix{\counterwithin{table}{section}}
\AtAppendix{\counterwithin{figure}{section}}
\AtAppendix{\counterwithin{remark}{section}}

\makeatletter
\def\mythanks#1{%
	\protected@xdef \@thanks {\@thanks \protect \footnotetext [\the \c@footnote ]{#1}}%
}
\makeatother

\title{Inference in Regression Discontinuity Designs\\ with Clustered Data\mythanks{
		This version: \today.
		We thank Debopam Bhattacharya, Morten Nielsen, Zhuan Pai and numerous seminar and conference participants for helpful comments and suggestions.
		The second author gratefully acknowledges financial support from the European Research Council ERC through grant SH-1852332. 
		Author contact information: 
		Claudia Noack, Department of Economics, University of Bonn. E-Mail: claudia.noack@uni-bonn.de. Website: https://claudianoack.github.io.   
		Tomasz Olma, Department of Statistics, Ludwig Maximilian University of Munich. E-Mail: t.olma@lmu.de. Website: https://tomaszolma.github.io.   
		Christoph Rothe, Department of Economics, University of Mannheim.
		E-Mail: rothe@vwl.uni-mannheim.de. Website: http://www.christophrothe.net. }}

\author{Claudia Noack \and Tomasz Olma \and Christoph Rothe}
\date{}

\begin{document}
	\pagestyle{plain}
	\onehalfspacing
	
	\maketitle 
	
	\begin{abstract}
		Clustered sampling is prevalent in empirical regression discontinuity (RD) designs, but it
		has not received much attention in the theoretical literature. In this paper, we introduce a general model-based framework for such settings and derive high-level conditions under which the standard local linear RD estimator is asymptotically normal. We verify that our high-level assumptions hold across a wide range  of empirical designs, including settings of growing cluster sizes. 
		 We further show that clustered standard errors that are currently used in practice can be either inconsistent or overly conservative in finite samples. To address these issues, 
		 we propose a novel nearest-neighbor-type variance estimator and illustrate its properties in a diverse set of empirical applications.
  \end{abstract}

	\newpage
	\onehalfspacing

	\section{Introduction}
	
Regression discontinuity (RD) designs are widely applied in economics and other social sciences.  
In these settings, treatment is assigned whenever a unit's realization of the running variable crosses a known cutoff; for instance, a candidate is elected only if their vote share exceeds 50\%.
Under continuity conditions on the conditional expectations of the potential outcomes, the average treatment effect at the cutoff is identified by the jump in the conditional expectation of the observed outcome given the running variable at the cutoff. This jump is typically estimated as the difference of the local linear estimates on either side of the cutoff.

	Most theoretical results on estimation and inference in RD designs are derived in settings where the researcher observes a sample of independent and identically distributed (i.i.d.) observations drawn from a large population \citep[e.g.,][]{hahn2001identification, imbens2012optimal, calonico2014robust, armstrong2020simple}. 
	In practice, however, applied researchers often regard the i.i.d.\ assumption as unrealistic and routinely report clustered standard errors. 
	For example, we revisited the survey of recent RD studies published in the journals of the American Economic Association conducted by \citet{noack2021flexible} and found that clustered standard errors were used in around 80\% of the surveyed articles.
	Despite the prevalence of clustered sampling in applied RD studies, formal results for local linear RD estimators -- and more generally local polynomial regression -- under such dependence structures remain limited. In particular, the existing literature offers little guidance on the conditions under which these estimators are asymptotically normal across the range of clustering patterns encountered in practice, or on how to conduct valid inference in such settings. 

	This paper makes two main contributions. First, we provide an asymptotic theory for the local linear RD estimator for clustered data with a large number of independent groups. 
	From a statistical perspective, RD estimators are weighted averages of the outcome variable where the weights depend on the running variable, the kernel, and the bandwidth. When units are clustered, the interaction between local weighting and within-cluster dependence  alters the asymptotic behavior of the local linear RD estimator. We derive high-level conditions under which the local linear RD estimator is asymptotically normally distributed. These conditions are formulated in terms of the weights assigned to units from different clusters and translate into restrictions on cluster sizes within the estimation window.	
	To relate our high-level conditions to empirically relevant designs, we introduce four stylized  asymptotic frameworks motivated by empirical RD applications. These frameworks capture how the asymptotic behavior of local linear RD estimators depends on (i) the effective number of units per cluster within the estimation window, (ii) the dependence structure of the running variable within clusters, and (iii) assumptions on the within-cluster covariance structure of the outcome. 
	Within these frameworks, we find that distinct convergence rates and optimal bandwidth choices are qualitatively different from i.i.d. settings.
	Our results complement the analysis of \citet{hansen2019asymptotic} by considering nonparametric models. In contrast to their analysis, where the estimators are based on the full sample, the localization of the RD estimator leads to non-standard convergence rates.

	Second, we consider estimation of the conditional variance of the local linear RD estimator. We show that the conventional clustered regression residual-based standard error is consistent under the same cluster size conditions that ensure asymptotic normality of the RD estimator.
	However, in settings with independent data, the regression residual-based standard errors are known to exhibit less finite-sample bias than the so-called nearest-neighbors standard errors.
	A naive adaptation of the nearest-neighbors standard error for independent data to clustered settings is invalid, and to our knowledge, no valid nearest-neighbors-type standard error for clustered RD designs currently exists.
	As the second main contribution, we propose a novel clustered nearest-neighbors (CNN) standard error for RD estimators.
	Our proposed method chooses nearest neighbors taking into account the clustering structure and exploiting independence between clusters.	
	We establish consistency of our proposed CNN standard error under our high-level assumptions on the cluster sizes.\footnote{Although our CNN standard error is developed for RD designs in this paper, the general idea behind it extends naturally to other conditional inference problems under misspecification, such as those studied by \citet{abadie2014inference}.}

	We complement our theoretical analysis with empirical applications 
	that assess the finite-sample performance of the proposed standard error relative to existing alternatives.

	\subsection*{Related Literature} 
	Cluster-robust inference is routinely employed in empirical RD designs. Despite this fact, formal results remain limited, even for the standard nonparametric regression using local polynomial estimators. 		
	For RD designs with clustering, \citet{bartalotti2017regression} show asymptotic normality of local polynomial RD estimators under the assumption that all clusters are of the same size and the realizations of the running variable are on the same side of the cutoff within each cluster. 
	Clustered standard errors have been implemented in popular RD packages {\tt rdrobust} and {\tt RDHonest} without much theoretical foundation.
	Our paper contributes to this literature by providing a unified theory for all common RD variants with arbitrary clustering.
	
	\citet{lin2000nonparametric}, \citet{wang2003marginal}, and \citet{bhattacharya2005asymptotic} study general local polynomial estimators under bounded cluster sizes. In these regimes, as the bandwidth converges to zero, the probability of having more than one unit within the estimation window converges to zero for any given cluster, and in consequence, the clustering does not affect the asymptotic distribution.
	\citet{shimizu2024nonparametric} studies nonparametric density and local polynomial estimation under clustered sampling with heterogeneous and potentially unbounded cluster sizes. However, his framework imposes restrictions that rule out several empirically relevant RD settings. In particular, cluster sizes within the estimation window are required to remain uniformly bounded in expectation, the covariates are not allowed to be perfectly dependent within a cluster, and the within-cluster covariance structure of the residuals takes on a very specific form.
	Furthermore, his proposed variance estimator relies on parametric assumptions. By contrast, our framework accommodates weaker conditions on cluster sizes, allows for more general dependence structures in the running variable within clusters, and imposes milder assumptions on the covariance functions, while still delivering asymptotic normality. 
 	Moreover, our proposed nearest-neighbor standard error is fully nonparametric and does not rely on parametric modeling assumptions.
	
	In settings of regular parameters, there is a vast literature about cluster-robust inference.
	\citet{liang1986longitudinal, white2014asymptotic, arellano1987computing}  provide the foundational large-\(G\) theory for cluster-robust covariances for bounded cluster sizes;
	see \citet{cameron2015practitioner, mackinnon2023cluster} for reviews.
	\citet{djogbenou2019asymptotic, hansen2019asymptotic, bugni2025inference, hansen2025jackknife} extend the results to unequal, possibly large clusters.  
	\citet{abadie2023should} discuss clustering adjustments from a design-based perspective. \citet{chiang2025genuinely} point out that in many empirical applications the size of the largest cluster is not negligible relative to the total sample size, and they provide an alternative bootstrap inference procedure.
	
	This paper is the first to propose and formally study a cluster-robust nearest-neighbors-type standard error. In this regard, we extend the work of \cite{abadie2014inference}, who showed consistency of the nearest-neighbors standard errors under i.i.d.\ sampling (in a more general class of misspecified models).
	
\subsection*{Plan of the Paper}
Section~\ref{sec:setting} introduces the model and gives a preview of our main results. Section~\ref{sec:theory_highlevel} states the high-level assumptions and establishes asymptotic normality of the local linear RD estimator. Section~\ref{sec:AFs} studies the asymptotic frameworks. Section~\ref{sec:std_error} introduces our proposed standard error and we show that it is consistent. Section~\ref{sec::numerical_illustrations} contains the empirical applications.
All proofs are collected in the Appendix.

\section{Setting and preview of the results}\label{sec:setting}
\subsection{Clustered Sampling}
	We consider a sharp RD design, in which a unit receives the treatment if and only if their running variable exceeds a known cutoff value, which we normalize to zero.
	The observed data is divided into $G$ clusters, and in each cluster, we observe $n_g$ units. Let $X_{gi}$ and $Y_{gi}$ denote the running variable and the outcome variable of observation $i$ in cluster $g$, respectively. The total sample size is given by $n = \sum_{g \in [G]} n_g$, where $[G] = (1,\ldots,G)$. In our asymptotic analysis, we will treat $G$ and $(n_g)_{g \in [G]}$ as deterministic sequences indexed by the sample size $n$.
	
	Observations in different clusters are independent, but they can be dependent within a cluster.
	The outcome is generated according to the model
	\begin{equation}\label{eq:outcome_equation}
		Y_{gi}= \mu (X_{gi}) + \varepsilon_{gi},
	\end{equation}
	where $\mu(x)= \mathbb E[Y_{gi}|X_{gi}=x]$, $\E[\varepsilon_{gi}|\mc X_g] = 0$, $\mc X_g = (X_{gi})_{i \in I_g}$, and $I_g = (1,\ldots,n_g)$. The error terms $\varepsilon_{gi}$ can be arbitrarily dependent within a cluster. We denote their variance and covariances, conditional on $\mc X_g$ by
	\begin{align*}
		\sigma_{g,i}^2 & = \Var(Y_{gi}|\mathcal X_{g}), \\
		\sigma_{g,ij}  & = \Cov(Y_{gi}, Y_{gj}|\mathcal X_g),
	\end{align*}
	and we denote the covariance matrix of $\mc Y_g = (Y_{gi})_{i \in I_g}$ conditional on $\mc X_g$ by $\Sigma_g$.
	
	Under continuity assumptions on the conditional expectation of the potential outcomes, the jump in the conditional expectation $\mu(x)$ at the cutoff identifies the average treatment effect of units at the cutoff \citep{hahn2001identification}.
	Our parameter of interest is therefore given by
	$$
	\tau = \mu(0^+) - \mu(0^-),
	$$
	where for a generic function $f$, $f(0^+)$ and $f(0^-)$ denote the right and left limits of the function $f$ at zero.	
	
	\begin{remark}
		In the model of observed data in \eqref{eq:outcome_equation}, we assume that the conditional expectation function $\mu$ is the same in each cluster in the sample.
		We note that this model allows for sampling from a population where clusters are heterogeneous in terms of $\mu$ if, in each repeated sample keeping $(n_g)_{g \in [G]}$ fixed, the observed clusters are drawn at random from the superpopulation of clusters of infinite size. Specifically, suppose that we draw a random set of clusters $\mc G = \{\tilde g_1,\ldots, \tilde g_G\}$ and observe a sample of random units from these clusters, $\{(X_{\tilde gi},Y_{\tilde gi})_{i \in [I_{\tilde g}]}\}_{\tilde g \in \mc G }$. If we assume that
		$$
		Y_{\tilde gi}= \mu_{\tilde g}(X_{\tilde g i}) + \eta_{\tilde gi}, \quad \E[\eta_{\tilde gi}|\mc X_{\tilde g}] = 0,
		$$ 
		then this model fits into our framework if we define
		\[
		\mu(x) = \E[\mu_{\tilde g}(x)] \quad\text{and}\quad \varepsilon_{\tilde gi} = \eta_{\tilde gi} + \mu_{\tilde g}(X_{\tilde gi}) - \mu(X_{\tilde gi}),
		\]
		where the expectation is taken w.r.t.\ the distribution of clusters in the population.
	\end{remark}

\subsection{Local Linear RD Estimator}
	In practice, it is common to estimate the RD parameter via local linear RD regressions. This estimator is defined as
	\begin{equation}\label{eq:estimator}
		\wh\tau(h) = e_1^\top\argmin_{\beta\in\IR^4} \sum_{i=1}^n k_h(X_i) (Y_i -  V_i^\top\beta )^2 \equiv \sum_{g \in [G]}\sum_{i \in I_g} w_{gi}(h) Y_{gi},
	\end{equation}
	where $V_i =(T_i, X_i, T_i X_i,1)^\top$, $k_h(v)=k(v/h)/h$ with $k(\cdot)$ a  kernel function  and $h>0$ a bandwidth, and $e_1 = (1,0,0,0)^\top$ is the first unit vector. 
	The weights $w_{gi}(h)$ depend only on the running variable and the bandwidth; the exact expressions for the weights are given in Appendix~\ref{A:sec:weights}.
	In our setting, the variance of the RD estimator conditional on the running variable $\mc X_n = (\mc X_g)_{g \in [G]}$ equals
	\begin{equation}\label{eq:se}
		se^2(h) \equiv \Var(\wh\tau(h) | \mc X_n ) =   \sum_{g \in [G]} \sum_{i,j \in I_g}   w_{gi}(h) w_{gj}(h) \sigma_{g,ij}. 
	\end{equation}
	
\subsection{Preview of Asymptotic Results}
Our first main result establishes the large-sample behavior of the RD estimator under clustered sampling. We derive it under high-level conditions on the weights $w_{gi}(h)$ that translate into restrictions on cluster sizes within the estimation window. Under these general high-level conditions, we show that the local linear RD estimator is asymptotically normal:	
\[
	se(h)^{-1}\big(\widehat{\tau}(h) - \mathbb{E}[\widehat{\tau}(h)\mid \mathcal{X}_n]\big) \xrightarrow{d} \mathcal{N}(0,1).
\]
The convergence rate of the conditional variance $se^2(h)$ depends on the covariance of outcomes within each cluster, $\Sigma_g$, the joint distribution of the realizations of the running variable within each cluster, $\mc X_g$, and the cluster sizes, and it can be slower than the convergence rate of $(nh)^{-1}$ obtained in i.i.d.\ settings.
For example, we show in Section~\ref{sec:AFs} that if the joint distribution of $\mc X_g$ admits a bounded density and some mild regularity conditions hold, then 
\[
	se^2(h) = O_P\left(\frac{1+\lambda_n}{nh}\right),
\]
where
$$
\lambda_n = \frac{h}{n}\sum_{g \in [G]}n_g(n_g-1).
$$
If, in turn, the distribution of  $\mc X_g$ is degenerate, meaning that the realizations of the running variable are equal within each cluster, then
\[
	se^2(h) = O_P\left(\frac{1+\lambda_n/h}{nh}\right).
\] 

The above results provide bounds on the convergence rate of the conditional variance $se^2(h)$. Whether these rates are binding depends on the exact form of the conditional covariance matrix $\Sigma_g$. In Section~\ref{sec:std_error_and_bandwidth_choice}, we give an example where these bounds are achieved, but we note that the convergence rate of $se^2(h)$ can be faster.
For example, in the case of bounded joint density, if the residuals are uncorrelated within each cluster, then $se^2(h) = O_P((nh)^{-1})$, as in the i.i.d.\ case, even if $\lambda_n$ diverges to infinity.

\subsection{Standard Errors}\label{sec:Setting_stderror}	
Reliable and efficient inference requires a variance estimator that is both consistent and well-behaved in finite samples. Natural estimates of $se^2(h)$ are of the form 
\begin{equation}\label{sec:se-hat-overview}
	\wh{se}^2(h) = \sum_{g \in [G]} \sum_{i,j \in I_g} w_{gi}(h) w_{gj}(h)\widehat\sigma_{g,ij}
\end{equation}	
with $\widehat\sigma_{g,ij}$ being some estimate of $\sigma_{g,ij}$. Setting $\wh\sigma_{g,ij}$ to the product of residuals from the local linear RD regression associated with observations $i$ and $j$ in cluster $g$ yields a clustered regression residual-based standard error, analogous to the cluster-robust standard errors proposed for the linear regression \citep{liang1986longitudinal}.
We show that this standard error is valid under the same high-level conditions on the weights as those ensuring asymptotic normality. 

Even though the regression residual-based standard error is consistent under mild assumptions, it has been long recognized in i.i.d.\ settings that this approach may be overly conservative in finite samples, and the nearest-neighbors standard error has been proposed to alleviate this issue \citep{abadie2014inference}. At its core, the nearest-neighbors approach estimates the conditional variance of the outcome variable for any given observation using the outcome variability among its nearest neighbors.

A direct way to adapt the nearest-neighbors variance estimation approach to clustered data is to set $\widehat\sigma_{g,ij}$ in equation~\eqref{sec:se-hat-overview} to
\begin{equation*}
	 \widehat\sigma_{g,ij}^{\text{naive}} = Y^{\Delta,\text{naive}}_{g i} Y^{\Delta,\text{naive}}_{g j}, \quad Y^{\Delta, \text{naive}}_{g i} = \sqrt{ \frac{|\mc N^\text{naive}_{gi}|}{1+|\mc N^\text{naive}_{gi}|} } \left( Y_{g i } - \frac{1}{|\mc N^\text{naive}_{gi}|}\sum_{(g', i') \in \mc N^\text{naive}_{gi} }Y_{g' i'}\right),
\end{equation*}
where $\mc N^\text{naive}_{gi}$ is the set of $J$ units that are closest to unit $i$ in cluster $g$ in terms of the running variable.\footnote{A standard error of that type was proposed by \citet[Supplemental Appendix 7.10]{calonico2019regression} and is implemented in the software package {\tt rdrobust}. We note that this approach also assumes that residuals are uncorrelated across observations on opposite sides of the cutoff.}
However, this procedure does not take into account the clustering structure, and there is no reason to expect that such a variance estimator is consistent in general. First, if the choice of neighbors is associated with cluster membership, then an additional bias term can be present.
Second, the bias-correction factor is devised for variance estimation, but it turns out it is not suitable for covariance estimation. We give two illustrative examples showing these problems in Section~\ref{subsec:failure_naive_CNN}.

To remedy the deficiencies of the naive approach, we propose a different clustered nearest-neighbors standard error where $\widehat\sigma_{g,ij}$ is set to
\begin{equation*}
 \widehat\sigma_{g,ij}^{\sss \text{CNN}} = Y^{\Delta_1}_{g i} Y^{\Delta_2}_{g j}, \quad Y^{\Delta_d}_{g i} \equiv Y_{gi} - \frac{1}{|\mc N^d_{gi} |}\sum_{(g', i') \in \mc N^d_{gi}  }Y_{g'i'} \;\text{ for } d \in \{1,2\}, 
\end{equation*}
where $\mc N^{\,1}_{gi}$ and $\mc N^{\,2}_{gi}$ are carefully chosen sets of neighbors of observation $i$ in cluster $g$. Crucially, we require that 
the observations in $\cup_{i \in I_g}\mc N^{\,1}_{gi}$ and $\cup_{i \in I_g}\mc N^{\,2}_{gi}$ belong to two disjoint sets of clusters not including $g$.
Under this requirement and additional assumptions we show this standard error is consistent, and one can see in simulations that our procedure has favorable finite-sample properties, exhibiting a smaller bias relative to the regression residual-based approach is settings where the curvature of $\mu$ is substantial.

\section{Asymptotic Normality under high-level conditions}\label{sec:theory_highlevel}
In this section, we show asymptotic normality of the local linear RD estimator under high-level assumptions on the weights.
	
	\subsection{High-Level Conditions}
	The first assumption controls the cluster sizes in terms of the weights assigned to units in each cluster. 
	\begin{assumption}\label{ass:HL_cluster_size}
		\begin{enumerate}[label=(\roman*)]
			\item []
			\item $\displaystyle \max_{g \in [G]} \sum_{i,j \in I_g} |w_{gi}(h) w_{gj}(h)| /se^2(h) =o_P(1) $,
			\item $\displaystyle \sum_{g \in [G]} \sum_{i,j \in I_g} |w_{gi}(h) w_{gj}(h)| / se^2(h) =O_P(1)$.
		\end{enumerate}
	\end{assumption}
	
	Assumption~\ref{ass:HL_cluster_size} puts restrictions on the size of the terms $\sum_{i,j \in I_g} |w_{gi}(h) w_{gj}(h)|$. In our asymptotic analysis, it is used to control the contributions of different clusters to the conditional variance $se^2(h)$. This assumption allows the number of non-zero weights within each cluster to grow with the sample size, as long as none of the clusters dominates all others in terms of the sums of the absolute value of cross-products of the weights. This assumption is very general and we provide a range of different asymptotic frameworks in which it is satisfied in Section~\ref{sec:AFs}. However, we note that it is a sufficient condition to obtain asymptotic normality, but it is not necessary. In particular, it does not cover settings where the number of clusters does not increase with the sample size.
	For example, if there is one cluster of weakly-dependent data (e.g., strongly mixing process, as extensively studied in time-series settings), one might still obtain asymptotic normality, but part~(i) of the assumption will generally not hold.
	
	\begin{remark}
		Assumption~\ref{ass:HL_cluster_size} does not require the weights $w_{gi}(h)$ to be derived via a local linear regression. It is compatible with any estimator that is a weighted mean of outcomes where the weights do not depend on the outcome variable, e.g., higher-order local polynomial estimators or the optimized RD estimators \citep{imbens2017optimized, ghosh2025plrd}.
	\end{remark}

	The second assumption controls the moments of the error terms and is standard for results invoking central limit theorems with non-i.i.d.\ data.
	
	\begin{assumption}\label{ass:boundedmoment}
		$\E[|\varepsilon_{gi}|^\delta|\mc X_g]$ is uniformly bounded for some $\delta>2$ for all $g \in [G]$ and $i \in I_g$.
	\end{assumption}
	
	\subsection{Asymptotic Normality}
	Our first main result establishes asymptotic normality of the RD estimator. 	
	To control its bias, we introduce the H{\"o}lder-type class of real functions that are potentially discontinuous at zero, are twice
	differentiable on either 
	side of the threshold, and whose  second derivatives  are uniformly bounded by some constant $M>0$:
	$$\mathcal{F}_H(M)=\{f_1(x)\1{x\geq 0}- f_0(x)\1{x< 0}: \|f_w''\|{\infty} \leq M, w=0,1\}.$$
	
	\begin{theorem}\label{th:normality}
		\begin{enumerate}[label=(\roman*)]
			\item Suppose that Assumptions~\ref{ass:HL_cluster_size} and~\ref{ass:boundedmoment} hold. Then, conditional on $\mc X_n$,
			\[
			\frac{\wh\tau(h) - \E[\wh\tau(h) | \mc X_n]}{se(h)} \xrightarrow{d} \mc N(0,1).
			\]
			\item For any $M \ge 0$,
			\[
				\sup_{\mu \in \mathcal{F}_H(M)}\big|\E[\wh\tau(h) | \mc X_n] - \tau \big| \leq \bar b(h) \equiv -\frac{M}{2} \sum_{g \in [G]} \sum_{i \in I_g} w_{gi}(h) X_{gi}^2 sign(X_{gi}).
			\]
		\end{enumerate}
	\end{theorem}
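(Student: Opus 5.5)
For part~(i), I will work conditionally on $\mc X_n$ and write
\[
\wh\tau(h) - \E[\wh\tau(h)\mid\mc X_n] = \sum_{g\in[G]} Z_g, \qquad Z_g \equiv \sum_{i\in I_g} w_{gi}(h)\varepsilon_{gi}.
\]
Since the weights depend only on $\mc X_n$ and clusters are independent, the $Z_g$ are conditionally independent across $g$. Each $Z_g$ has conditional mean zero, and $\sum_g \Var(Z_g\mid\mc X_n)=se^2(h)$. The natural tool is therefore the Lyapunov CLT for triangular arrays of independent, non-identically distributed variables, applied to $Z_g/se(h)$.

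It suffices to show that
\[
L_n \equiv \sum_{g} \E[|Z_g|^\delta\mid\mc X_n]\,/\,se^\delta(h) = o_P(1)
\]
with the $\delta>2$ from Assumption~\ref{ass:boundedmoment}. A subsequence argument (every subsequence has a further subsequence along which $L_n\to0$ almost surely) then gives the conditional weak convergence in probability.

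To bound $\E|Z_g|^\delta$, let $a_g=\sum_{i\in I_g}|w_{gi}(h)|$. By Jensen/convexity applied with probability weights $|w_{gi}|/a_g$,
\[
|Z_g|^\delta \le a_g^{\delta}\sum_i \frac{|w_{gi}|}{a_g}|\varepsilon_{gi}|^\delta,
\]
so that $\E[|Z_g|^\delta\mid\mc X_n]\le C a_g^\delta$, where $C$ is the uniform moment bound. Noting that $a_g^2=\sum_{i,j\in I_g}|w_{gi}w_{gj}|$, this yields
\[
L_n \le C\,\Big(\max_g a_g^2/se^2(h)\Big)^{(\delta-2)/2}\sum_g a_g^2/se^2(h).
\]
The maximum is $o_P(1)$ by Assumption~\ref{ass:HL_cluster_size}(i), and the sum is $O_P(1)$ by Assumption~\ref{ass:HL_cluster_size}(ii), so $L_n = o_P(1)$. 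The main delicate point is the conditional-in-probability formulation of convergence; the subsequence argument handles it.

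For part~(ii), I will use the exact reproducing properties of the local linear weights from Appendix~\ref{A:sec:weights}. Because the weights solve weighted least squares on $V_i$, for any function in the span of $(T,X,TX,1)$ the estimator reproduces the coefficient on $T$. Concretely:
\begin{itemize}
\item $\sum w_{gi}T_{gi}=1$,
\item $\sum w_{gi}=0$,
\item $\sum w_{gi}X_{gi}=0$,
\item $\sum w_{gi}T_{gi}X_{gi}=0$.
\end{itemize}

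Write $\mu=f_1\1{x\ge0}-f_0\1{x<0}$ and Taylor expand each $f_w$ around zero. The linear parts are reproduced exactly and give $\tau$. The remainder is $r(x)=\tfrac12 f_w''(\xi_x)x^2$ with the appropriate sign on each side, so $|r(x)|\le \tfrac{M}{2}x^2$. This gives
\[
\E[\wh\tau\mid\mc X_n]-\tau=\sum w_{gi}\,r(X_{gi}),
\]
and the supremum of this expression over the class is attained at $r(x)=\pm\tfrac M2 x^2\,\mathrm{sign}(\cdot)$, provided the weights have a known sign pattern. I would then invoke the standard fact, for kernels that are nonnegative and decreasing, that $w_{gi}X_{gi}^2$ has a constant sign on each side: nonpositive for $X<0$ up to the sign convention, and so on. This is the argument of \citet{armstrong2020simple}. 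Under that sign pattern,
\[
\big|\textstyle\sum w_{gi}r(X_{gi})\big|\le \tfrac M2\sum |w_{gi}|X_{gi}^2 = -\tfrac M2\sum w_{gi}X_{gi}^2\,\mathrm{sign}(X_{gi}).
\]
Verifying this sign pattern of the local linear weights, via the weight formulas in the appendix, is the step I expect to require the most care in part~(ii). If it fails, the bound still holds with $|w_{gi}|$ in place of $-w_{gi}\,\mathrm{sign}(X_{gi})$.
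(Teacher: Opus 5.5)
Your part (i) is correct and is essentially the paper's argument. Your Jensen step with probability weights $|w_{gi}(h)|/a_g$ is the same bound as the paper's H\"older step with exponents $\delta$ and $\delta/(\delta-1)$. The final split into $\big(\max_g a_g^2/se^2(h)\big)^{(\delta-2)/2}$ times $\sum_g a_g^2/se^2(h)$ is identical. Your subsequence remark makes precise a point the paper glosses over, since Assumption~\ref{ass:HL_cluster_size} only gives $o_P$ and $O_P$ rates. The paper gives no proof of part (ii) and simply defers to the i.i.d.\ literature. Part (ii) is where your proposal has a genuine gap.

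\textbf{Part (ii): the sign claim is false.} The local linear weights cannot have a constant sign on either side of the cutoff. On the right, $\sum_{g,i} w^+_{gi}(h)=1$ and $\sum_{g,i} w^+_{gi}(h)X_{gi}=0$ with $X_{gi}\ge 0$ force some weights to be negative. Explicitly, $w^+_{gi}(h)\propto k_h^+(X_{gi})\,(S^+_{n,2}-S^+_{n,1}X_{gi}/h)$ is positive for $X_{gi}<hS^+_{n,2}/S^+_{n,1}$ and negative beyond. Consequently $\frac M2\sum_{g,i}|w_{gi}(h)|X_{gi}^2$ is strictly larger than
\[
\bar b(h)=-\tfrac M2\sum_{g,i}\big(w^+_{gi}(h)+w^-_{gi}(h)\big)X_{gi}^2 .
\]
Neither your main line nor your fallback therefore delivers the stated inequality; the fallback proves only a weaker bound.

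\textbf{The missing idea.} The sign condition belongs on a different object, not on the weights themselves.
\begin{enumerate}[label=(\alph*)]
\item Write the Taylor remainder in integral form, $r^+(x)=\int_0^\infty f_1''(t)(x-t)_+\,dt$ for $x\ge 0$. The right-side bias contribution becomes $\int_0^\infty f_1''(t)\,g^+(t)\,dt$ with $g^+(t)=\sum_{g,i}w^+_{gi}(h)(X_{gi}-t)_+$.
\item $g^+(0)=0$ by $\sum w^+_{gi}X_{gi}=0$, and $g^+$ vanishes for $t$ beyond the largest $X_{gi}$ carrying nonzero weight.
\item Its derivative $g^{+\prime}(t)=-\sum_{X_{gi}>t}w^+_{gi}(h)$ changes sign at most once, from nonpositive to nonnegative. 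This follows from the single sign change of the weights together with $\sum w^+_{gi}=1$.
\item Hence $g^+\le 0$ on $[0,\infty)$. The supremum over $\|f_1''\|_\infty\le M$ of $\big|\int f_1''g^+\big|$ equals $-M\int_0^\infty g^+(t)\,dt=-\frac M2\sum_{g,i}w^+_{gi}(h)X_{gi}^2$, attained at $f_1''\equiv -M$.
\item The mirror argument on the left uses $g^-(t)=\sum_{g,i}w^-_{gi}(h)(t-X_{gi})_+$ for $t\le 0$. Combining both sides by the triangle inequality, with $f_0$ and $f_1$ free to be chosen independently, gives exactly $\bar b(h)$.
\end{enumerate}
This argument uses only a nonnegative kernel, not a monotone one. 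It carries over unchanged to clustered data because $\E[\wh\tau(h)\mid\mc X_n]$ is linear in $\mu$ and does not involve the within-cluster dependence.
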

	Part~(i) of the theorem shows that $\wh\tau(h)$, appropriately recentered and rescaled, is asymptotically normally distributed. Part~(ii) shows that the conditional bias is bounded by a quantity that is the product of the bound on the second derivative of the conditional expectation function and an expression that depends only on the weights and the running variable. The bias bound is the same as in the i.i.d.\ setting \citep{armstrong2020simple, noack2024bias} since the local linear RD estimator is a linear operator and its conditional expectation is not affected by the dependence across outcomes.
	
	Given the standard deviation of the local linear RD estimator and a bound on its bias, we will consider the conditional worst-case mean squared error over data-generating process with $\mu \in \mathcal{F}_H(M)$:
	\[
		\overline{MSE}(h) =  \bar b(h)^2 + se^2(h).
	\]
	We will explicitly derive the limit of $\overline{MSE}(h)$ and characterize the corresponding asymptotically optimal bandwidth under low-level conditions in the next section.

\section{Asymptotic Frameworks}\label{sec:AFs}
Our high-level Assumption~\ref{ass:HL_cluster_size} introduced in Section~\ref{sec:theory_highlevel} accommodates a wide range of empirical clustering structures. To illustrate this flexibility, we formalize four asymptotic frameworks and provide low-level conditions under which the high-level assumption holds. 
Each of the asymptotic frameworks is calibrated to a class of empirical RD applications. They differ in (i) how quickly the number of near-cutoff units can grow inside a cluster, (ii) the dependence structure of the running variable within clusters, and (iii) assumptions on the conditional covariance matrix of the outcome.

\emph{Asymptotic Frameworks~I} and \emph{II} are motivated by empirical applications where there are many clusters that can be potentially large, but the number of units from any cluster within the estimation windows is relatively small. They cover two distinct dependence patterns in the running variable. While Asymptotic Framework I assumes that the joint distribution of the running variable within each cluster admits a bounded joint density, Asymptotic Framework II does not put any restrictions on this distribution at the cost of imposing stronger restrictions on the rates of the cluster sizes.
\emph{Asymptotic Frameworks III} and \emph{IV} relax the restrictions on cluster sizes imposed in Asymptotic Frameworks~I and II, respectively, while instead requiring additional assumptions on the covariance structure of the outcome residuals. We discuss representative empirical examples of each of the asymptotic frameworks in Section~\ref{subsec:empirical_applicaitons}.

\subsection{General Assumptions}
We will first present general assumptions that are maintained in all the four asymptotic frameworks.

\begin{assumption}\label{ass:distX}
	\begin{enumerate*}[label=(\roman*)]
	\item 
	The marginal distribution of $X_{gi}$ is the same for all $g \in [G]$ and $i \in I_g$, and it admits a continuous density $f_X$ that is bounded away from zero around the cutoff.
		\item The kernel function $k$ is a bounded and symmetric density function with bounded support, say $[-1,1]$.
		\item $h \to 0$ and $nh \to \infty$.
	\end{enumerate*}
	
	\end{assumption}

Part~(i) of Assumption~\ref{ass:distX} matches standard conditions for local polynomial estimation with continuously distributed regressors \citep{fan1996local}. We impose continuity (and continuity at the cutoff) only to obtain simple closed-form variance limits; our high-level results can also accommodate discrete, mixed, or cutoff-discontinuous running variables under suitable modifications to the variance calculations.  In the same spirit, we assume the density is continuous at the cutoff to simplify the formulas; formally, an RD analysis can proceed without this requirement, and our main results remain unaffected if the running variable is discontinuous at the threshold.

The kernel and bandwidth requirements in parts (ii) and (iii) are standard in the nonparametric regression literature. We note that the assumption that the bandwidth shrinks to zero is not necessary for our high-level conditions to hold.
We could accommodate a fixed bandwidth; we maintain the usual condition that the bandwidth shrinks to zero solely to obtain closed-form expressions for the leading bias and variance terms. 
In the asymptotic results, we will use the following kernel constants. For $j \in \mathbf{N}$, let $\bar\mu_j = \int_0^1 k(v)v^j dv$. Further, define
$\bar\mu = (\bar{\mu}_2^2 - \bar{\mu}_1 \bar{\mu}_{3})/(\bar{\mu}_2\bar{\mu}_0-\bar{\mu}_1^2)$
and
$\bar\kappa = \int_0^1 \bar k(v)^2 dv$, where $\bar k(v) = k(v)(\bar\mu_2 - \bar\mu_1 v ) / (\bar\mu_2\bar\mu_0 - \bar\mu_1^2)$.

\begin{assumption}\label{ass:epsilon}
	\begin{enumerate*}[label=(\roman*)]
		\item $\E[|\varepsilon_{gi}|^2|\mc X_g]$ is uniformly bounded for all $g \in [G]$ and $i \in I_g$.
		\item The minimal eigenvalue of the conditional covariance matrix of cluster $g$, $\lambda_{min}(\Sigma_g)$, is bounded away from zero uniformly in $n$ and $g \in [G]$.
	\end{enumerate*}
\end{assumption}

Part~(i) is slightly weaker than Assumption~\ref{ass:boundedmoment}.
Part~(ii) is imposed to rule out degenerate conditional dependence structures in the outcome variable, such as situations where two residuals within the same cluster are perfectly negatively correlated. The variance of the RD estimator could collapse to zero in such cases, precluding asymptotic normality.

\subsection{Small Effective Cluster Sizes}
The first two asymptotic frameworks apply to settings where the number of units from any cluster within the estimation window is relatively small.

\subsubsection{Assumptions}
The number of units within the estimation window from any given cluster depends on the total number of units in the cluster as well as the joint distribution of the running variable within the cluster. In Asymptotic Framework~I, we assume that the joint distribution of the realizations of the running variable admits a bounded joint density, while Asymptotic Framework~II leaves the joint distribution unrestricted.

\begin{ass}{AF-I}\label{ass:LLI_clustersize}
	\begin{enumerate*}[label=(\roman*)]
		\item The joint density of $(X_{gi_1},\ldots, X_{gi_k})$ is bounded uniformly in $n \in \mathbb{N}$, $g \in [G]$, and $1 \leq i_1 < \ldots < i_k \leq n_g$ for $k \in \{2,\ldots, n_g\}$.
		\item $\displaystyle\lambda_n = O(1)$.
		\item $\displaystyle\frac{\max_{g \in [G]} n_{g}^2h^2 +  \log^2 G  }{nh} =o(1)$.
	\end{enumerate*}		
\end{ass}	

Assumption~\ref{ass:LLI_clustersize}(i) excludes cases of perfect within-cluster correlation in the running variable where all units share the same realization of the running variable; such extreme dependence is allowed under Assumption~\ref{ass:LLII_clustersize} below at the cost of more restrictive rate conditions. Assumption~\ref{ass:LLI_clustersize} is similar to Assumption~2 of \citet{hansen2019asymptotic}, who study the convergence of the average of clustered observations and other regular, full-sample estimators. Since we consider estimation using the data close to the cutoff, our conditions are formulated in terms of the ``local sample size'' $nh$ and the ``local cluster sizes'' $n_gh$, rather than the full sample size $n$ and the cluster sizes $n_g$ used by \citet{hansen2019asymptotic}. Another conceptual difference is that part~(iii) includes an additional \(\log G\) term. This extra factor is due to the randomness of cluster sizes within the estimation window in our framework, whereas \citet{hansen2019asymptotic} consider the setting of regular parameters and use all observations within each cluster. We note that this asymptotic framework shares some similarities with the framework of \citet{shimizu2024nonparametric} specialized to one continuous covariate. He also assumes that the joint distribution of the realizations of the running variable admits a joint density (albeit only for subsets of four observations) and $\lambda_n=O(1)$. However, he imposes a restrictive assumption that $\max_{g \in [G]} n_g h = O(1)$, whereas our framework allows this quantity to diverge.

\begin{ass}{AF-II}\label{ass:LLII_clustersize}
	\begin{enumerate*}[label=(\roman*)]
		\item $\displaystyle \lambda_n = O(h)$.
		\item $\displaystyle\frac{\max_{g \in [G]} n_{g}^2}{nh}   = o(1)$.			
	\end{enumerate*}
\end{ass}	

Assumption~\ref{ass:LLII_clustersize} imposes no restrictions on the joint density of the running variable, but the rate conditions on the cluster sizes are more stringent than in Assumption~\ref{ass:LLI_clustersize}. In particular, it allows for the realizations of the running variable to be equal within each cluster. It turns out that this extreme case determines the restrictions on the cluster sizes that are necessary to verify Assumption~\ref{ass:HL_cluster_size}.

To illustrate the restrictions imposed in Assumptions~\ref{ass:LLI_clustersize} and~\ref{ass:LLII_clustersize}, we consider two examples that differ in the degree of allowed heterogeneity in cluster sizes. We revisit these examples in the next subsection to facilitate comparisons between asymptotic frameworks.

\begin{exampleA}
	Suppose that all clusters are of the same size, i.e., $n_g=n/G$ for all $g \in [G]$, then $\lambda_n = O(n_1h)$. The rate conditions in Assumption~\ref{ass:LLI_clustersize} reduce to 
	$n_1h = O(1)$  and $\log^2 G/(nh) = o(1)$ and in 	Assumption~\ref{ass:LLII_clustersize}  to $n_1 = O(1)$ and $Gh \to \infty$.
\end{exampleA}

We note that in Example~1, the expected number of units within the estimation window remains bounded in any given cluster. The next example shows that both Asymptotic Frameworks~I and~II allow the maximal expected number of observations within the estimation window to diverge for some clusters.

\begin{exampleB}
	Let $a\geq b$ for $a,b \in (0,1)$.
	Suppose that we observe	$G_1 = n-\lfloor n^{a}\rfloor$ clusters of size $1$ and $G_2=\lfloor n^{b}\rfloor$ clusters of size $\lfloor n^{a-b}\rfloor$. Further, for illustration, assume that $h=n^{-b}$. 
	Then $\lambda_n = O(n^{-b} + n^{2(a-b)-1} )$. The rate conditions of Assumption~\ref{ass:LLI_clustersize} hold if
	$	 a \leq b+\frac{1}{2}$. In addition, if $a>2b$, then each large cluster contributes a diverging number of units local to the cutoff, since
	$\max_{g \in [G]} n_g h \asymp n^{a-2b}\to\infty$. The rate conditions of Assumption~\ref{ass:LLII_clustersize} holds provided
	$a < \frac{1}{2}(1+b)$.
\end{exampleB}

\subsubsection{Theoretical results}
The following proposition presents our key theoretical results for Asymptotic Frameworks I and~II.

\begin{proposition}\label{prop:AF_I-II}
	Suppose that Assumptions~\ref{ass:distX},~\ref{ass:epsilon}, and either Assumption~\ref{ass:LLI_clustersize} or~\ref{ass:LLII_clustersize}  holds. Then Assumption~\ref{ass:HL_cluster_size} is satisfied, $se^2(h)\asymp_p (nh)^{-1}$, and $\bar b(h) = - M \bar\mu h^2(1 + o_P(1))$.
\end{proposition}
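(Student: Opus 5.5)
We work with the explicit local linear weights from Appendix~\ref{A:sec:weights}. They have the form $w_{gi}(h) = \frac{1}{nh}\, \bar k_\pm(X_{gi}/h)\,(1+o_P(1))$ uniformly, where $\bar k_\pm$ denotes the equivalent kernel $\bar k$ on each side of the cutoff, with sign $+$ for $X_{gi}\ge 0$ and $-$ for $X_{gi}<0$. This representation follows from Assumption~\ref{ass:distX}: each entry of the Gram matrix $\frac{1}{n}\sum_{g,i} k_h(X_{gi}) V_{gi}V_{gi}^\top$, after rescaling by $h$, converges in probability to its expectation. To justify this under clustering, we bound the variance of the scaled Gram entries by
\[
\frac{1}{n^2}\sum_g \sum_{i,j\in I_g} \Cov\big(k_h(X_{gi})(X_{gi}/h)^l,\,k_h(X_{gj})(X_{gj}/h)^l\big).
\]
The diagonal terms contribute $O((nh)^{-1})$. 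The off-diagonal terms contribute $O(\lambda_n/(nh))$ under AF-I, using the bounded bivariate density, and $O(\lambda_n/(nh^2)) = O((nh)^{-1})$ under AF-II, using only $|k_h|\le C/h$. In both cases the variance is $o(1)$. Consequently $|w_{gi}|\le \frac{C}{nh}\1{|X_{gi}|\le h}$ with probability approaching one, uniformly in $g,i$.

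\textbf{Bias.} The bias statement follows by substituting this representation into $\bar b(h)$. The sum $\sum w_{gi}X_{gi}^2\,\mathrm{sign}(X_{gi})$ becomes a sample average of the same type, converging to $-2\bar\mu h^2$ (the factor $-M/2$ then gives $-M\bar\mu h^2$) by the same LLN argument.

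\textbf{Variance.} For the lower bound, Assumption~\ref{ass:epsilon}(ii) gives
\[
se^2(h) \ge c\sum_{g,i} w_{gi}^2 = \frac{c}{(nh)^2}\sum_{g,i}\bar k_\pm(X_{gi}/h)^2\,(1+o_P(1)) \asymp_P (nh)^{-1},
\]
again by the LLN above. For the upper bound, Cauchy--Schwarz and Assumption~\ref{ass:epsilon}(i) give $se^2(h)\le C\, S_n$, where $S_n = \sum_g\sum_{i,j\in I_g}|w_{gi}w_{gj}|$. The bound $S_n=O_P((nh)^{-1})$ therefore yields both $se^2\asymp_P(nh)^{-1}$ and Assumption~\ref{ass:HL_cluster_size}(ii) at once. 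To bound $S_n$, write $S_n\le \frac{C}{(nh)^2}\sum_g N_g^2$, where $N_g=\sum_{i\in I_g}\1{|X_{gi}|\le h}$. Then $\E\sum_g N_g^2 = \sum_g \E N_g + \sum_g\sum_{i\ne j}P(|X_{gi}|\le h,|X_{gj}|\le h)$. The first sum is $O(nh)$. The second is $O(\sum_g n_g^2 h^2)=O(nh\lambda_n)$ under AF-I and $O(\sum_g n_g^2 h)=O(n\lambda_n)$ under AF-II, and both are $O(nh)$ by part (i)/(ii) of the respective assumption. Markov's inequality then gives $S_n=O_P((nh)^{-1})$.

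\textbf{Assumption~\ref{ass:HL_cluster_size}(i).} We need $\max_g N_g^2/(nh)=o_P(1)$; this is the main obstacle. Under AF-II it is trivial, since $N_g\le n_g$ and $\max_g n_g^2/(nh)=o(1)$. Under AF-I, $N_g$ is a sum of dependent indicators with mean $\le Cn_gh$, and a union bound over $G$ clusters is needed. The approach is to bound the tail of $N_g$ via moments computed from the bounded joint densities of part (i): the $k$-th factorial moment of $N_g$ is at most $(Cn_gh)^k$. This is Poisson-like, giving $P(N_g\ge t)\le (eCn_gh/t)^t$. Take $t=\epsilon\sqrt{nh}$. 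Part (iii) of AF-I gives $n_gh\le \epsilon' \sqrt{nh}$ and $\log G=o(\sqrt{nh})$. Hence $G\max_g P(N_g\ge \epsilon\sqrt{nh})\le \exp\big(\log G - \epsilon\sqrt{nh}\log(\epsilon/(eC\epsilon'))\big)\to0$ once $\epsilon'$ is small relative to $\epsilon$. This yields $\max_g N_g=o_P(\sqrt{nh})$ and completes the verification. The delicate point is making the factorial-moment bound uniform in $k$ up to order $\sqrt{nh}$. The joint density bound is assumed uniform over all $k\le n_g$, so each $k$-fold probability is at most $C^k(2h)^k$; this requires the bound to be of the form $C^k$, which we would read into part (i) or otherwise handle by truncation.
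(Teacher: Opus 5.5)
Your proposal is correct and is essentially the paper's proof. It uses the same eigenvalue lower bound $se^2(h)\ge c\sum_{g,i}w_{gi}(h)^2\asymp_P (nh)^{-1}$ and the same reduction of Assumption~\ref{ass:HL_cluster_size}(i) to $\max_g n_{g,h}^2/(nh)=o_P(1)$. Under AF-II that step is immediate. Under AF-I, your factorial-moment tail plus union bound is the Chernoff argument of Lemma~\ref{lemma:ngh}, which likewise reads AF-I(i) as bounding $k$-fold probabilities by $C(2h)^k$. The one real difference is that you get $\sum_g\sum_{i,j}|w_{gi}w_{gj}|=O_P((nh)^{-1})$ from $\E\sum_g N_g^2=O(nh)$ and Markov's inequality, rather than through $U^{\star\diamond}_{n,l,m}$ and Lemma~\ref{lemma::snj}(iii)--(iv); your route is slightly more elementary.

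There are two harmless slips. First, $\sum_{g,i}w_{gi}X_{gi}^2\,\mathrm{sign}(X_{gi})$ converges to $+2\bar\mu h^2$ (recall $\bar\mu<0$), not $-2\bar\mu h^2$; only the former gives $-M\bar\mu h^2$ after multiplying by $-M/2$. Second, the weights satisfy $w_{gi}(h)=\frac{1}{nh}\bigl(\bar k_\pm(X_{gi}/h)/f_X(0)+o_P(1)\,k(X_{gi}/h)\bigr)$ uniformly, with an additive error and a factor $1/f_X(0)$, not a multiplicative $(1+o_P(1))$ (since $\bar k$ vanishes at $v=\bar\mu_2/\bar\mu_1$). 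The additive form is all your later steps actually use.
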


Proposition~\ref{prop:AF_I-II} verifies our high-level condition on the weights, shows that the conditional variance of the RD estimator is of order $(nh)^{-1}$, and it characterizes the leading term of the conditional worst-case bias $\bar b(h)$. It follows that worst-case asymptotic mean squared error is of order $h^4 + (nh)^{-1}$, which is minimized for bandwidths of order $n^{-1/5}$. With such a bandwidth, the estimator converges at the rate $n^{-2/5}$, given that our assumptions hold for this bandwidth choice. 
We note that the order of the variance is the same as in the i.i.d.\ case, but its exact form is in general affected by clustering; we derive its limit under additional assumptions in Section~\ref{sec:std_error_and_bandwidth_choice}.


\subsection{Large Effective Cluster Sizes}
While Asymptotic Frameworks~I and~II cover many relevant clustering patterns, the conditions on the growth rates of the cluster sizes might be restrictive in some applications. 
In this section, we show these rate conditions can be significantly relaxed under direct assumptions on the standard error. We then argue in Section~\ref{sec:std_error_and_bandwidth_choice} that these assumptions are reasonable in many settings.

\subsubsection{Assumptions}
Asymptotic Framework~III considers cases where the joint distribution of the realizations of the running variable is continuous, as in Asymptotic Framework~I, but it relaxes the rate conditions imposed on the cluster sizes. This asymptotic framework is motivated by settings where the clusters have a large number of units even in a shrinking neighborhood of the cutoff.

\begin{ass}{AF-III}\label{ass:LL3_clustersize}
	\begin{enumerate}[label=(\roman*)]
		\item The joint density of $(X_{gi_1},\ldots, X_{gi_k})$ is bounded uniformly in $n \in \mathbb{N}$, $g \in [G]$, and $1 \leq i_1 < \ldots < i_k \leq n_g$ for $k \in \{2,\ldots, n_g\}$.
		\item $\displaystyle \frac{\lambda_n}{nh} +  \frac{\max_{g \in [G]} (n_{g}h)^2 + \log^2 G }{nh (1+\lambda_n)}   = o(1) $.
		
		\item $se^2(h) \asymp_p  (1+\lambda_n)/(nh)$.
	\end{enumerate}
\end{ass}	
Part~(i) of Assumption~\ref{ass:LL3_clustersize} coincides with part~(i) of Assumption~\ref{ass:LLI_clustersize}. However, the
rate conditions on cluster sizes in part (ii) are considerably weaker than those imposed
in parts~(ii) and~(iii) of Assumption~\ref{ass:LLI_clustersize}. In particular, $\lambda_n$ is allowed to diverge to infinity within this framework.

Asymptotic Framework IV applies to settings where each cluster may contain many units whose realizations of the running variable might be highly correlated. As a result, even within a shrinking neighborhood of the cutoff, some clusters can contribute a large number of units concentrated in a narrow region of the support of the running variable. Such settings occur naturally, for example, if the outcomes are measured at the individual level, whereas the running variable is assigned at the cluster level.

\begin{ass}{AF-IV}\label{ass:LL4_clustersize}
	\begin{enumerate*}[label=(\roman*), wide]		
		\item $\displaystyle \frac{\lambda_n}{n h} + \frac{\max_{g \in [G]} n_g^2}{\sum_{g \in [G]} n_g^2} = o(h)$. 
		\item $se^2(h) \asymp_p (1+\lambda_n/h)/(nh)$.
	\end{enumerate*}
\end{ass}

The rate conditions on cluster sizes are significantly weaker than those in Assumption~\ref{ass:LLII_clustersize} as we combine these conditions with an additional assumption on the standard error. 
We note that the rate conditions in Assumption~\ref{ass:LL4_clustersize} are stronger than those in Assumption~\ref{ass:LL3_clustersize}; this is needed because we do not impose any assumptions on the joint distribution of the running variables.
We next illustrate the implications of these rate conditions in our examples studied above.

\begin{exampleA}[Equal Cluster Sizes, cont'd]
	Assumption~\ref{ass:LL3_clustersize} requires that $G \to \infty$ and $n^2h^2/G \to \infty$.
	In this setting, $se^2(h) = O\left(1/(nh) +  1/G \right) $. Part~(i) of Assumption~\ref{ass:LL4_clustersize} reduces to the requirement that $Gh \to \infty$. We emphasize that there is no separate restriction on the cluster size. Part~(ii) implies that 	$	se^2(h) = O(1/(Gh))$.
\end{exampleA}

\begin{exampleB}[Heterogeneous Cluster Sizes, cont'd]
	The rate conditions of 	Assumption~\ref{ass:LL3_clustersize} do not impose any additional restrictions. Moreover, if $a>2b$, at least one large cluster is locally influential, in the sense that $\max_{g \in [G]} n_g h\asymp n^{a-2b}\to\infty$.	The rate conditions of 	Assumption~\ref{ass:LL4_clustersize}  impose  the same restrictions as in Asymptotic Framework II. It has to hold that $a<\frac{1}{2}(1+b)$.
\end{exampleB}

\subsubsection{Theoretical results}
The following proposition presents our key theoretical results for Asymptotic Frameworks III and~IV.
\begin{proposition}\label{prop:AF_III-IV}
	Suppose that Assumptions~\ref{ass:distX} and~\ref{ass:epsilon} hold, and either Assumption~\ref{ass:LL3_clustersize} or~\ref{ass:LL4_clustersize}  holds. Then Assumption~\ref{ass:HL_cluster_size} is satisfied and $\bar b(h) = - M \bar\mu h^2(1 + o_P(1))$.
\end{proposition}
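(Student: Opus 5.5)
The plan is to reduce both parts of Assumption~\ref{ass:HL_cluster_size} to bounds on cluster-level sums of the weights, which are controlled by counts of observations near the cutoff. The variance lower bounds come directly from Assumption~\ref{ass:LL3_clustersize}(iii) or~\ref{ass:LL4_clustersize}(ii).

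\textbf{Step 1: reduce the weights to window counts.} Using the explicit weights in Appendix~\ref{A:sec:weights}, I first show that the local linear weights satisfy
\[
|w_{gi}(h)| \le \frac{C}{nh}\1{|X_{gi}|\le h}
\]
uniformly with probability approaching one. This follows from Assumption~\ref{ass:distX}: the normalized Gram matrices $\frac{1}{nh}\sum_{g,i} k_h(X_{gi})\,(\cdot)$ converge in probability to nonsingular kernel-moment limits on each side of the cutoff. Under clustering, their variance is $O((1+\lambda_n)/(nh))$ in AF-III and $O((1+\lambda_n/h)/(nh))$ in AF-IV, which is $o(1)$ by the first terms of the rate conditions. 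Writing $N_g = \sum_{i\in I_g}\1{|X_{gi}|\le h}$, we then get
\[
\sum_{i,j\in I_g}|w_{gi}w_{gj}| \le \frac{C^2 N_g^2}{(nh)^2}.
\]

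\textbf{Step 2: bound the sum and the maximum of $N_g^2$.} For part~(ii), I compute
\[
\E\sum_g N_g^2 = \sum_g\big(n_g\,P(|X|\le h) + \textstyle\sum_{i\ne j}P(|X_{gi}|\le h,|X_{gj}|\le h)\big).
\]
Under a bounded joint density (AF-III), the second term is $O(n h^2)$ per pair, so the total is $O(nh(1+\lambda_n))$. Without restrictions (AF-IV), it is at most $O(nh)$ per pair, giving $O(nh(1+\lambda_n/h))$. Markov's inequality then gives $\sum_g N_g^2 = O_P$ of these quantities. Dividing by $(nh)^2 se^2(h)$ and using the assumed variance rate yields part~(ii).

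For part~(i) in AF-IV, I use the trivial bound $N_g\le n_g$. It gives
\[
\max_g N_g^2/(nh)^2 \le \max_g n_g^2/(nh)^2,
\]
and dividing by $se^2\asymp(1+\lambda_n/h)/(nh)$ gives at most $\max_g n_g^2\, h/\big(nh(h+\lambda_n)\big)$. Since $\sum_g n_g^2 = n(1+\lambda_n/h)$, this equals $\max_g n_g^2/(h\sum_g n_g^2)$, which is $o(1)$ by Assumption~\ref{ass:LL4_clustersize}(i).

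\textbf{Step 3: the maximum in AF-III (the main obstacle).} Here $N_g$ is random and the maximum over $G$ clusters must be controlled. $N_g$ is a sum of dependent indicators, so a Bernstein inequality is not directly available. I would use a moment bound via the bounded $k$-dimensional joint densities: the factorial moments satisfy $\E\binom{N_g}{k}\le \binom{n_g}{k}(Ch)^k$, so $N_g$ is stochastically dominated in its moment generating function by a Binomial$(n_g, Ch)$. A Chernoff bound and a union bound over $g$ then give
\[
\max_g N_g = O_P\big(\max_g n_g h + \log G\big).
\]
Hence
\[
\max_g \frac{N_g^2}{(nh)^2 se^2} = O_P\!\left(\frac{\max_g (n_gh)^2 + \log^2 G}{nh(1+\lambda_n)}\right) = o_P(1)
\]
by Assumption~\ref{ass:LL3_clustersize}(ii).

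\textbf{Step 4: the bias.} Finally, $\bar b(h) = -M\bar\mu h^2(1+o_P(1))$ follows from the same Gram-matrix convergence. The quantity $\sum_{g,i} w_{gi} X_{gi}^2\,\mathrm{sign}(X_{gi})$ is a ratio of kernel-weighted sample moments, each converging to its kernel-constant limit scaled by $h^2$, with the same clustered variance bounds as in Step~1.
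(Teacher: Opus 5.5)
Your proposal is correct and is essentially the paper's own proof. It uses the same four ingredients:
\begin{itemize}
\item local linear weights of order $O_P((nh)^{-1})$ on the estimation window, via Lemma~\ref{lemma::snj}(i);
\item the Chernoff/union-bound control of $\max_g n_{g,h}$ (Lemma~\ref{lemma:ngh}) in AF-III, and the trivial bound $n_{g,h}\le n_g$ together with $\sum_g n_g^2 = n(1+\lambda_n/h)$ in AF-IV;
\item comparison with the assumed rate of $se^2(h)$;
\item the bias computation in \eqref{eq:proof_bias}.
\end{itemize}

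The only deviation is in the sum condition. You bound $\sum_g\sum_{i,j}|w_{gi}w_{gj}|$ by a Markov bound on $\E\sum_g n_{g,h}^2$, whereas the paper goes through \eqref{eq:bound_V} and the mean--variance calculations of Lemma~\ref{lemma::snj}(iii)--(iv). Your route is slightly more elementary, since it needs only first moments of nonnegative terms, and it yields the same rates.
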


Proposition~\ref{prop:AF_III-IV} verifies our high-level Assumption~\ref{ass:HL_cluster_size} and characterizes the leading term of the conditional worst-case bias $\bar b(h)$. We note that the rate of the optimal bandwidth and the resulting convergence rate of the estimator are different than in the i.i.d.\ case or the case of small effective cluster sizes discussed in the previous section. Specifically, in Framework III, we have that
\[
	\wh \tau(h) - \tau = O_P\left(h^2 + \frac{1}{\sqrt{nh}} + \sqrt{ \frac{\sum_{g \in [G]}n_g^2}{n^2} } \right).
\]
In contrast to Asymptotic Framework I, the third term in the remainder on the right-hands side can dominate the other terms. Since the bandwidth $h$ appears only in the first two terms, the convergence rate is optimized if $h \sim n^{-1/5}$, in which case we obtain $\wh \tau(h) - \tau = O_P(n^{-2/5} +  \big(\sum_{g \in [G]}n_g^2 / n^2 \big)^{1/2} )$. We note that if $n^{-2/5} = o\big(  \big( \sum_{g \in [G]}n_g^2 / n^2 \big)^{1/2} \big)$ and  $h\sim n^{-1/5}$, then the variance dominates the bias under the optimal bandwidth choice, such that 
\[
	\frac{\wh\tau(h) - \tau}{se(h)} \xrightarrow{d} \mc N(0,1),
\]
assuming that Assumption~\ref{ass:LL3_clustersize} holds for this bandwidth choice.

In Framework IV, in turn, we have that
\[
\wh \tau(h) - \tau = O_P\left(h^2 + \frac{1}{\sqrt{nh}} \left(1  + \sqrt{ \frac{\sum_{g \in [G]}n_g^2}{n} } \right) \right).
\]
The fastest possible convergence rate is achieved for $ h \sim  \big(\big(1 + \sum_{g \in [G]}n_g^2/n\big)/n \big)^{1/5}$, yielding $\wh \tau(h) - \tau = O_P\big( n^{-2/5} + \big(\sum_{g \in [G]}n_g^2/n^2\big)^{2/5} \big) $, given that this bandwidth choice satisfies Assumption~\ref{ass:LL4_clustersize}.

\subsection{Standard Error and Bandwidth Choice in a Special Case}\label{sec:std_error_and_bandwidth_choice}
To illustrate the behavior of the local linear RD estimator with clustered data, we will further  consider a simplified setup, where exact characterization of the limit of the variance is possible. The following assumption formalizes the setup.
\begin{assumption}\label{ass:limit_variance}
	For all $g\in [G]$ and $i \in I_g$, $\sigma^2_{i,g}=\sigma^2(X_{gi})$ and $\sigma_{ij,g} = \sigma(X_{gi},X_{gj})$ for some functions $\sigma^2(\cdot)$ and $\sigma(\cdot,\cdot)$ that are $L$-Lipschitz continuous away from the cutoff.
\end{assumption}
Assumption~\ref{ass:limit_variance} imposes a common covariance between any two units and across clusters, and it imposes continuity of the conditional variance and covariance functions.\footnote{This type of assumption can be rationalized by models studied in the functional data literature \citep[see, e.g.,][]{zhang2007statistical}. \cite{shimizu2024nonparametric} also imposes such an assumption.} 

\subsubsection{Continuous Joint Distribution of the Running Variable} We first study the standard error under Asymptotic Frameworks I and III.

\begin{lemma}\label{lemma:se_1}
	Suppose that Assumptions~\ref{ass:distX},~\ref{ass:epsilon}(i), and~\ref{ass:LL3_clustersize}(i)-(ii) hold.
	Then
	\begin{enumerate}[label=(\roman*)]
		\item $\displaystyle se^2(h) = O_P\left(\frac{1 + \lambda_n}{nh} \right)$.
		\item If in addition Assumption~\ref{ass:limit_variance} holds and $(X_{gi},X_{gj})$, $i\neq j$, $i,j \in I_g$, are identically distributed with continuous
		joint density $f(x_1,x_2)$,	then
		\begin{align*}
			se^2(h) & =  \frac{ 1}{nh} \left( \bar \kappa \sum_{\star \in \{+,-\}} \frac{\sigma^2(0^\star)}{f_X(0)} 
			+ \lambda_n  \sum_{\star, \diamond \in \{+, -\}} \onepm{\star=\diamond} \sigma(0^\star,0^\diamond)  \frac{ f(0, 0)}{f_X(0)^2} + o_P\left(1 + \lambda_n\right) \right),
		\end{align*}	
		where $\onepm{\star=\diamond}=1$ if $\star=\diamond$ and $\onepm{\star=\diamond} = -1$ otherwise.
	\end{enumerate}
\end{lemma}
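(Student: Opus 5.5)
The plan is to write the local linear weights in their standard kernel form and separate the diagonal and off-diagonal parts of $se^2(h)$. Let $S_n^\pm$ be the $2\times 2$ sample Gram matrices on each side of the cutoff. Then
\[
w_{gi}(h)=\pm\frac{1}{nh}\,e_1^\top (S_n^\pm)^{-1}(1,X_{gi}/h)^\top k(X_{gi}/h)\,\1{\pm X_{gi}\ge 0},
\]
where $S_n^\pm=\frac{1}{nh}\sum_{g,i}k(X_{gi}/h)(1,X_{gi}/h)^\top(1,X_{gi}/h)\1{\pm X_{gi}\ge0}$. The first step is to show $S_n^\pm\to_p f_X(0)\Gamma^\pm$, with $\Gamma^\pm$ the usual one-sided kernel moment matrix. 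The mean follows from Assumption~\ref{ass:distX} by a change of variables. The variance is bounded by independence across clusters:
\[
\Var(S_n^\pm)\lesssim (nh)^{-2}\sum_g\Big(n_g h+n_g(n_g-1)h^2\Big)\lesssim \frac{1}{nh}+\frac{\lambda_n}{nh}.
\]
The off-diagonal pairs contribute $h^2$ each because of the bounded joint density in Assumption~\ref{ass:LL3_clustersize}(i). By Assumption~\ref{ass:LL3_clustersize}(ii), $\lambda_n/(nh)=o(1)$, so this bound is $o(1)$. Consequently $\max_{g,i}|w_{gi}|\lesssim (nh)^{-1}$ w.p.a.1, and the weights equal $(nh)^{-1}\bar k(|X_{gi}|/h)/f_X(0)$ up to sign and a uniform $(1+o_P(1))$ factor.

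For part (i), decompose $se^2(h)=\sum_{g,i}w_{gi}^2\sigma_{g,i}^2+\sum_g\sum_{i\neq j}w_{gi}w_{gj}\sigma_{g,ij}$. By Assumption~\ref{ass:epsilon}(i) and Cauchy--Schwarz, $|\sigma_{g,ij}|$ is uniformly bounded. This gives
\[
|se^2(h)|\lesssim (nh)^{-2}\Big(\sum_{g,i}\1{|X_{gi}|\le h}+\sum_g\sum_{i\ne j}\1{|X_{gi}|\le h,|X_{gj}|\le h}\Big).
\]
The two counts have expectations $O(nh)$ and $O(\sum_g n_g(n_g-1)h^2)=O(nh\lambda_n)$, again using the bounded joint density. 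Markov's inequality then gives $O_P((1+\lambda_n)/(nh))$. Note that only nonnegative counts appear here, so no concentration argument is needed.

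For part (ii), substitute the asymptotic weight form and Assumption~\ref{ass:limit_variance}.
\begin{itemize}
\item \emph{Diagonal term.} It equals $(nh)^{-2}f_X(0)^{-2}\sum_{g,i}\bar k(|X_{gi}|/h)^2\sigma^2(X_{gi})(1+o_P(1))$. Its expectation is $(nh)^{-1}\bar\kappa\sum_\pm\sigma^2(0^\pm)/f_X(0)\,(1+o(1))$, using continuity of $f_X$ and Lipschitz continuity of $\sigma^2$ on each side. Its variance is $O((1+\lambda_n)/(nh)^3)$, which is $o((nh)^{-2})$, so it concentrates.
\item \emph{Off-diagonal term.} Its expectation is $(nh)^{-2}f_X(0)^{-2}\sum_g n_g(n_g-1)$ times
\[
\sum_{\star,\diamond}\onepm{\star=\diamond}\iint \bar k(|x_1|/h)\bar k(|x_2|/h)\,\sigma(x_1,x_2)f(x_1,x_2)\,dx_1dx_2 .
\]
Each quadrant integral equals $h^2 f(0,0)\sigma(0^\star,0^\diamond)(1+o(1))$, because $\int_0^1\bar k=1$ by construction of $\bar k$. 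The total is $(nh)^{-1}\lambda_n\sum\onepm{\star=\diamond}\sigma(0^\star,0^\diamond)f(0,0)/f_X(0)^2$. The sign $\onepm{\cdot}$ comes from the opposite signs of the weights on the two sides.
\end{itemize}

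The main obstacle is showing that the off-diagonal sum concentrates around its mean at relative rate $o_P(1+\lambda_n)$. Its variance is bounded by $(nh)^{-4}\sum_g\E\big[(\sum_{i\ne j}\1{|X_{gi}|\le h,|X_{gj}|\le h})^2\big]$. Expanding into tuples of 2, 3 and 4 distinct indices and using bounded joint densities for $k\le4$ gives
\[
\sum_g\big(n_g^2h^2+n_g^3h^3+n_g^4h^4\big)\lesssim nh\lambda_n\big(1+\max_g n_gh+\max_g (n_gh)^2\big).
\]
Dividing by $\big((1+\lambda_n)/(nh)\big)^2\cdot(nh)^{-4}$ yields $O\big((1+\max_g(n_gh)^2)/(nh(1+\lambda_n))\big)=o(1)$ by Assumption~\ref{ass:LL3_clustersize}(ii). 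The $\log^2G$ term in that assumption appears not to be needed here.

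The other point requiring care is replacing the random $(S_n^\pm)^{-1}$ by its limit. This is handled because the error $o_P(1)$ multiplies quantities already shown in part (i) to be $O_P((1+\lambda_n)/(nh))$.
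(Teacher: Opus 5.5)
Your proposal is correct and follows essentially the paper's route. You split $se^2(h)$ into own-variance and within-cluster cross terms, show that the one-sided Gram matrices converge, and get the means and variances of the kernel-weighted single and pair sums from bounded, continuous joint densities of up to four units per cluster; this is what Lemma~\ref{lemma::snj}(i)--(iii) packages. Your use of Markov's inequality on nonnegative counts in part~(i) is a slight simplification of the paper's variance-based bound.

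One small fix: state the weight approximation as a uniform additive $o_P((nh)^{-1})$ error, not a multiplicative $(1+o_P(1))$ factor. The reason is that $\bar k$ vanishes at $v=\bar\mu_2/\bar\mu_1\in(0,1)$, while the empirical weights there generally do not. Your closing paragraph already uses the additive form, so the argument is unaffected.
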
	

Lemma~\ref{lemma:se_1} imposes only the weak rate conditions on the cluster sizes of Asymptotic Framework~III, which, in particular, cover Asymptotic Framework~I. First, the lemma provides an upper bound on the convergence rate of the conditional variance $se^2(h)$, and then it derives its exact limit in a special case.
Part~(ii) demonstrates that the conditional variance is asymptotically equal to the sum of two terms. The first one is driven by the variances of individual units and is the same as in the i.i.d.\ case. The second component is due to the covariance between outcomes within a cluster; we note this part depends neither on the bandwidth nor the choice of the kernel function. 
If  $\lambda_n$ converges to a positive constant, then the two terms are of the same order.
If $\lambda_n$ converges to zero, then the effect of clustering on variance becomes asymptotically negligible; a similar result was obtained by \citet{shimizu2024nonparametric} under stronger rate restrictions on the cluster sizes. If $\lambda_n$ diverges to infinity and  $\sum_{\star, \diamond \in \{+,-\}} \onepm{\star = \diamond} \sigma(0^\star, 0^\diamond) \neq 0$, the covariance part dominates. 

Under the assumptions of part~(ii) of Lemma~\ref{lemma:se_1}, the worst-case mean squared error of $\wh\tau(h)$ satisfies
\[
\overline{MSE}(h) = \left(M^2\bar\mu^2h^4 + \frac{V_1}{nh} + \frac{\sum_{g \in [G]} n_g^2}{n^2} \sum_{\star, \diamond \in \{+, -\}} \onepm{\star=\diamond} \sigma(0^\star,0^\diamond)  \frac{ f(0, 0)}{f_X(0)^2} \right)(1+o_P(1)) ,
\]
where $V_1 =   \frac{\bar \kappa}{f_X(0)} \sum_{\star \in \{+,-\}} \sigma^2(0^\star)$.
The bandwidth minimizing the leading term is given by
$$
h^*_1 = \left(\frac{V_1}{4M^2\bar\mu^2}\right)^{1/5} n^{-1/5},
$$
given that this bandwidth choice satisfied the assumptions of part~(ii) of Lemma~\ref{lemma:se_2}.
The optimal bandwidth $h^*_1$ is the same as the AMSE-optimal bandwidth in the i.i.d.\ case. We emphasize that this result relies on Assumption~\ref{ass:limit_variance}; under more general covariance structures, it need not hold.

\subsubsection{Degenerate Joint Distribution of the Running Variable} We now study the standard error under Asymptotic Frameworks II and IV.
To derive a closed-form expression for the limit of the conditional variance in this setting, we need to impose an additional assumption on the joint distribution of the running variable. For illustration, we consider the setting where all the realizations of the running variable are the same within cluster.
\begin{lemma}\label{lemma:se_2}
	Suppose that Assumptions~\ref{ass:distX},~\ref{ass:epsilon}, and~\ref{ass:LL4_clustersize}(i) hold. Then
	\begin{enumerate}[label=(\roman*)]
	\item $\displaystyle se^2(h) = O_P\left(\frac{1 + \lambda_n/h }{nh} \right) $
	
	\item If in addition Assumption~\ref{ass:limit_variance} holds and the realizations of the running variable are equal within each cluster. Then
	\[
	se^2(h) = 	\frac{1}{nh} \frac{\bar\kappa}{f_X(0)} \sum_{\star \in \{+,-\}}  \bigg(  \sigma^2(0^\star)  +   \frac{\lambda_n}{h}  \sigma(0^\star,0^\star) + o_P\left(1 + \frac{\lambda_n}{h} \right)   \bigg).
	\]
	\end{enumerate}
\end{lemma}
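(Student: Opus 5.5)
The plan is to write the local linear weights in their standard asymptotic form and then bound the cluster sums directly. By the formulas in Appendix~\ref{A:sec:weights}, and using Assumption~\ref{ass:distX} together with a law of large numbers for the clustered kernel moments $\frac{1}{n}\sum_{g}\sum_{i} k_h(X_{gi})(X_{gi}/h)^j\1{X_{gi}\ge 0}$, we have
\[
w_{gi}(h) = \frac{1}{nh f_X(0)}\,\bar k(X_{gi}/h)\,\onepm{X_{gi}\ge 0}\,(1+o_P(1)),
\]
uniformly in $(g,i)$. Here $\bar k$ is extended symmetrically to negative arguments. The variance of these kernel moments is bounded by $\sum_g n_g^2/(n^2 h)\cdot h \lesssim (1+\lambda_n/h)/(nh) \cdot h$. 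The point is that Assumption~\ref{ass:LL4_clustersize}(i) makes $\lambda_n/(nh)=o(h)$, i.e.\ $\sum_g n_g^2/n^2 = o(h)$, and this is what gives consistency of the moment estimates even when the running variable is degenerate within clusters. This step needs only Chebyshev's inequality, using independence across clusters and crude bounds within clusters.

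For part (i), bound $|\sigma_{g,ij}|\le C$ by Assumption~\ref{ass:epsilon}(i) and Cauchy--Schwarz. This gives
\[
se^2(h)\le C\sum_g\Big(\sum_i |w_{gi}(h)|\Big)^2 \lesssim \frac{1}{(nh)^2}\sum_g N_g^2, \qquad N_g=\sum_{i\in I_g}\1{|X_{gi}|\le h}.
\]
Without any joint density assumption, $\E[N_g^2]\le n_g \E[N_g]\lesssim n_g^2 h$, but this bound is too crude for the diagonal. I would instead split $N_g^2=N_g+\sum_{i\neq j}\1{\cdot}\1{\cdot}$ and bound the off-diagonal part by $n_g(n_g-1)\,P(|X_{gi}|\le h)\lesssim n_g(n_g-1)h$. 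This yields
\[
\E\Big[\sum_g N_g^2\Big]\lesssim nh + h\sum_g n_g(n_g-1) = nh(1+\lambda_n/h)\cdot h/h,
\]
and hence $se^2(h)=O_P((nh+ n\lambda_n)/(nh)^2)=O_P((1+\lambda_n/h)/(nh))$ by Markov's inequality.

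For part (ii), I would use Assumption~\ref{ass:limit_variance} with $X_{gi}\equiv X_g$, so that $\sigma_{g,ij}=\sigma(X_g,X_g)$ for $i\ne j$ and $\sigma^2(X_g)$ on the diagonal. Then
\[
se^2(h)=\sum_g \bar w_g^2\big[n_g\sigma^2(X_g)+n_g(n_g-1)\sigma(X_g,X_g)\big],
\]
where $\bar w_g$ is the common weight in cluster $g$. Substituting the weight expansion gives
\[
\frac{1}{(nhf_X(0))^2}\sum_g \bar k(X_g/h)^2\big[n_g\sigma^2(X_g)+n_g(n_g-1)\sigma(X_g,X_g)\big](1+o_P(1)).
\]
Taking expectations, $\E[\bar k(X_g/h)^2\sigma^2(X_g)] = h f_X(0)\bar\kappa\sum_\star\sigma^2(0^\star)(1+o(1))$ by the Lipschitz property and continuity of $f_X$. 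Summing with weights $n_g$ and $n_g(n_g-1)$ then produces $nh f_X(0)\bar\kappa\,[\sum_\star\sigma^2(0^\star)+(\lambda_n/h)\sum_\star\sigma(0^\star,0^\star)]$, which is the stated leading term.

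The remaining and main obstacle is showing that the random sum concentrates around its mean at relative rate $o_P(1+\lambda_n/h)$. The summands are independent across $g$. The variance of the covariance part is at most $\sum_g n_g^4 h$. Relative to the squared mean $(h\sum_g n_g^2)^2$, this ratio is bounded by $\max_g n_g^2/(h\sum_g n_g^2)$, which is $o(1)$ exactly by the second term of Assumption~\ref{ass:LL4_clustersize}(i). The diagonal part is handled analogously, with variance at most $\sum_g n_g^2 h$ relative to $(nh)^2$, which is again $o(1)$ since $\sum_g n_g^2/n^2=o(h)$.

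Finally, the $(1+o_P(1))$ factor from the weight approximation must be absorbed into the error. This works because it multiplies a nonnegative quantity of the stated order.
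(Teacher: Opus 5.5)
Your proof is correct once one step is restated, and in substance it follows the paper's argument.

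\textbf{Comparison with the paper.} For part (ii), you compute the mean and variance of the independent cluster-level sum $\sum_g \bar k(X_g/h)^2[n_g\sigma^2(X_g)+n_g(n_g-1)\sigma(X_g,X_g)]$. This is what Lemma~\ref{lemma::snj}(ii),(iv) does for $T^\star_{n,l}$ and $U^{\star\diamond}_{n,l,m}$. After normalization, your variance bound $h\sum_g n_g^4$ is their $O(n^{-4}h^{-3}\sum_g n_g^4)$. Both arguments use $\max_g n_g^2/\sum_g n_g^2=o(h)$ in the same way. You organize the computation around the limiting equivalent kernel rather than the four $(l,m)$ combinations.

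For part (i) your route is slightly more elementary. The paper goes through \eqref{eq:bound_V} and the Chebyshev-based bound of Lemma~\ref{lemma::snj}(iv). Your Markov bound on $\sum_g N_g^2$ needs only first moments. It also shows that part (i) uses only $\sum_g n_g^2/n^2=o(h)$, which is needed to control $S^\star_{n,l}$.

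\textbf{The step that must be restated.} The weight expansion cannot hold with a multiplicative factor $(1+o_P(1))$ uniformly in $(g,i)$. For any kernel positive on $(0,1)$, $\bar k$ vanishes at $v=\bar\mu_2/\bar\mu_1\in(0,1)$; for the triangular kernel this is $v=1/2$. The exact weight instead vanishes at the random point $S^\star_{n,2}/S^\star_{n,1}$, so near $\bar\mu_2/\bar\mu_1$ the ratio of exact to limiting weight is unbounded.

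Use the additive form instead: uniformly in $(g,i)$,
$w_{gi}(h)=\onepm{X_{gi}\ge 0}\,\bar k(X_{gi}/h)/(nhf_X(0))+o_P(1)\,k(X_{gi}/h)/(nh)$.
Then $|\bar w_g^2-\bar k(X_g/h)^2/(nhf_X(0))^2|\le o_P(1)\,k(X_g/h)^2/(nh)^2$. Since $|n_g\sigma^2(X_g)+n_g(n_g-1)\sigma(X_g,X_g)|\le Cn_g^2$, the induced error in $se^2(h)$ is at most $o_P(1)\cdot(nh)^{-2}\sum_g n_g^2\1{|X_g|\le h}$. By your part (i) computation this is $o_P((1+\lambda_n/h)/(nh))$. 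This replaces your closing ``nonnegative quantity'' argument.

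\textbf{Minor slips.}
\begin{itemize}
\item The variance of the kernel moments is of order $\sum_g n_g^2/(n^2h)=(1+\lambda_n/h)/(nh)$, without the extra factor $h$. With your extra $h$, $o(h)$ would not be the operative condition.
\item The bound $\E[N_g^2]\lesssim n_g^2h$ is not too crude, since $h\sum_g n_g^2=nh(1+\lambda_n/h)$ already has the right order.
\item In the concentration step, $h\sum_g n_g^2$ is the target order, not the mean of the covariance part. That mean can vanish when $\sum_\star\sigma(0^\star,0^\star)=0$.
\end{itemize}
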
	
Lemma~\ref{lemma:se_2} imposes only the weak rate conditions on the cluster sizes of Asymptotic Framework~IV, which, in particular, cover Asymptotic Framework~II.
First, the lemma provides an upper bound on the convergence rate of the conditional variance $se^2(h)$, and then it derives its exact limit in a special case.

Under the assumptions of Lemma~\ref{lemma:se_2}, the worst-case mean squared error of $\wh\tau(h)$ satisfies
\[
	\overline{MSE}(h) = \left(M^2\bar\mu^2h^4 + \frac{V_{2}}{nh}\right)(1+o_P(1)),
\]
where $ V_2 =   \frac{\bar\kappa}{f_X(0)} \sum_{\star \in \{+,-\}}  \big(  \sigma^2(0^\star)  + \sigma(0^\star,0^\star) \sum_{g \in [G]}n_g(n_g-1)/n  \big)$.
In this setting, the bandwidth minimizing the leading term of the AMSE-optimal bandwidth is given by
$$
h^*_2 = \left(\frac{V_2}{4M^2\bar\mu^2}\right)^{1/5} n^{-1/5},
$$
assuming the assumptions of Lemma~\ref{lemma:se_2} hold for this bandwidth choice.

\section{Clustered Standard Error}\label{sec:std_error}
In this section, we study variance estimation based on the nearest-neighbors and regression residual-based approaches.

\subsection{Clustered Nearest-Neighbors Standard Error}\label{sec:CNN}
We first show why the naive adaptation of the nearest-neighbors approach devised for i.i.d.\ settings is in general not valid with clustered data. We then introduce our proposed clustered nearest-neighbors (CNN) standard error and show its consistency.

\subsubsection{Failure of the Naive Clustered Nearest-Neighbors Standard Error}\label{subsec:failure_naive_CNN}
To highlight the main problems with the naive clustered nearest-neighbors standard error described in Section~\ref{sec:Setting_stderror}, we consider a simple setup with $\mu(X) = 0$, $\Var(Y_{gi}|\mc X_g) = \sigma^2$, and $J=1$ nearest neighbor. We discuss two examples that differ in the assumptions on the joint distribution of the realizations of the running variable.

First, suppose that each cluster consists of only two observations and $X_{g1}=X_{g2}$ for all $g \in [G]$, such that $\mc N^{\text{naive}}_{g1} = \{(g,2)\}$ and $\mc N^{\text{naive}}_{g2} = \{(g,1)\}$. Then
	\begin{align*}
		\wh{se}_{\text{naive} }^2(h) &  =  \frac{1}{2}  \sum_{g \in [G]}     w_{g1}^2(h)  \sum_{i,j \in I_g } Y^{\Delta, \text{naive}}_{g i} Y^{\Delta, \text{naive}}_{g j} \\ 
		& = \frac{1}{2} \sum_{g \in [G]}  w_{g1}^2(h) \underbrace{ \sum_{l=1}^4 (-1)^{l}  (\varepsilon_{g1} - \varepsilon_{g2})^2 }_{=0} =0.
	\end{align*}
Clearly, this standard error cannot be consistent. While this is a very specific example, the same type of problem arises more generally whenever the realizations of the running variable are highly concentrated within clusters.

Second, suppose that the joint distribution of the running variable within each cluster admits a bounded joint density.
Consider two observations $i,j \in I_g$, $i \neq j$, and let $(g_1, i')$ and $(g_2, j')$ be their respective nearest neighbors. Assume further that the clusters $g$, $g_1$, and $g_2$ are pairwise different, which occurs with high probability in this setup if all clusters are relatively small. Then, while it is easy to see that the conditional variance estimate is correctly centered, $\mathbb E\big[ \big(Y^{\Delta, \text{naive}}_{gi}\big)^2 |\mathcal X_n\big] = \sigma^2$, the conditional covariance estimates are biased:
		\begin{align*}
			\mathbb E\Big[ Y^{\Delta, \text{naive}}_{gi} Y^{\Delta, \text{naive}}_{gj} |\mathcal X_n\Big] & = \frac{1}{2} \mathbb E[ \varepsilon_{gi}\varepsilon_{gj} - \varepsilon_{gi}\varepsilon_{g_2j'}  - \varepsilon_{gj}\varepsilon_{g_1i'}	+ \varepsilon_{g_1i'}\varepsilon_{g_2j'} | \mathcal X_n] 
			= \frac{1}{2} \mathbb E[ \varepsilon_{gi}\varepsilon_{gj} | \mathcal X_n].
		\end{align*}
It follows that $\wh{se}_{\text{naive}}^2(h)$ is not correctly centered in general.\footnote{We note that the problem arising in the naive conditional covariance estimation can be easily resolved by leaving out the correction factor when $i \ne j$. However, a complete proof of approximate unbiasedness of the standard error would still require controlling the probability of the respective clusters being distinct across all observations, limiting the applicability of this method to relatively small clusters with a bounded joint density of the realizations of the running variable.}

\subsubsection{Our Proposed Clustered Nearest-Neighbors Standard Error}\label{subsec:SE_descitpion}
It is evident from the preceding discussion that selecting neighbors from distinct clusters induces certain independence restrictions that are instrumental for establishing conditional unbiasedness of the standard error. We leverage this insight to construct our proposed standard error, explicitly enforcing the desired independence structure.

For every cluster $g\in [G]$, we define two sets of its ``companion clusters'', $\mc R_g^1 \subset [G]$ and $\mc R_g^2 \subset [G]$. Next, for every $i \in I_g$ and $d \in \{1,2\}$, we define $\mc N_{gi}^{d}$ as the set of at least $J$ nearest neighbors of unit $i$ in cluster $g$ in terms of the running variable that are on the same side of the cutoff as $X_{gi}$ and belong to a cluster in the set $\mc R_g^d$. 
Our proposed clustered nearest-neighbors (CNN) standard error is then defined as:
\begin{equation}
	\wh{se}_{\sss CNN}^2(h) = \sum_{g \in [G]} \sum_{i,j \in I_g } w_{gi}(h) w_{gj}(h) Y^{\Delta_1}_{g i} Y^{\Delta_2}_{g j}, \quad Y^{\Delta_d}_{g i} \equiv Y_{gi} - \frac{1}{|\mc N_{gi}^{d}|}\sum_{(g', i') \in \mc N_{gi}^{d} }Y_{g'i'}.
\end{equation}
To show consistency of this standard error, we require the sets of nearest neighbors to satisfy two properties. First we require that the neighbors in the two sets $\mc N_{gi}^{\,1}$ and $\mc N_{gi}^{\,2}$ are independent of the units in cluster $g$ and of each other. This is achieved by selecting companion clusters that do not include cluster $g$, $g \notin \mc R^1_g \cup \mc R^2_g$, and are disjoint, $\mc R^1_g \cap \mc R^2_g = \emptyset$. These properties ensure that our standard error is asymptotically conditionally unbiased.

Second, we need to control the dependence between $\sum_{i,j \in I_g} w_{gi}(h) w_{gj}(h) Y^{\Delta_1}_{g i} Y^{\Delta_2}_{g j}$ across different clusters.
We achieve that by imposing that each cluster can be a  companion cluster for at most $R$ clusters for some fixed number $R$, i.e., for all $g \in [G]$, we require
\begin{equation}
\#\bigl\{\tilde g \in [G] : g \in \mc R^1_{\tilde g} \cup \mc R^2_{\tilde g}\bigr\} \leq R.\label{equ:max_R}	
\end{equation}
These general requirements on the choice of companion clusters can be satisfied by many different selection methods. We provide one concrete algorithm in Appendix~\ref{sec:selection_algorithms}.

\begin{remark}
	We note that $\wh{se}_{\sss CNN}^2(h)$ does not reduce to the nearest-neighbors standard error that is typically used in i.i.d.\ settings. The standard nearest-neighbors standard errors replaces the unknown conditional variances with the squares of distances to nearest neighbors and it involves a bias-correction factor. Since we effectively take the product of two different residuals, it turns out that this bias-correction factor is not needed. 
\end{remark}

\subsubsection{Consistency of the CNN Standard Error}
As is standard for nearest-neighbors-type estimators, our method relies on the assumption that the nearest neighbors selected in the construction of $\wh{se}_{\sss CNN}^2(h)$ are uniformly close to the respective units.

\begin{assumption}\label{ass:HL_closest_best_neighbor}	
	$
	\displaystyle D(h) \equiv \max_{g \in [G]}  \max_{ \substack{i \in I_g \\ w_{gi}(h) \neq 0}}   \max_{(\tilde g, j) \in \mathcal N^{\,1}_{gi} \cup \mathcal N_{gi}^{\,2} } |X_{gi} - X_{\tilde g j}| = o_P(1).
	$
\end{assumption}
This assumption is automatically satisfied whenever the nearest neighbors are selected within the estimation window, as is the case for the algorithm given in Appendix~\ref{sec:selection_algorithms}, and the bandwidth converges to zero. Since all matched units then lie within distance $h$ of each other, we have $D(h) = O(h)$ by construction.
In general, we expect $D(h)$ to converge to zero at a much faster rate. For illustration, consider a setting where the marginal density of $X_{gi}$ is bounded and bounded away from zero in a neighborhood of the cutoff. If the realizations of the running variable are constant within each cluster and $Gh \to \infty$, then we expect that $D(h) = O_P(\log(Gh)/G)$. As a different example, consider a setting where the joint density of $\mc X_g$ is continuous for all $g \in [G]$ and $n_{\min}h\equiv\min_{g \in [G]} n_gh \to \infty$, then we expect that $D(h) = O_P(\log(n_{min}h)/n_{min})$.

Our second main result states that $\wh{se}_{\sss CNN}^2(h)$ is consistent for $se^2(h)$.

\begin{theorem}\label{theorem::consistency_of_se}
	Suppose that Assumptions~\ref{ass:HL_cluster_size} and~\ref{ass:HL_closest_best_neighbor} hold and Assumption~\ref{ass:boundedmoment} holds with $\delta=4$. Then
	\[
	\frac{\wh{se}_{\sss CNN}^2(h)}{se^2(h)} = 1 + o_{P,\mc F}(1),
	\]
	where the term $o_{P,\mc F}(1)$ converges to zero uniformly over any class of DGPs where $\mu$ is L-Lipschitz continuous away from the cutoff for some constant $L$.
\end{theorem}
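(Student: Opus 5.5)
Substituting $Y_{gi}=\mu(X_{gi})+\varepsilon_{gi}$ gives $Y^{\Delta_d}_{gi} = b^d_{gi} + \varepsilon^{\Delta_d}_{gi}$. Here $b^d_{gi} = \mu(X_{gi}) - |\mc N^d_{gi}|^{-1}\sum_{(g',i')\in\mc N^d_{gi}}\mu(X_{g'i'})$ is a deterministic (given $\mc X_n$) bias term, and $\varepsilon^{\Delta_d}_{gi} = \varepsilon_{gi} - |\mc N^d_{gi}|^{-1}\sum_{(g',i')\in \mc N^d_{gi}}\varepsilon_{g'i'}$. Neighbors lie on the same side of the cutoff as $X_{gi}$, and $\mu$ is $L$-Lipschitz there, so $|b^d_{gi}|\le L\,D(h)$ uniformly over the class of DGPs. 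By Assumption~\ref{ass:HL_closest_best_neighbor} this is $o_P(1)$ uniformly. We write $\wh{se}^2_{\sss CNN}(h)=\sum_g T_g$ with $T_g=\sum_{i,j\in I_g}w_{gi}w_{gj}Y^{\Delta_1}_{gi}Y^{\Delta_2}_{gj}$. Expanding $T_g$ gives four pieces: (a) $\sum_g\sum_{i,j}w_{gi}w_{gj}\varepsilon^{\Delta_1}_{gi}\varepsilon^{\Delta_2}_{gj}$, (b),(c) two cross terms linear in $\varepsilon$ with a bias factor, and (d) $\sum_g\sum_{i,j}w_{gi}w_{gj}b^1_{gi}b^2_{gj}$. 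Term (d) is bounded by $L^2D(h)^2\sum_g\sum_{i,j}|w_{gi}w_{gj}|$. By Assumption~\ref{ass:HL_cluster_size}(ii) this is $o_P(se^2(h))$ uniformly.

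\textbf{Centering of (a).} Since $g\notin\mc R^1_g\cup\mc R^2_g$ and $\mc R^1_g\cap\mc R^2_g=\emptyset$, the neighbor errors in $\varepsilon^{\Delta_1}_{gi}$ and $\varepsilon^{\Delta_2}_{gj}$ come from disjoint sets of clusters, neither of which is $g$. Cross-cluster independence with conditional mean zero therefore gives $\E[\varepsilon^{\Delta_1}_{gi}\varepsilon^{\Delta_2}_{gj}\mid\mc X_n]=\sigma_{g,ij}$ exactly. Hence $\E[(a)\mid\mc X_n]=se^2(h)$. This is why no correction factor is needed.

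\textbf{Variance of (a).} Write (a)$-se^2(h)=\sum_g (A_g-\E[A_g\mid\mc X_n])$. Each $A_g$ is a function of the errors in cluster $g$ and in its companion clusters. By \eqref{equ:max_R}, each cluster appears as a companion for at most $R$ others. So $A_g$ and $A_{g'}$ are dependent only if $g'\in\{g\}\cup\mc R^1_g\cup\mc R^2_g$, or $g$ lies in the companion set of $g'$, or they share a companion cluster. To bound the size of this dependency neighborhood we also need $|\mc R^1_g\cup\mc R^2_g|$ to be bounded, which holds for the algorithm in Appendix~\ref{sec:selection_algorithms}; we would state this explicitly. With that, the number of dependent pairs involving any $g$ is at most some constant $C(R)$. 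Hence
\[
\Var\big((a)\mid\mc X_n\big)\le C(R)\sum_g \E[A_g^2\mid\mc X_n].
\]
Using $\delta=4$ moments and Cauchy–Schwarz, together with the fact that the neighbor averages have bounded fourth moments, we get $\E[A_g^2\mid\mc X_n]\le C\big(\sum_{i,j\in I_g}|w_{gi}w_{gj}|\big)^2$. Therefore
\[
\Var\big((a)\mid\mc X_n\big)/se^4(h)\le C\,\max_g \frac{\sum_{i,j}|w_{gi}w_{gj}|}{se^2(h)}\cdot\sum_g\frac{\sum_{i,j}|w_{gi}w_{gj}|}{se^2(h)}=o_P(1)
\]
by Assumption~\ref{ass:HL_cluster_size}(i)–(ii). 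Chebyshev's inequality, applied conditionally and then unconditionally, gives (a)$/se^2(h)=1+o_P(1)$.

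\textbf{Cross terms and the main obstacle.} The cross terms (b),(c) have conditional mean zero. Their conditional variances are bounded by the same dependency-graph argument, giving
\[
C\,L^2D(h)^2\sum_g\Big(\sum_{i,j}|w_{gi}w_{gj}|\Big)^2\le o_P(1)\cdot se^4(h).
\]
Alternatively, Cauchy–Schwarz gives $|(b)|\le\sqrt{(a')(d')}$ with analogous quadratic forms. Uniformity over $\mathcal F$ holds because $\mu$ enters only through $b^d_{gi}$, which is controlled by $L D(h)$, while the error moments do not depend on $\mu$. The delicate step is the variance of (a). The weights $w_{gi}$ enter $A_g$ only through cluster $g$, but the neighbor errors of different units can reuse the same companion units many times. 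So the fourth-moment bound on $A_g$ must be done carefully, keeping the factor $\sum_{i,j}|w_{gi}w_{gj}|$ rather than something larger. We would handle this via Minkowski's inequality in $L^2$, bounding $\|\sum_i w_{gi}\varepsilon^{\Delta_1}_{gi}\|_4\le\sum_i|w_{gi}|\,\|\varepsilon^{\Delta_1}_{gi}\|_4$.
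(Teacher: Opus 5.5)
\textbf{Gap: the variance of the noise--noise term needs an assumption the theorem does not make.} Your skeleton matches the paper's:
\begin{itemize}
\item the split $Y^{\Delta_d}_{gi}=\mu^{\Delta_d}_{gi}+\varepsilon^{\Delta_d}_{gi}$;
\item exact centering $\E[\varepsilon^{\Delta_1}_{gi}\varepsilon^{\Delta_2}_{gj}\mid\mc X_n]=\sigma_{g,ij}$, from $g\notin\mc R^1_g\cup\mc R^2_g$ and $\mc R^1_g\cap\mc R^2_g=\emptyset$;
\item the bias--bias term bounded by $L^2D(h)^2\sum_g\sum_{i,j}|w_{gi}w_{gj}|$.
\end{itemize}
The problem is the variance of term (a). Your cluster-level dependency graph has bounded degree only if $\sup_g|\mc R^1_g\cup\mc R^2_g|<\infty$, and you add this as an extra hypothesis. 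The general construction in Section~\ref{subsec:SE_descitpion} imposes only the two disjointness requirements and \eqref{equ:max_R}. Those allow, for example, a large cluster whose many in-window units are each matched to a different cluster. Then $|\mc R^1_g|$ diverges while every cluster is still used as a companion at most $R$ times. Bounded companion sets do hold for Algorithm~\ref{alg1}, but as written you prove a weaker statement than the theorem. The coarse step also has no easy cluster-level substitute: Cauchy--Schwarz on the own--neighbor product gives only $O_P(se^2(h))$, so the mean-zero cancellations are essential.

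\textbf{How the paper avoids it.} It never counts the companions of $g$; it works with individual index quadruples.
\begin{enumerate}
\item Split (a)$-se^2(h)$ into own--own, own--neighbor, neighbor--own and neighbor--neighbor parts $A_1,\dots,A_4$, and expand each conditional second moment.
\item By cross-cluster independence, a fourth-moment term survives only under specific cluster matchings. In $A_4$ this is $g_1=\tilde g_1$ and $g_2=\tilde g_2$ (or the swapped pairing), where $(g_1,i_1)\in\mc N^1_{gi}$, $(g_2,j_2)\in\mc N^2_{gj}$, $(\tilde g_1,k_1)\in\mc N^1_{\tilde gk}$ and $(\tilde g_2,l_2)\in\mc N^2_{\tilde gl}$. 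In $A_2$ it is $g_1=\tilde g$ and $\tilde g_1=g$ (or $g=\tilde g$ and $g_1=\tilde g_1$).
\item Once a neighbor cluster $g_1$ is fixed, at most $R$ clusters $\tilde g$ satisfy $g_1\in\mc R^1_{\tilde g}$. The averaging weights $1/|\mc N^d_{gi}|$ sum to one.
\end{enumerate}
This yields bounds such as
\[
\Var(A_4\mid\mc X_n)\le C R\,\max_{\tilde g}\sum_{k,l\in I_{\tilde g}}|w_{\tilde gk}w_{\tilde gl}|\sum_{g}\sum_{i,j\in I_g}|w_{gi}w_{gj}|,
\]
which is $o_P(se^4(h))$ by Assumption~\ref{ass:HL_cluster_size}. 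Replacing your dependency-graph step with this unit-level argument closes the gap.

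\textbf{The rest of your proposal is fine.} The Minkowski bound $\E[A_g^2\mid\mc X_n]\le C\bigl(\sum_{i,j}|w_{gi}w_{gj}|\bigr)^2$ is routine rather than the delicate step. Your Cauchy--Schwarz treatment of the cross terms also works:
\[
|(b)|\le\Bigl\{\sum_g\Bigl(\sum_iw_{gi}\varepsilon^{\Delta_1}_{gi}\Bigr)^2\Bigr\}^{1/2}\Bigl\{\sum_g\Bigl(\sum_jw_{gj}b^2_{gj}\Bigr)^2\Bigr\}^{1/2}=O_P(se(h))\,o_P(se(h)).
\]
It needs no dependence counting, holds uniformly over the Lipschitz class, and is simpler than the paper's variance computations for $A_5$--$A_8$.
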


\begin{remark}
	In contrast to the standard nearest neighbor variance estimator, that is typically applied in settings of independent samples, we do not need to impose continuity assumptions of the conditional variance function.
\end{remark}

\begin{remark}
	It is common practice in RD designs, to impose that $\mu$ has a bounded second derivative. If one wants to impose this assumption instead of assuming that $\mu$ is L-Lipschitz continuous away from the cutoff for some constant $L$, we can easily modify the standard error following the suggestions of  \citet{noack2024bias}.  
\end{remark}

\subsection{Clustered Regression Residual-Based Standard Error}\label{sec::residual_based_c}
In this subsection, we show consistency of the clustered regression residual-based (CRR) standard error for the local linear RD estimator.
For $\star \in \{+,-\}$, define $b_0^\star = \mu(0^\star)$ and $b_1^\star = \mu'(0^\star)$, and let $\hat b_0^\star$ and $\hat b_1^\star$ denote the intercept and slope coefficient on the respective side of the cutoff in the local linear RD regression in equation~\eqref{eq:estimator}.
The CRR standard error is defined as
	\[
	\wh{se}_{\sss CRR}^2 = \sum_{g \in [G]} \sum_{i,j \in I_g} w_{gi}(h)w_{gj}(h) \hat\varepsilon_{gi} \hat\varepsilon_{gj},\quad \hat\varepsilon_{gi} = Y_{gi} - \wh\mu(X_{gi}),
	\]
	where $\wh\mu(x) =(\hat b_0^- + \hat b_1^- x) \1{x < 0}  + (\hat b_0^+ + \hat b_1^+ x) \1{0 \leq x} $. 
	
	\begin{theorem}\label{theorem:RR_SE}
		Suppose that Assumption~\ref{ass:HL_cluster_size} holds and  Assumption~\ref{ass:boundedmoment} holds with $\delta=4$. Further, assume that $\mu \in \mc F_H(M)$, $\hat b_0^\star - b_0^\star = o_P(1)$,
		$h(\hat b_1^\star - b_1^\star) = o_P(1)$ for $\star \in \{+,-\}$, the weights $w_{gi}(h)$ are zero whenever $|X_{gi}|>h$, and $h \to 0$.
		Then
		\[
		\frac{\wh{se}_{\sss CRR}^2(h)}{se^2(h)} = 1 + o_P(1).			
		\]
	\end{theorem}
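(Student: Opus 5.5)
The approach is to decompose the residuals into an infeasible oracle part and an estimation-error part. Since $\hat\varepsilon_{gi} = \varepsilon_{gi} - r_{gi}$ with $r_{gi} \equiv \wh\mu(X_{gi}) - \mu(X_{gi})$, we can write
\[
\wh{se}_{\sss CRR}^2 = \underbrace{\sum_{g}\Big(\sum_{i\in I_g} w_{gi}\varepsilon_{gi}\Big)^2}_{A} - 2\underbrace{\sum_g \Big(\sum_{i} w_{gi}\varepsilon_{gi}\Big)\Big(\sum_j w_{gj} r_{gj}\Big)}_{B} + \underbrace{\sum_g \Big(\sum_i w_{gi} r_{gi}\Big)^2}_{C}.
\]
The plan is to show $A/se^2(h) \to 1$ in probability and $C/se^2(h) = o_P(1)$. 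The cross term then follows from Cauchy--Schwarz, since $|B| \le (AC)^{1/2}$ gives $B/se^2 = o_P(1)$.

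\textbf{Step 1: the oracle term $A$.} Conditional on $\mc X_n$, write $A = \sum_g S_g^2$ with $S_g = \sum_i w_{gi}\varepsilon_{gi}$. The $S_g$ are independent across clusters and $\E[A|\mc X_n] = se^2(h)$ by \eqref{eq:se}. I would use a truncation argument, or directly a conditional variance bound using the fourth moments granted by Assumption~\ref{ass:boundedmoment} with $\delta = 4$. Let $a_g \equiv \sum_{i,j}|w_{gi}w_{gj}|$. Minkowski's inequality gives $\E[S_g^4|\mc X_n] \le C a_g^2$, because $\|S_g\|_4 \le \sum_i |w_{gi}| \, \|\varepsilon_{gi}\|_4$ and $(\sum_i |w_{gi}|)^2 = a_g$. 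Hence
\[
\Var(A|\mc X_n)/se^4 \le C \sum_g a_g^2/se^4 \le C \big(\max_g a_g/se^2\big)\Big(\sum_g a_g/se^2\Big) = o_P(1)\cdot O_P(1)
\]
by Assumption~\ref{ass:HL_cluster_size}(i)--(ii). A conditional Chebyshev argument then yields $A/se^2 = 1 + o_P(1)$.

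\textbf{Step 2: the estimation-error term $C$.} On the support of the weights, $|X_{gi}| \le h$. Using a Taylor expansion of $\mu \in \mc F_H(M)$ on each side,
\[
|r_{gi}| \le |\hat b_0^\star - b_0^\star| + h|\hat b_1^\star - b_1^\star| + \tfrac{M}{2}h^2 \equiv \rho_n = o_P(1),
\]
uniformly over all $(g,i)$ with $w_{gi} \neq 0$. Therefore $C \le \rho_n^2 \sum_g (\sum_i |w_{gi}|)^2 = \rho_n^2 \sum_g a_g$, and so $C/se^2 \le \rho_n^2 \cdot O_P(1) = o_P(1)$, again by Assumption~\ref{ass:HL_cluster_size}(ii).

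\textbf{Main obstacle.} The delicate point is Step 1. We need the conditional-variance bound on $\sum_g S_g^2$ to be expressed purely through the $a_g$, so that the high-level Assumption~\ref{ass:HL_cluster_size} suffices without any lower bound on individual $\sigma_{g,ij}$. A secondary subtlety is that $r_{gi}$ depends on all the $\varepsilon$'s, which would break independence. The plan avoids this entirely by bounding $C$ deterministically through $\rho_n$ and handling $B$ via Cauchy--Schwarz, so no independence is needed for those terms. One final technical check is passing from conditional to unconditional $o_P$ statements, which is done by the usual dominated-convergence argument on conditional probabilities.
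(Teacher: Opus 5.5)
Your proof is correct. Its skeleton matches the paper's. The paper also compares $\wh{se}^2_{\sss CRR}(h)$ with the infeasible version built from the true residuals. Your Step 1 is essentially the paper's second step: conditional mean $se^2(h)$, and conditional variance bounded by $C\max_g a_g\sum_g a_g$.

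The genuine difference is how the cross term $B$ is handled. The paper expands $\hat\varepsilon_{gi}\hat\varepsilon_{gj}-\varepsilon_{gi}\varepsilon_{gj}$ pairwise. It writes $\wh\mu-\mu$ as $(\hat b_0^\star-b_0^\star)+(\hat b_1^\star-b_1^\star)X_{gi}$ plus a Taylor remainder, and pulls the scalar coefficient errors outside the sums. It then shows the remaining sums $T^\star_l$ (and the remainder sum) are conditionally mean zero given $\mc X_n$, with variance $O_P(\max_g a_g\sum_g a_g)$. This is how the paper sidesteps the dependence of $\wh\mu$ on all the errors.

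You instead group by cluster, $\wh{se}^2_{\sss CRR}=\sum_g(S_g-R_g)^2$, and dispose of the cross term deterministically via $|B|\le\sqrt{AC}$. So you need only the uniform bound $\rho_n$ on $|\wh\mu-\mu|$ over the estimation window, which the paper needs anyway for its quadratic term. Your route is shorter and needs no second variance computation. It also does not use that $\wh\mu$ is linear in finitely many estimated coefficients, so it applies verbatim to any fit that is uniformly consistent on the support of the weights. The paper's route gives a marginally sharper bound on the cross term ($o_P(\rho_n)$ rather than $O_P(\rho_n)$), but that is irrelevant for consistency.
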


Theorem~\ref{theorem:RR_SE} imposes the same high-level assumption on the weights $w_{gi}(h)$ as we used to show consistency of the CNN standard error. The consistency requirements for $\hat b_0^\star$ and $\hat b_1^\star$ are very mild and they are satisfied in all the considered asymptotic frameworks given our smoothness assumption. The main difference in assumptions relative to the result for the CNN standard error is that Theorem~\ref{theorem:RR_SE} requires the bandwidth to converge to zero, while the assumptions of Theorem~\ref{theorem::consistency_of_se} may hold even for an asymptotically fixed bandwidth.

\section{Numerical Illustrations}\label{sec::numerical_illustrations}
In this section, we apply our clustered nearest-neighbors (CNN) standard error in four empirical applications, 
and we compare it to four alternative standard errors.
The first two, the classical nearest-neighbors (NN) and Eicker-Huber-White (EHW) standard errors, do not account for clustering. The third approach is the naive clustered nearest-neighbors (Naive CNN) approach described in Section~\ref{subsec:failure_naive_CNN}, and the fourth is the clustered regression
residual-based (CRR) standard error described in Section~\ref{sec::residual_based_c}.

In order to connect the asymptotic frameworks introduced in Section~\ref{sec:AFs} to observable features of the data, we first propose a simple diagnostic rule of thumb for assessing whether clusters are sufficiently small and reasonably balanced for Asymptotic Framework~I or~II to provide accurate approximations to the underlying data-generating process.
We then revisit four recent RD applications, each exemplifying one of the asymptotic frameworks.\footnote{For each application, we use the data provided in the respective replication package and consider one of the main RD specifications reported in the paper. For simplicity, we ignored any additional covariates that were included to improve estimation precision. We fix the bandwidth at the value used by the authors of the original study. To simplify the analysis, when the original paper used two different bandwidths on each side of the cutoff, we chose the bigger one.} 

\subsection{Rule of Thumb for the High-Level Conditions}\label{subsec::rot}
In this section, we provide a practical rule of thumb to assess whether our high-level Assumption~\ref{ass:HL_cluster_size} on the weights can be plausibly satisfied in a given empirical application. When our conditions hold, the data-generating process may be well approximated by Asymptotic Frameworks I or II under mild assumptions on the covariance of the residuals. If, in turn, these conditions are violated, one may need to justify stronger assumptions on the dependence structure of the residuals to ensure that the standard error converges at a suitable rate, as discussed in our Asymptotic Frameworks III and IV.

To describe our proposed criterion, for $g \in [G]$, define
\[
w_{g,\textnormal{ratio}}(h) = \frac{\sum_{i,j \in I_g} \lvert w_{gi}(h) w_{gj}(h) \rvert}
{\sum_{g \in [G]} \sum_{i \in I_g} w_{gi}(h)^2},
\]
and let
\[
w_{\max}(h)
\equiv
\max_{g \in [G]} w_{g,\textnormal{ratio}}(h),
\qquad
w_{\textnormal{sum}}(h)
\equiv
\sum_{g \in [G]} w_{g,\textnormal{ratio}}(h).
\]
If the conditional covariance matrix of the residuals has eigenvalues bounded away from zero, the conditions
$w_{\max}(h) = o_{P}(1)$ and $w_{\textnormal{sum}}(h) = O_{P}(1)$
are sufficient to ensure that our high-level Assumption~\ref{ass:HL_cluster_size} is satisfied. To operationalize these asymptotic conditions, in finite samples, one needs to choose some threshold values $\eta_{\text{max}}$ and  $\eta_{\text{sum}}$, and check whether $w_{\max}(h) \le \eta_{\max}$ and $w_{\textnormal{sum}}(h) \le \eta_{\textnormal{sum}}$.
We consider $\eta_{\max} = 0.1$ and $\eta_{\textnormal{sum}} = 10$ to be reasonable benchmark values in practice.\footnote{To give a heuristic interpretation of these threshold values, 
we note that, asymptotically, $w_{\max}(h) \approx  \max_{g \in [G]} n_{g,h}^2 / n_h$ and $w_{\text{sum}}(h) \approx \sum_{g \in [G]} n_{g,h}^2/n_h$,	where \( n_{g,h} \) and $n_h$ denote the number of units from cluster \( g \) and the total number of units within the estimation window, respectively. Now suppose that within the estimation window, there are 100 clusters with 10 observations each, a setting in which asymptotic normality can plausibly be a good approximation. Then $w_{\max}(h) \approx 0.1$ and $w_{\text{sum}}(h) \approx 10$.}

\subsection{Empirical Applications}\label{subsec:empirical_applicaitons}
In this section, we revisit four empirical applications that motivated the asymptotic frameworks introduced in Section~\ref{sec:AFs}.
Figure~\ref{figure::empirical_examples_running_variable_cluster_sizes} illustrates the number of clusters, the distribution of cluster sizes, and the distribution of the running variable within a cluster. Each point represents an individual observation. For discrete outcome variables, values are jittered to improve visual clarity and mitigate overplotting. In selected clusters, all observations are displayed in a common color and use the same marker symbol to emphasize cluster membership. The dashed lines mark the bandwidths used.

	\begin{figure}

	\includegraphics[width=0.96\textwidth]{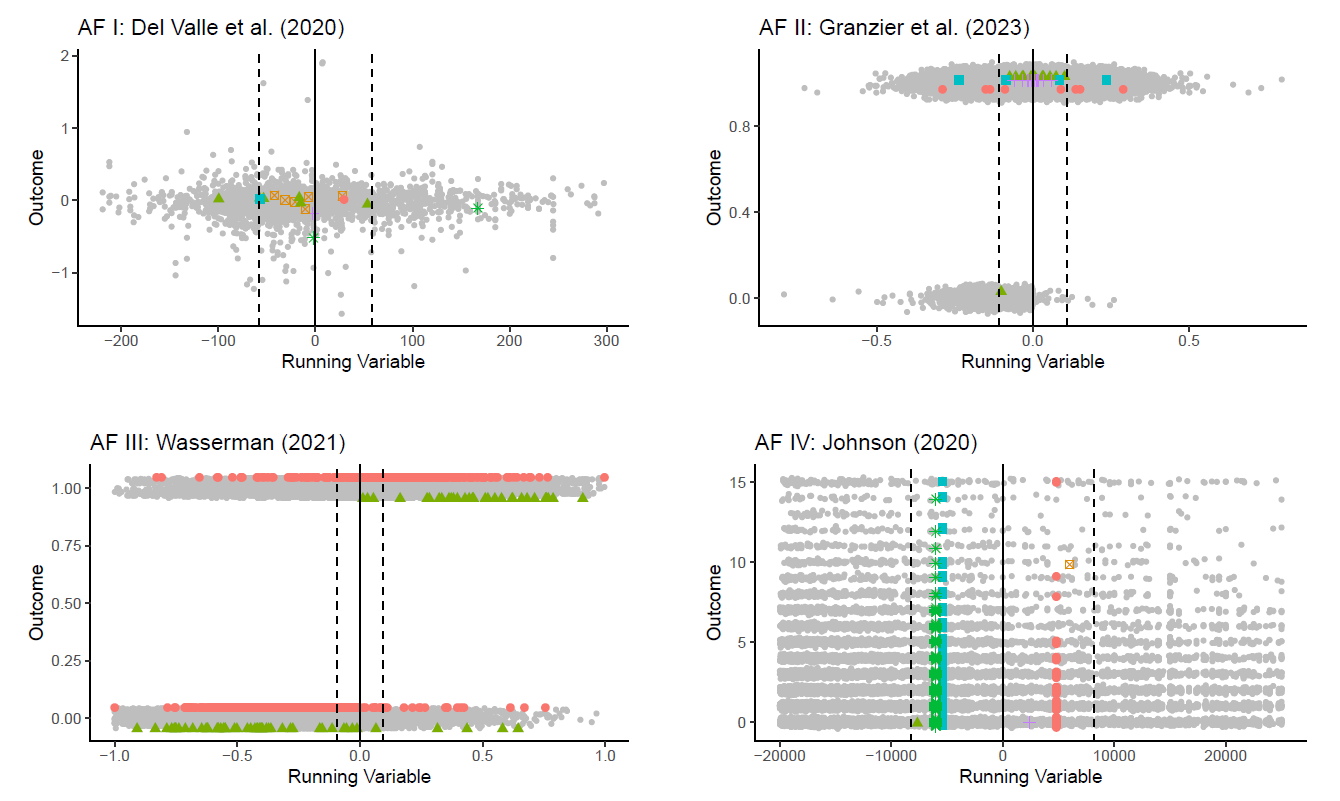}
				
	\centering		
	\caption{Visualization of cluster structures in the empirical applications.}
	\label{figure::empirical_examples_running_variable_cluster_sizes}
	\vspace{0.3cm}
	\begin{minipage}{0.95\textwidth}
		\footnotesize
		\textit{Notes:} Each point represents one observation. For discrete outcome variables, values are jittered to avoid overplotting. For selected clusters, all units are displayed in the same color and use the same marker symbol. The running variables, outcomes, and cutoffs are described in Section~\ref{subsec:empirical_applicaitons}.
	\end{minipage}
	\end{figure}


\subsubsection{Motivating Example for Asymptotic Framework~I}\label{subsubsec:valle}
\citet{delvalle2020rules} study the impact of Mexico's indexed disaster fund (Fonden) on post-disaster economic recovery.
The outcomes are constructed as changes in log night lights in the year following a disaster, measured using satellite-based night lights at the municipality level.
They leverage a fuzzy regression discontinuity design where the eligibility for disaster transfers depended on whether the realized rainfall exceeded a pre-specified cutoff.
The data contains information on municipal requests for Fonden funding in the period between 2004 and 2012.
The authors cluster the standard errors at the municipality level. The estimation window contains around 1000 municipalities with an average of 1.5 requests per municipality.

\subsubsection{Motivating Example for Asymptotic Framework~II}\label{subsubsec:granzier}
\cite{granzier2023coordination} study French two-round elections to evaluate how candidates' first-round ranking affects their second-round results. Specifically, they measure the impact of barely achieving a higher rank on remaining in the race or winning, and the running variable is the first-round vote margin between adjacent candidates. For concreteness, we focus on the effect of ranking 1st vs 2nd in the first round on the probability of running in the second round.
The data contain information on electoral races from several decades of local and parliamentary elections. The standard errors are clustered at the district level.
There are about 2300 clusters in the estimation window, with an average of 3 observations per cluster. By construction, the realizations of the running variable are symmetric around the cutoff: for every observation with running variable value $X_{gi}$, there is a corresponding observation with running variable value $-X_{gi}$. Such degenerate distributions of the running variable are allowed under our Asymptotic Framework~II.

\subsubsection{Motivating Example for Asymptotic Framework~III}\label{subsubsec:wasserman}
\citet{wasserman2021up} studies the causal effect of an electoral defeat on subsequent political participation using a close-election RD design.
The analysis focuses on first-time candidates for US state legislative offices. 
The running variable is the candidate's margin of victory, and the primary outcome of interest is whether the candidate
runs again for any state legislative office within four years of the initial run. The data covers state legislative elections over several decades across the United States, and the standard errors are clustered at the state level. 
This yields clusters with a large number of observations spread across the support of the running variable, with approximately 250 observations per state on average within a local neighborhood of the cutoff.

\subsubsection{Motivating Example for Asymptotic Framework~IV}\label{subsubsec:johnson} 
\citet{johnson2020regulation} studies the deterrence effects of a policy under which the Occupational Safety and Health Administration (OSHA) issues press releases about violations of workplace safety and health regulations that exceed a penalty threshold.
In this RD design, the running variable is the penalty amount assigned at inspection, with a discontinuity at the press-release threshold. The primary outcome is the count of violations recorded in subsequent inspections. In the dataset, the facilities are organized into ``peer groups''--groups of facilities in the same sector located within a 5 km radius of a facility where a penalty was levied---such that all facilities within a peer group share the same penalty value, while the outcome is measured at the facility level. The standard errors are clustered at the peer group level.
There are 707 peer groups of facilities, each containing roughly 16 facilities on average in a local neighborhood around the cutoff.

\begin{table}
	\centering
	\resizebox*{\textwidth}{!}{	\begin{threeparttable}
			\caption{Empirical Results}
			\label{tabelle:empirical_results}
			
			\sisetup{
				table-number-alignment = center,
				round-mode = places,
				round-precision = 3,
				round-pad = false
			}
			
			\begin{tabular}{
					l
					S[table-format=1.2]
					S[table-format=1.2] S[table-format=1.2]
					S[table-format=1.2] S[table-format=1.2] S[table-format=1.2]
					S[table-format=5]
					S[table-format=3]
					S[table-format=1.2]
					S[table-format=1.2]
				}
				\toprule
				& {Est}
				& \multicolumn{2}{c}{IID SE}
				& \multicolumn{3}{c}{Clustered SE}
				& \multicolumn{4}{c}{Cluster Sizes} \\
				\cmidrule(lr){3-4}
				\cmidrule(lr){5-7}
				\cmidrule(lr){8-11}
				& {}
				& {EHW} & {NN}
				& {Naive CNN} & {CRR} & {CNN}
				& {$n_h$}
				& {$G_h$}& {$w_{\max}$} & {$w_{\text{sum}}$ }\\
				\midrule
				AF I:  \citeauthor{delvalle2020rules} & 0.05915	& 0.02091 & 0.02319	& 0.02195 & 0.02296 & 0.02170 & 1563 & 1014	& 0.03 & 1.72 \\
				AF II:  \citeauthor{granzier2023coordination} & 0.14607 & 0.04149 & 0.03903 & 0.04030 & 0.03607 & 0.03614 & 2338 & 903 & 0.06 & 2.89 \\
				AF III: \citeauthor{wasserman2021up} & 0.50719 & 0.01528 & 0.01516 & 0.02228 & 0.02519 & 0.02415 & 12140 & 50 & 68.46 & 239.22 \\
				AF IV:  \citeauthor{johnson2020regulation} & -0.24406 & 0.11958 & 0.12292 & 0.17643 & 0.24365 & 0.24067 & 11346 & 707 & 11.04 & 55.77  \\
				\bottomrule
			\end{tabular}
			
			\begin{tablenotes}[flushleft]
				\footnotesize
				\item Notes: Column \textit{Estimate} reports the RD estimate; it can differ slightly from the one reported in the original paper since, for example, we do not include additional covariates used to improve precision in their regressions.
				\textit{EHW} and \textit{NN} standard errors do not account for clustering.
				\textit{Naive CNN} selects the nearest neighbors as in the i.i.d.\ case as implemented in the software \texttt{rdrobust}.
				\textit{CRR} is the clustered regression-residual based approach, and \textit{CNN} is our proposed standard error.
				$n_h$ and $G_h$ denote the number of observations and the number of clusters within the estimation window; $w_{\max}$ and $w_{\text{sum}}$ are the rule-of-thumb measures described in Section~\ref{subsec::rot}. The data sets are described in Section~\ref{subsec:empirical_applicaitons}.
			\end{tablenotes}
	\end{threeparttable}}
\end{table}

\subsubsection{Empirical Results}
The results for all four empirical applications are reported in Table~\ref{tabelle:empirical_results}.
Before discussing the values of the different standard errors, we first note that, according to the rule-of-thumb diagnostics reported in the last two columns of the table, the studies of \cite{delvalle2020rules} and \cite{granzier2023coordination} represent settings with relatively small and fairly balanced cluster sizes. These settings appear to fit into our Asymptotic Frameworks~I and~II well, suggesting that asymptotic normality is likely a reasonable approximation to the finite-sample distribution of the RD estimator. By contrast, in the applications studied by \cite{wasserman2021up} and \cite{johnson2020regulation}, cluster sizes are either large or markedly unbalanced. In such cases, justifying asymptotic normality requires an additional assumption on the convergence rate of the conditional variance $se^2(h)$; Assumption~\ref{ass:limit_variance} provides one illustrative sufficient condition for it to hold.

The standard errors computed under the assumption of i.i.d.\ sampling are substantially smaller than their clustered counterparts in applications with large clusters, which suggest that they fail to account for relevant within-cluster dependence. Our proposed CNN standard error is close in magnitude to the conventional CRR approach, while the Naive CNN standard error is markedly smaller in the fourth application, consistent with the discussion in Section~\ref{subsec:failure_naive_CNN}.


\section{Conclusion}
	This paper proposes a general framework for sharp RD designs with clustered data. 
	Under general high-level conditions, we establish the asymptotic normality of the local linear RD estimator, and we illustrate the high-level conditions in empirically motivated asymptotic frameworks.
	Furthermore, we develop a novel nearest-neighbors-type standard error tailored to clustered samples. Our approach is easily extendable: with minor modifications, it readily accommodates fuzzy RD and kink designs as well as settings with covariate adjustments.

	\newpage 
	\appendix
	
	\begin{center}
	\Large{Appendix}
	\end{center}

\section{Companion Clusters Selection Algorithm}\label{sec:selection_algorithms}
There are several approaches for selecting the sets $\mc R^1_g$ and $\mc R^2_g$ that satisfy our high-level conditions described in Section~\ref{subsec:SE_descitpion}. The effectiveness of a given algorithm, however, depends on the specific empirical context. To provide a concrete illustration, we propose one specific algorithm which fulfills these high-level assumptions in a broad class of empirically relevant settings.

For $\star \in \{+,-\}$, let $\mathcal X^\star_{g,h}$ denote the set of distinct realizations of the running variable in cluster $g$ within the estimation window on the respective side of the cutoff, and define $l^\star_{g,h}$ as the number of elements in $\mathcal X^\star_{g,h}$. Our proposed selection procedure is given in Algorithm~\ref{alg1}.

\begin{algorithm}[h]
	\caption{Selection of Companion Clusters $\mc R^1_g$ and $\mc R^2_g$}
	\vspace{2mm}\textbf{Inputs:} $(\mc{X}_g)_{g\in[G]}$, $R$, and $J$.\\[2mm]
	\noindent \textbf{Support Dimension Reduction:}
	Let $L \equiv \lfloor R/(4J) \rfloor$. For all $g \in [G]$ and $\star \in \{-,+\}$, define $\mc S^\star_g$ as follows:
	If $l^\star_{g,h} \leq L$ then $\mc S^\star_g = \mc X_{g,h}^\star$. Otherwise, let $\mc S^\star_g$ be the empirical quantiles of $\mc X^\star_{g,h}$ evaluated at the probabilities
	$\left\{0, \frac{1}{L-1}, \frac{2}{L-1}, \dots, 1 \right\}.$\footnotemark \\
	
	\noindent \textbf{Choice of Companion Clusters:} For $g \in [G]$:
	\begin{enumerate}[label=\textbf{Step \arabic*:}, leftmargin=*]
		\item For $\star \in \{-,+\}$ and each $x \in \mc S^\star_g$, find the $J$ closest values in $\bigcup_{\tilde g \in [G] \setminus \{g\}} \mc S^\star_{\tilde g}$ and let $r^1_g(x)$ denote the set of clusters these values belong to.
		Define:
		\[
		\mc R^1_{g} =  \bigcup_{x \in \mc S_g^- \cup \mc S_g^+ } r^1_g(x).
		\]
		
		\item For $\star \in \{-,+\}$ and each $x \in \mc S^\star_g$, find the $J$ closest values in $\bigcup_{\tilde g \in [G] \setminus \{g\} \setminus \mc R_g^1} \mc S^\star_{\tilde g}$, and let $r^2_g(x)$ denote the set of clusters these values belong to.
		Define:
		\[
		\mc R^2_{g} =  \bigcup_{x \in \mc S_g^- \cup \mc S_g^+ } r^2_g(x).
		\]
	\end{enumerate}
	\label{alg1}
\end{algorithm}
\footnotetext{\small If the running variable has mass points, we jitter the elements of $\mc S^\star_g$ by adding small independent noise.}

The above algorithm can be applied if there are at least $2JL$ clusters within the bandwidth on each side of the cutoff. This condition is in line with all the asymptotic frameworks we consider in Section~\ref{sec:AFs}, where the number of clusters in the local neighborhood of the cutoff diverges to infinity.

\begin{lemma}\label{lemma:small_lgh}
	Suppose that  Algorithm~\ref{alg1} is used. 
	Then for any $g \in [G]$, $$\#\Big\{\tilde g: g \in \mc R^1_{\tilde g} \cup \mc R_{\tilde g}^2 \Big\} \leq R.$$
\end{lemma}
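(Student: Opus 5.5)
The plan is a purely combinatorial counting argument. Fix a cluster $g$. By construction, $g \in \mc R^1_{\tilde g}$ (resp.\ $\mc R^2_{\tilde g}$) only if some value $y \in \mc S^-_g \cup \mc S^+_g$ is among the $J$ closest values to some query point $x \in \mc S^\star_{\tilde g}$ on the same side of the cutoff, with the pool being $\bigcup_{g' \neq \tilde g} \mc S^\star_{g'}$ (resp.\ that pool with the clusters in $\mc R^1_{\tilde g}$ also removed). I will therefore bound, for each fixed value $y \in \mc S^\star_g$ and each step $d \in \{1,2\}$, the number of distinct clusters $\tilde g$ whose query points select $y$. I will then sum over the at most $L$ values per side.

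\emph{One-dimensional in-degree bound.} Fix $y$ and a step $d$, and consider the clusters $\tilde g \neq g$ having a query point $x > y$ that selects $y$. (Ties are excluded because mass points are jittered, as in the footnote to Algorithm~\ref{alg1}.) For each such cluster, pick its selecting query point closest to $y$. Order these representatives as $z_1 < z_2 < \dots < z_m$; they belong to pairwise distinct clusters.

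For step~1, the pool used by $z_k$ contains $z_1,\dots,z_{k-1}$, since these lie in clusters other than that of $z_k$. Each of them satisfies $|z_k - z_l| < |z_k - y|$. Since $y$ must be among the $J$ closest pool values to $z_k$, at most $J-1$ pool values can be strictly closer, so $k - 1 \le J-1$, i.e.\ $m \le J$. The same argument applies to queries to the left of $y$. Hence in step~1 each value $y$ is selected by at most $2J$ clusters.

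In step~2 the pool of $z_k$ is smaller, because it excludes $\mc R^1$ of the cluster of $z_k$. Here I expect the main obstacle: the ordering argument needs the earlier representatives $z_l$ to lie in the pool of $z_k$. My first attempt will be to compare pools using the structure of the problem. The relevant values are the elements of the $\mc S$-sets, so the analogous chain argument can be run separately within each admissible pool. Alternatively, one can observe that if $z_l$ is excluded from the pool of $z_k$, then the cluster of $z_l$ lies in $\mc R^1$ of the cluster of $z_k$; this can be charged to the step-1 in-degree of $z_l$'s values, which is already bounded. Either route should yield at most $J$ selecting clusters per side, per value, in step~2 as well.

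\emph{Summation and the constant.} Each value in $\mc S^\star_g$ attracts only queries from side $\star$. Combining the two sides of each value, the two steps, and $|\mc S^\star_g| \le L$ on each side, the number of clusters $\tilde g$ with $g \in \mc R^1_{\tilde g} \cup \mc R^2_{\tilde g}$ is bounded by a fixed multiple of $JL$. The definition $L = \lfloor R/(4J) \rfloor$ is meant to make this multiple at most $R$. I will do this bookkeeping carefully, in particular checking whether a single cluster $\tilde g$ selecting $y$ in both steps, or from both sides, is double counted. This tight counting of the constant, together with the step-2 pool issue above, is the delicate part of the proof; the rest is immediate from the definitions in Algorithm~\ref{alg1}.
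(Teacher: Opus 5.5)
Your Step~1 argument is correct, and it is more careful than the paper's one-line proof. Taking one representative query point per selecting cluster and ordering them handles the fact that query points of the same cluster do not see each other. It gives at most $2J$ selecting clusters per value of $\mathcal S^-_g\cup\mathcal S^+_g$, hence at most $4LJ\le R$ clusters $\tilde g$ with $g\in\mathcal R^1_{\tilde g}$. That is exactly the paper's count, which implicitly treats every selection as a $J$-nearest-neighbor selection in the full pool.

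The gap is Step~2, and neither of your routes closes it. A point $z_l$, $l<k$, is missing from the pool of $z_k$ whenever its cluster lies in $\mathcal R^1_{\tilde g_k}$. That membership can be produced by \emph{other} query points of $\tilde g_k$, for instance on the other side of the cutoff. Charging to step-1 in-degrees bounds how often a cluster is excluded, not how many intermediate representatives a single $z_k$ may skip. The chain therefore only yields $k-1\le (J-1)+|\mathcal R^1_{\tilde g_k}|\le J-1+2LJ$, and the bound of $J$ per side per value fails in general, already for $J=L=1$. Moreover, even if it held, the two steps would add up to $8LJ>R$. There is no double counting to exploit, since the clusters selecting $g$ in Step~1 and in Step~2 can be disjoint.

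In fact the statement is false with $L=\lfloor R/(4J)\rfloor$. Take $J=1$ and $R=4$, so $L=1$, and let each cluster have at most one value per side. Fix $s>0$ with $2.1s<c$ and $c+2.1s<h$. To the right of the cutoff, place $g$ at $c$ and singleton clusters $A,B,C,D$ at $c-2.1s$, $c-s$, $c+s$, $c+2.1s$. To the left, place $g$ at $-c$ and clusters $E,F,H,I$ at $-c-2.1s$, $-c-s$, $-c+s$, $-c+2.1s$.
\begin{itemize}
\item $B,C,F,H$ select $g$ in Step~1.
\item $A$ (resp.\ $D,E,I$) selects $B$ (resp.\ $C,F,H$) in Step~1, and therefore selects $g$ in Step~2.
\end{itemize}
So eight clusters have $g$ as a companion, exceeding $R=4$.

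What your chain argument does prove is
$\#\{\tilde g: g\in\mathcal R^1_{\tilde g}\cup\mathcal R^2_{\tilde g}\}\le 2L\cdot 2J+2L\cdot 2J(2L+1)=8LJ(L+1)$. The lemma is therefore correct if $L$ is chosen so that $8JL(L+1)\le R$. This is still a fixed constant, which is all that Theorem~\ref{theorem::consistency_of_se} uses.
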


\begin{proof}
	Each value in $\mc S_g^- \cup \mc S_g^+$ can be the $J$-th or closer nearest neighbor for at most $2J$ support points from other clusters. Since $|\mc S_g^- \cup \mc S_g^+| \leq 2L$, cluster $g$ can be selected as a companion cluster at most $4LJ \leq R$ times.
\end{proof}

Lemma~\ref{lemma:small_lgh} guarantees that if Algorithm~\ref{alg1} is used to define the companion clusters, then each cluster in the sample will be ``used'' at most $R$ times for the prespecified value~$R$.
This shows that the general construction described in Subsection~\ref{subsec:SE_descitpion} is feasible.

\section{Proofs of Theorems~1--3}

\subsection{Proof of Theorem~\ref{th:normality}}	
Let $W_g = \sum_{i \in I_g} w_{gi}(h)(Y_{gi} - \E[Y_{gi}|\mc X_n])$. It holds that $\E[W_g|\mc X_n] = 0$ and
\[
se^2(h) = \sum_{g \in [G]} \Var(W_g|\mc X_n).
\]
Let $\delta>2$ be as in Assumption~\ref{ass:boundedmoment}.
We will verify Lyapunov's condition by showing that
\[
A_n \equiv \frac{1}{se(h)^{\delta}} \sum_{g \in [G]} \E[ |W_g|^{\delta} |\mc X_n ] \xrightarrow{n \to \infty} 0.  
\]

First, note that
\begin{align*}
	A_n & =  \frac{ \sum_{g \in [G]} \E[ | \sum_{i \in I_g} w_{gi}(h)(Y_{gi} - \E[Y_{gi}|\mc X_n])  |^{\delta} |\mc X_n ]}{ se(h)^{\delta} } \\
	& \leq  \frac{ \sum_{g \in [G]} \E[ | \sum_{i \in I_g} |w_{gi}(h)|^{1-1/\delta} \left( |w_{gi}(h)|^{1/\delta} \left| Y_{gi} - \E[Y_{gi}|\mc X_n] \right| \right)  |^{\delta} |\mc X_n ]}{ se(h)^{\delta} }
\end{align*}	
where the first equality uses the definition of $W_g$ and the inequality follows by the triangle inequality.
Next, by H{\"o}lder inequality with exponents $\delta$ and $\delta'=\delta/(\delta-1)$, such that $1/\delta + 1/\delta'=1$, we obtain that
\begin{align*}	
	A_n	& \leq  \frac{ \sum_{g \in [G]} \E\left[ \left| \left( \sum_{i \in I_g} \left(|w_{gi}(h)|^{1-1/\delta} \right) ^{\delta'} \right)^{1/\delta'} \left( \sum_{i \in I_g} |w_{gi}(h)| \left|Y_{gi} - \E[Y_{gi}|\mc X_n]\right|^\delta \right)^{1/\delta} \right|^{\delta} |\mc X_n \right]}{ se(h)^{\delta} } \\
	& = \frac{ \sum_{g \in [G]} \E\left[ \left| \left( \sum_{i \in I_g} |w_{gi}(h)| \right)^{1/\delta'} \left( \sum_{i \in I_g} |w_{gi}(h)| \left|Y_{gi} - \E[Y_{gi}|\mc X_n]\right|^\delta \right)^{1/\delta} \right|^{\delta} |\mc X_n \right]}{ se(h)^{\delta} } \\	
	& =  \frac{ \sum_{g \in [G]}  \left( \sum_{i \in I_g} |w_{gi}(h)| \right)^{\delta/\delta'} \E\left[ \sum_{i \in I_g} |w_{gi}(h)| |Y_{gi} - \E[Y_{gi}|\mc X_n]|^\delta |\mc X_n \right]}{ se(h)^{\delta} },
\end{align*}		
where the first equality follows by the definition of $\delta'$, and the second equality uses the fact that the weights are deterministic given $\mc X_n$.

Next, using Assumption~\ref{ass:boundedmoment} that $\E[|Y_{gi} - \E[Y_{gi} | \mc X_n]|^\delta|\mc X_n]$ is uniformly bounded and the fact that $\delta/\delta' + 1=\delta$, we obtain that
\begin{align*}
	A_n  \leq  C \frac{ \sum_{g \in [G]} \left( \sum_{i \in I_g} |w_{gi}(h)| \right)^{\delta} }{ se(h)^{\delta}}.
\end{align*}	
Finally, by basic algebra,
\begin{align*}
	A_n	& \leq  C \frac{\max_{g \in [G]} \left( \sum_{i \in I_g} |w_{gi}(h)| \right)^{\delta - 2} }{ se(h)^{\delta-2}}  \frac{ \sum_{g \in [G]}  \left( \sum_{i \in I_g} |w_{gi}(h)| \right)^2}{ se(h)^{2} } \\
	& =  C \left(  \frac{\max_{g \in [G]} \left( \sum_{i \in I_g} |w_{gi}(h)| \right)^2}{se(h)^2} \right)^{\frac{\delta-2}{2}}  \frac{ \sum_{g \in [G]}  \left( \sum_{i \in I_g} |w_{gi}(h)| \right)^2}{ se(h)^{2} }.
\end{align*}
The conclusion follows by Assumption~\ref{ass:HL_cluster_size} and the fact that $\delta > 2$.
\qed

\subsection{Proof of Theorem~\ref{theorem::consistency_of_se}}
For a generic variable $D_{gi}$, let $$D^{\Delta_d}_{g i} \equiv D_{gi} - \frac{1}{| \mathcal N_{gi}^d |}\sum_{(g', i') \in  \mathcal N_{gi}^d }D_{g'i'} 
\quad \text{ for } d \in \{1,2\}.$$
Let $q_{gi}(h)= w_{gi}(h)/se(h)$
and recall that $\wh \sigma^{\sss CNN}_{g, ij} =  Y^{\Delta_1}_{g i} Y^{\Delta_2}_{gj}$. We prove Theorem~\ref{theorem::consistency_of_se} by showing that
$$ \sum_{g \in [G]} \sum_{i,j \in I_g} q_{gi}(h) q_{gj}(h)(\wh \sigma^{\sss CNN}_{g, ij} - \sigma_{g, ij}) = o_P(1).$$
Throughout the proof, we rely on the following conditions on the error term and the weights. First,
under cross-cluster independence and $\mathbb{E}[\varepsilon_{gi}\mid \mathcal X_n]=0$, it holds that
\begin{align}
	\mathbb{E}\!\left[\varepsilon_{g_1 i_1}\varepsilon_{g_2 i_2}\varepsilon_{g_3 i_3}\varepsilon_{g_4 i_4}\mid \mathcal X_n\right]=0 	\label{cond::independence}
\end{align}
unless each cluster index in $\{g_1,g_2,g_3,g_4\}$ appears at least twice.

Second,  it directly follows from Assumption~\ref{ass:HL_cluster_size} that
\begin{align}
	\sum_{g \in [G]}	\sum_{\tilde g \in [G]} \sum_{i,j \in I_g} \sum_{k,l \in I_{\tilde g}}  |q_{gi}(h) q_{gj}(h)	q_{\tilde g k}(h) q_{\tilde g l}(h)|  = O_P(1).   \label{cond::sum_everything}
\end{align} 

Third, we further note that by the triangular inequality and Assumption~\ref{ass:HL_cluster_size}, it holds that
\begin{align}
	|\sum_{g \in [G]} & \sum_{i,j,k,l \in I_g}  q_{gi}(h) q_{gj}(h)	q_{gk}(h) q_{gl}(h) | \leq \sum_{g \in [G]}	 \sum_{i,j,k,l \in I_g}   |q_{gi}(h) q_{gj}(h)	q_{gk}(h) q_{gl}(h)| \notag \\
	&  \leq \max_g  \Big\{\sum_{i,j \in I_g} |q_{gi}(h) q_{gj}(h)| \Big\} \sum_{g \in [G]}	 \sum_{k,l \in I_g}  	|q_{gk}(h) q_{gl}(h)|  = o_P(1).  \label{cond::weightsqaure}
\end{align} 

In the following derivations, we write $C$ for a generic positive constant whose value might differ
between equations. 
For $d \in \{1,2\}$, we let 
$$Y^{\Delta_d}_{g i} =    \left( \mu(X_{gi}) - \frac{1}{|\mathcal N_{gi}^d|} \sum_{(g_d, i_d) \in \mathcal N_{gi}^d} \mu(X_{g_d i_d})  \right)+ \left(\varepsilon_{g i} - \frac{1}{|\mathcal N_{gi}^d|} \sum_{(g_d, i_d) \in \mathcal N_{gi}^d}\varepsilon_{g_d i_d}\right) \equiv  \mu^{\Delta_d}_{g i} + \varepsilon^{\Delta_d}_{g i}.$$

First, we show that the standard error is asymptotically unbiased. It holds that 
\begin{align*}
	\mathbb E \bigg[	\frac{\wh{se}^2_{\sss CNN}(h)}{se^2(h)} -1| \mathcal X_n\bigg] 
	& =  \mathbb E \bigg[\sum_{g \in [G]} \sum_{i, j \in I_g } q_{gi}(h) q_{gj}(h)  \left( Y^{\Delta_1}_{g i} Y^{\Delta_2}_{g j} - \sigma_{g, ij} \right) |\mathcal X_n \bigg]\\
	& =	\sum_{g \in [G]} \sum_{i, j \in I_g}  q_{gi}(h) q_{gj}(h)  \mu^{\Delta_1}_{g i}\mu^{\Delta_2}_{g j} \\
	&  \quad +	\sum_{g \in [G]} \sum_{i, j \in I_g } q_{gi}(h) q_{gj}(h)  \mathbb E [  \varepsilon^{\Delta_1}_{g i}   |\mathcal X_n]  \mu^{\Delta_2}_{g j}   \\
	& \quad +	\sum_{g \in [G]} \sum_{i, j \in I_g } q_{gi}(h) q_{gj}(h)  \mathbb E [  \varepsilon^{\Delta_2}_{g j}   |\mathcal X_n]   \mu^{\Delta_1}_{g i}  \\
	& \quad +	\sum_{g \in [G]} \sum_{i, j \in I_g } q_{gi}(h) q_{gj}(h)  \mathbb E [ \varepsilon^{\Delta_1}_{g i}  \varepsilon^{\Delta_2}_{gj}    - \sigma_{g,ij} |\mathcal X_n].
\end{align*} 	
Since $\varepsilon_{gi}$  is conditional mean zero for all $g \in [G]$ and $i\in I_g$ and by independence between clusters and construction of the sets $\mathcal N_{gi}^d$, it holds that for all pairs $i,j \in I_g$, 
\begin{align*}
	\mathbb E [  \varepsilon^{\Delta_1}_{g i}  \varepsilon^{\Delta_2}_{gj}|\mathcal X_n] & = 	\mathbb E [ \left(\varepsilon_{g i} - \frac{1}{|\mathcal N_{gi}^1|} \sum_{(g_1, i_1) \in \mathcal N_{gi}^1}\varepsilon_{g_1 i_1}\right)  \left(\varepsilon_{g j} - \frac{1}{|\mathcal N_{gj}^2|} \sum_{(g_2,j_2) \in \mathcal N_{gj}^2}\varepsilon_{g_2 j_2}\right)  |\mathcal X_n]\\
	&   = 	\mathbb E [ \varepsilon_{g i} \varepsilon_{g j} |\mathcal X_n]  = \sigma_{g,ij}.
\end{align*}

We further note that
\begin{align*}
	\left\lvert \mathbb E \bigg[	\frac{\wh se^2_{\sss CNN}(h)}{se^2(h)} -1| \mathcal X_n\bigg] \right\rvert & 	= \left\lvert \sum_{g \in [G]}\sum_{i, j \in I_g } 	q_{gi}(h) q_{gj}(h) \mu^{\Delta_1}_{g i} \mu^{\Delta_2}_{g j} \right\rvert \\
	& \leq C
	\max_{g \in [G]}  \max_{ \substack{i \in I_g \\ w_{gi}(h) \neq 0}}   \max_{(g', i') \in  \mathcal N_{gi}^1 \cup \mathcal N_{gi}^2} |X_{gi} - X_{g' i'}|^2 \sum_{g \in [G]} \sum_{i, j \in I_g } |q_{gi}(h) q_{gj}(h)| \\
	& = o_P(1).
\end{align*}

The first inequality follows from the L-Lipschitz continuity of $\mu$ and the triangular inequality. The last line follows as $\sum_{g \in [G]} \sum_{i, j \in I_g } |q_{gi}(h) q_{gj}(h)| = O_P(1)$  and by Assumption~\ref{ass:HL_closest_best_neighbor}. We have shown that the standard error is asymptotically unbiased.

Second, we study the conditional variance of $\wh {se}_{\sss CNN}^2(h)$.  We consider the following decomposition
\begin{align*}
	& \frac{\wh {se}_{\sss CNN}^2(h)-  \mathbb E [\wh {se}_{\sss CNN}^2(h) |\mathcal X_n]}{se^2(h)} 
	= \sum_{g \in [G]} \sum_{i,j \in I_g} q_{gi}(h) q_{gj}(h) \left( (\varepsilon^{\Delta_1}_{g i}  + \mu^{\Delta_1}_{g i} ) (\varepsilon^{\Delta_2}_{g j} + \mu^{\Delta_2}_{g j} )  -  \mathbb E [\wh \sigma^{\sss CNN}_{g, ij} |\mathcal X_n] \right) \\
	&  = \sum_{g \in [G]} \sum_{i,j \in I_g} q_{gi}(h) q_{gj}(h) \times \Big( (\varepsilon_{gi} \varepsilon_{gj}-  \mathbb E [\varepsilon_{gi}  \varepsilon_{gj}  |\mathcal X_n] )  -	\left(\frac{1}{|\mathcal N_{gi}^1|} \sum_{(g_1, i_1) \in \mathcal N_{gi}^1}\varepsilon_{g_1 i_1}\right) \varepsilon_{gj}  \\
	& \qquad \qquad - \left(\frac{1}{|\mathcal N_{gj}^2|} \sum_{(g_2,j_2) \in \mathcal N_{gj}^2}\varepsilon_{g_2 j_2}\right)   \varepsilon_{gi} + \left(\frac{1}{|\mathcal N_{gj}^2|} \sum_{(g_2,j_2) \in \mathcal N_{gj}^2}\varepsilon_{g_2 j_2}\right)    	\left(\frac{1}{|\mathcal N_{gi}^1|} \sum_{(g_1, i_1) \in \mathcal N_{gi}^1}\varepsilon_{g_1 i_1}\right)  \\
	& \qquad \qquad +  \varepsilon_{gi}\mu^{\Delta_2}_{g j} - 
	\left(\frac{1}{|\mathcal N_{gi}^1|} \sum_{(g_1, i_1) \in \mathcal N_{gi}^1}\varepsilon_{g_1 i_1}\right) \mu^{\Delta_2}_{g j}
	+ \varepsilon_{gj} \mu^{\Delta_1}_{g i} 
	- \left(\frac{1}{|\mathcal N_{gj}^2|} \sum_{(g_2,j_2) \in \mathcal N_{gj}^2}\varepsilon_{g_2 j_2}\right)  \mu^{\Delta_1}_{g i}   \Big)  \\
	& \equiv A_1  - A_2 - A_3 + A_4 + A_5 - A_6 + A_7 - A_8.
\end{align*}
It is easy to see that these eight terms  are all mean zero conditional on $\mathcal X_n$. It thus suffices
to show that their second moments converge to zero.
We will consider each of the terms $A_1, \dots, A_8$ separately. 

We start with $A_1$. It holds that
\begin{align*}
	&\Var(A_1| \mathcal X_n)\\
	& =  \sum_{g \in [G]}  \sum_{\tilde g \in [G]}  \sum_{i,j \in I_g} \sum_{l,k  \in I_{\tilde g}} q_{gi}(h) q_{gj}(h) q_{\tilde g  k}(h) q_{\tilde g l}(h)  \mathbb E [ (\varepsilon_{gi} \varepsilon_{gj} -  \mathbb E [\varepsilon_{gi}  \varepsilon_{gj}  |\mathcal X_n])  (\varepsilon_{\tilde gl} \varepsilon_{\tilde gk} -  \mathbb E [\varepsilon_{\tilde gl}  \varepsilon_{\tilde g k}  |\mathcal X_n])   |\mathcal X_n]\\
	& =  \sum_{g \in [G]}   \sum_{i,j, l,k \in I_g}  q_{gi}(h) q_{gj}(h) q_{g k}(h) q_{gl}(h)  \mathbb E [ (\varepsilon_{gi} \varepsilon_{gj} -  \mathbb E [\varepsilon_{gi}  \varepsilon_{gj}  |\mathcal X_n])  (\varepsilon_{g l} \varepsilon_{ g k} -  \mathbb E [\varepsilon_{ g l}  \varepsilon_{g k}  |\mathcal X_n])   |\mathcal X_n]\\
	& \leq C   \sum_{g \in [G]}   \sum_{i,j, l,k \in I_g}  |q_{gi}(h) q_{gj}(h) q_{g k}(h) q_{g l}(h)| = o_P(1)
\end{align*}
The second equality follows as the units are independent across clusters and by Condition~\ref{cond::independence}. The first inequality follows by boundedness of fourth moments of the error term and the last equality follows from Condition~\ref{cond::weightsqaure}.

We now consider $A_2$
\begin{align*}
	&	\Var(A_2| \mathcal X_n) 
	\\
	& =  \sum_{g, \tilde g \in [G]}  \sum_{i,j \in I_g} \frac{1}{|\mathcal N_{gi}^1|} \sum_{(g_1, i_1) \in \mathcal N_{gi}^1} \sum_{l,k  \in I_{\tilde g}} \frac{1}{|\mathcal N_{\tilde gk}^1|} \sum_{(\tilde g_{1}, k_1) \in \mathcal N_{\tilde g k }^1} q_{gi}(h) q_{gj}(h) q_{\tilde gk}(h) q_{\tilde g l}(h)  \mathbb E [ \varepsilon_{g_{1} i_1} \varepsilon_{gj}\varepsilon_{\tilde g_{1} k_1} \varepsilon_{\tilde g l} |\mathcal X_n]
\end{align*}
Based on the logic of Condition~\ref{cond::independence}, each of those terms of the sums  are nonzero if either $(g_1=\tilde g_1 \,\&\, g = \tilde g)$ or $(g_1=\tilde g \,\&\, g = \tilde g_1)$. By  the boundedness of the conditional expectations of the first four moments of the error term, it then follows that
\begin{align*}
	&	\Var(A_2| \mathcal X_n) 
	\leq  C \sum_{g \in [G]}   \sum_{i,j, l,k \in I_g} 	 \sum_{(g_1, i_1)  \in \mathcal N_{gi}^1}
	\sum_{(\tilde g_{1}, k_1) \in \mathcal N_{\tilde g k}^1} \frac{1}{|\mathcal N_{gi}^1|}	\frac{1}{|\mathcal N_{\tilde gk}^1|}  \1{g_1 = \tilde g_1}  |q_{gi}(h) q_{gj}(h) q_{\tilde g k}(h) q_{\tilde g l}(h) | \\
	& \; +C  \sum_{g, \tilde g \in [G]}   \sum_{i,j \in I_g} \sum_{l,k  \in I_{\tilde g}} 	\frac{1}{|\mathcal N_{gi}^1|} \sum_{(g_1, i_1)  \in  \mathcal N_{gi}^1} 
	\frac{1}{|\mathcal N_{\tilde gk}^1|} \sum_{(\tilde g_{1}, k_1) \in\mathcal N_{\tilde g k}^1}
	\1{g_1 = \tilde g \,\&\,  \tilde g_1 = g } |q_{gi}(h) q_{gj}(h) q_{\tilde g k}(h) q_{\tilde g l}(h)|\\
	& \equiv H_1 +H_2
\end{align*}
Furthermore, by Condition~\ref{cond::weightsqaure}, 
\begin{align*}
	H_1	& \leq 	C \sum_{g \in [G]}   \sum_{i,j,k,l\in I_g}  |q_{g i}(h) q_{g j}(h) |	\frac{1}{|\mathcal N_{gi}^1|} \sum_{(g_1, i_1)  \in \mathcal N_{gi}^1}
	\frac{1}{|\mathcal N_{gk}^1|} \sum_{(\tilde g_{1}, k_1) \in \mathcal N_{g k}^1} | q_{g k}(h) q_{g l}(h) | \\
	&  \leq C	\sum_{g \in [G]}   \sum_{i,j, k,l \in I_g}  |q_{g i}(h) q_{g j}(h) q_{g k}(h) q_{g l}(h) | = o_P(1).
\end{align*}
We further note that as by construction, Equation~\ref{equ:max_R}, each cluster is contributes neighbors only to units of R other clusters
\begin{align*}
	H_2 
	&\le C \sum_{g \in [G]} \sum_{i,j \in I_g}
	\frac{|q_{gi}(h) q_{gj}(h)|}{|\mathcal N_{gi}^1|}
	\sum_{(g_1, i_1) \in \mathcal N_{gi}^1} 
	\sum_{\substack{
			\tilde g \in [G] \\
			g \in \mathcal R_{\tilde g}^1 \\
			\tilde g \in \mathcal R_g^1
	}}
	\Bigg(
	\sum_{l,k \in I_{\tilde g}}
	\frac{|q_{\tilde g k}(h) q_{\tilde g l}(h)|}{|\mathcal N_{\tilde g k}^1|}
	\sum_{(\tilde g_1, k_1) \in \mathcal N_{\tilde g k}^1}
	\mathbf{1}\{ g_1 = \tilde g,\; \tilde g_1 = g \}
	\Bigg) \\
	&\le C 
	\left(
	\max_{\tilde g \in [G]}
	\sum_{l,k \in I_{\tilde g}} 
	|q_{\tilde g k}(h) q_{\tilde g l}(h)|
	\right)
	\sum_{g \in [G]} \sum_{i,j \in I_g} 
	|q_{gi}(h) q_{gj}(h)| \\
	&= o_P(1).
\end{align*}
where the last inequality follows by the restriction of Equation~\eqref{equ:max_R} as each cluster is used as a companion cluster at most $R$ times. The last line follow from condition~\eqref{cond::weightsqaure}.
It follows that $\Var(A_2| \mathcal X_n) = o_P(1).$ Using the same arguments, it also holds that $\Var(A_3|\mathcal X_n) = o_P(1)$.

We now consider $A_4$.
\begin{align*}
	& \Var(A_4| \mathcal X_n)   =  \sum_{g \in [G]}  \sum_{\tilde g \in [G]}  \sum_{i,j \in I_g} \sum_{l,k  \in I_{\tilde g}}
	\frac{1}{|\mathcal N_{gj}^2|} \sum_{(g_2,j_2) \in \mathcal N_{gj}^2}
	\frac{1}{|\mathcal N_{gi}^1|} \sum_{(g_1, i_1) \in \mathcal N_{gi}^1}
	\frac{1}{|\mathcal N_{\tilde gl}^2|} \sum_{(\tilde g_2, l_2) \in \mathcal N_{\tilde g l}^2}
	\frac{1}{|\mathcal N_{\tilde gk}^1|}  \sum_{(\tilde g_1, k_1) \in \mathcal N_{\tilde gk}^1}
	\\
	&\qquad \qquad  \cdot  q_{gi}(h) q_{gj}(h) q_{\tilde g k}(h) q_{\tilde g l}(h)  \mathbb E [ \varepsilon_{g_2 j_2}\varepsilon_{g_1 i_1}
	\varepsilon_{\tilde g_2 l_2}    	\varepsilon_{\tilde g_1  k_1}|\mathcal X_n]\\
\end{align*}
Based on the logic of Condition~\ref{cond::independence}, each of those terms of the sums  are nonzero if either $(g_1=\tilde g_1 \,\&\, g_2 = \tilde g_2)$ or $(g_1=\tilde g_2 \,\&\, g_2 = \tilde g_1)$. By  the boundedness of the conditional expectations of the first four moments of the error term, it then follows that
\begin{align*}
	& \Var(A_4| \mathcal X_n)   \leq  \sum_{g \in [G]}  \sum_{\tilde g \in [G]}  \sum_{i,j \in I_g} \sum_{l,k  \in I_{\tilde g}}
	\frac{1}{|\mathcal N_{gj}^2|} \sum_{(g_2,j_2) \in \mathcal N_{gj}^2}
	\frac{1}{|\mathcal N_{gi}^1|} \sum_{(g_1, i_1) \in \mathcal N_{gi}^1}
	\frac{1}{|\mathcal N_{\tilde gl}^2|} \sum_{(\tilde g_2, l_2) \in \mathcal N_{\tilde g l}^2}
	\frac{1}{|\mathcal N_{\tilde gk}^1|}  \sum_{(\tilde g_1, k_1) \in \mathcal N_{\tilde gk}^1}
	\\
	&\qquad   \cdot   \1{g_1=\tilde g_1 \,\&\, g_2 = \tilde g_2} |q_{gi}(h) q_{gj}(h) q_{\tilde g k}(h) q_{\tilde g l}(h)|  \\ 
	& \quad + \sum_{g \in [G]}  \sum_{\tilde g \in [G]}  \sum_{i,j \in I_g} \sum_{l,k  \in I_{\tilde g}}
	\frac{1}{|\mathcal N_{gj}^2|} \sum_{(g_2,j_2) \in \mathcal N_{gj}^2}
	\frac{1}{|\mathcal N_{gi}^1|} \sum_{(g_1, i_1) \in \mathcal N_{gi}^1}
	\frac{1}{|\mathcal N_{\tilde gl}^2|} \sum_{(\tilde g_2, l_2) \in \mathcal N_{\tilde g l}^2}
	\frac{1}{|\mathcal N_{\tilde gk}^1|}  \sum_{(\tilde g_1, k_1) \in \mathcal N_{\tilde gk}^1}
	\\
	&\qquad   \cdot \1{g_1=\tilde g_2 \,\&\, g_2 = \tilde g_1}
	|q_{gi}(h) q_{gj}(h) q_{\tilde g k}(h) q_{\tilde g l}(h)|  \\ 
	& \equiv H_3+H_4
\end{align*}

The first inequality follows from Condition~\ref{cond::independence} and the boundedness of the conditional expectations of the first four moments of the error term. 
We further note that
\begin{align*}
	&H_3  =	 \sum_{g \in [G]}  \sum_{i,j \in I_g}
	\frac{1}{|\mathcal N_{gj}^2|} \sum_{(g_2,j_2) \in \mathcal N_{gj}^2}
	\frac{1}{|\mathcal N_{gi}^1|} \sum_{(g_1, i_1) \in \mathcal N_{gi}^1}
	|q_{gi}(h) q_{gj}(h)| \\
	& \cdot  \left(
	\sum_{ \substack{\tilde g \in [G]:\\ \mathcal R_g^1 \cap \mathcal R_{\tilde g}^1 \neq \emptyset \\ \mathcal R_g^2 \cap \mathcal R_{\tilde g}^2 \neq \emptyset}} 
	\sum_{l,k  \in I_{\tilde g}}
	\frac{1}{|\mathcal N_{\tilde gl}^2|} \sum_{(\tilde g_2, l_2) \in \mathcal N_{\tilde g l}^2}
	\frac{1}{|\mathcal N_{\tilde gk}^1|}  \sum_{(\tilde g_1, k_1) \in \mathcal N_{\tilde gk}^1}   | q_{\tilde g k}(h) q_{\tilde g l}(h) |  \1{g_1 = \tilde g_1 \,\&\,  g_2 = \tilde g_2} \right) \\
	& \leq C \bigg( \max_{\tilde g \in [G]}   \sum_{l,k  \in I_{\tilde g}}
	\frac{1}{|\mathcal N_{\tilde gl}^2|} \sum_{(\tilde g_2, l_2) \in \mathcal N_{\tilde g l}^2}
	\frac{1}{|\mathcal N_{\tilde gk}^1|}  \sum_{(\tilde g_1, k_1) \in \mathcal N_{\tilde gk}^1}   | q_{\tilde g k}(h) q_{\tilde g l}(h) |  \1{g_1 = \tilde g_1 \,\&\,  g_2 = \tilde g_2} \bigg) \\
	&\quad \cdot	\sum_{g \in [G]}  \sum_{i,j \in I_g}
	\frac{1}{|\mathcal N_{gj}^2|} \sum_{(g_2,j_2) \in \mathcal N_{gj}^2}
	\frac{1}{|\mathcal N_{gi}^1|} \sum_{(g_1, i_1) \in \mathcal N_{gi}^1}
	|q_{gi}(h) q_{gj}(h)| \\
	&  = o_P(1)
\end{align*}
where the inequality follows from the condition~\eqref{equ:max_R}. If each cluster is used as a companion cluster at most $R$ times, cluster can share also only a bounded number of common companion clusters.  
The last equality follows from Condition~\ref{cond::weightsqaure}. By the same arguments, it holds that
$ H_4  = o_P(1).$
It follows that $\Var(A_4| \mathcal X_n) = o_P(1)$.

We now consider $A_5$. It holds that
\begin{align*}
	\Var(A_5| \mathcal X_n) & = \sum_{g \in [G]}  \sum_{\tilde g \in [G]}  \sum_{i,j \in I_g} \sum_{l,k  \in I_{\tilde g}} q_{gi}(h) q_{gj}(h) q_{\tilde g k}(h) q_{\tilde g l}(h) 
	\mathbb E [ \varepsilon_{gi} \varepsilon_{\tilde g k} | \mathcal X_n]  \mu^{\Delta_2}_{g j} \mu^{\Delta_2}_{\tilde gl}\\
	& = \sum_{g \in [G]}   \sum_{i,j,l,k \in I_g}  q_{gi}(h) q_{gj}(h) q_{gk}(h) q_{gl}(h) 
	\mathbb E [ \varepsilon_{gi} \varepsilon_{gk} | \mathcal X_n]  \mu^{\Delta_2}_{gj} \mu^{\Delta_2}_{ gl}\\
	& \leq C \max_{g \in [G]} \max_{j \in I_g} (\mu^{\Delta_2}_{gj})^2  \sum_{g \in [G]}   \sum_{i,j,l,k \in I_g}  |q_{gi}(h) q_{gj}(h) q_{gk}(h) q_{gl}(h)| \\
	& =o_P(1).
\end{align*}

The second equality follows from Condition~\ref{cond::independence}. The inequality follows as the first four moments of the error terms are bounded. The last step follows from Condition~\ref{cond::weightsqaure} and   as $\max_{g \in [G]} \max_{j \in I_g} (\mu^{\Delta_2}_{gj})^2= o_P(1)$ by Assumption~\ref{ass:HL_closest_best_neighbor} and Lipschitz continuity of $\mu$. 

We now consider $A_6$. It holds that
\begin{align*}
	\Var(A_6| \mathcal X_n) & = \sum_{g \in [G]}  \sum_{\tilde g \in [G]}  \sum_{i,j \in I_g} \sum_{l,k  \in I_{\tilde g}} \frac{1}{|\mathcal N_{gi}^1|} \sum_{(g_1, i_1) \in \mathcal N_{gi}^1} \frac{1}{|\mathcal N_{\tilde g k}^1|} \sum_{(\tilde g_1, k_1) \in \mathcal N_{\tilde g k }^1} \\
	& \quad \cdot \left( q_{gi}(h) q_{gj}(h) q_{\tilde gk}(h) q_{\tilde g l}(h)  \right)
	\mathbb E [ \varepsilon_{g_1i_1}  \varepsilon_{\tilde g_1k_1}  | \mathcal X_n]  \mu^{\Delta_2}_{gj} \mu^{\Delta_2}_{\tilde gl}\\
	& \leq C \max_{g \in [G]} \max_{j \in I_g} (\mu^{\Delta_2}_{gj})^2 \sum_{g \in [G]}   \sum_{i,j \in I_g} |q_{gi}(h) q_{gj}(h) \frac{1}{|\mathcal N_{gi}^1|} \sum_{(g_1, i_1) \in \mathcal N_{gi}^1} \sum_{\substack{\tilde g \in [G] \\ \mathcal R_g^1 \cap  \mathcal R_{\tilde g}^1 \neq \emptyset}}   \sum_{l,k  \in I_{\tilde g}}  \\
	& \qquad \cdot \left(\frac{1}{|\mathcal N_{\tilde g k}^1|} \sum_{(\tilde g_1, k_1) \in \mathcal N_{\tilde g k }^1}  \1{g_1 = \tilde g_1 }   |q_{\tilde g k}(h) q_{\tilde g l}(h)| \right)\\
	& = o_P(1).
\end{align*}
The first inequality follows as the first four moments of the error terms are bounded and by Condition~\ref{cond::independence}.   The last equality follows as $ \max_{g \in [G]} \max_{j \in I_g} (\mu^{\Delta_2}_{gj})^2= o_P(1)$ by Assumption~\ref{ass:HL_closest_best_neighbor} and from Condition~\ref{cond::sum_everything}.

One can show that $\Var(A_7| \mathcal X_n)=o_P(1)$ using the same arguments as in the discussion of $A_5$. Similarly, one can show that $\Var(A_8| \mathcal X_n)=o_P(1)$ using the same arguments as in the discussion of $A_6$.  This reasoning concludes the proof.  \hfill$\square$

\subsection{Proof of Theorem~\ref{theorem:RR_SE}}
Define the infeasible version of $\wh{se}_{\sss CRR}^2(h)$ using the true residuals $\varepsilon_{gi} = Y_{gi} - \mu(X_{gi})$,
\[
\wt{se}_{\sss CRR}^2(h) = \sum_{g \in [G]} \sum_{i,j \in I_g} w_{gi}(h)w_{gj}(h) \varepsilon_{gi} \varepsilon_{gj}.
\]	
First, we show that $\wh{se}_{\sss CRR}^2(h)$ and $\wt{se}_{\sss CRR}^2(h)$ are first-order equivalent,
\begin{equation}\label{eq:EHW_step1}
	\frac{\wh{se}_{\sss CRR}^2(h) - \wt{se}_{\sss CRR}^2(h)}{se^2(h)} = o_P(1).
\end{equation}
To prove this claim, observe that
\begin{align*}
	\hat\varepsilon_{gi} \hat\varepsilon_{gj} - \varepsilon_{gi} \varepsilon_{gj}
	& = \left( Y_{gi}Y_{gj} - \wh\mu(X_{gi})Y_{gj} - Y_{gi}\wh\mu(X_{gj}) + \wh\mu(X_{gi})\wh\mu(X_{gj})   \right) \\
	& \quad -  \left( Y_{gi}Y_{gj} - \mu(X_{gi})Y_{gj} - Y_{gi}\mu(X_{gj}) + \mu(X_{gi})\mu(X_{gj})   \right) \\
	%
	%
	%
	& = (\mu(X_{gi}) - \wh\mu(X_{gi}))(Y_{gj}- \mu(X_{gj})) + (\mu(X_{gj}) - \wh\mu(X_{gj})) (Y_{gi} - \mu(X_{gi})) \\
	& \quad  + \left(\wh\mu(X_{gi}) - \mu(X_{gi}) \right) \left(\wh\mu(X_{gj}) -  \mu(X_{gj}) \right).
\end{align*}
It follows that
\begin{align*}
	\wh{se}_{\sss CRR}^2(h) - \wt{se}_{\sss CRR}^2(h)
	& = \sum_{g \in [G]} \sum_{i,j \in I_g} w_{gi}(h)w_{gj}(h) \left(\hat\varepsilon_{gi} \hat\varepsilon_{gj} - \varepsilon_{gi} \varepsilon_{gj} \right) \\
	& = 2 \sum_{g \in [G]} \sum_{i,j \in I_g} w_{gi}(h)w_{gj}(h)  (\mu(X_{gi}) - \wh\mu(X_{gi}))(Y_{gj}- \mu(X_{gj}))  \\
	& \quad  + \sum_{g \in [G]} \sum_{i,j \in I_g} w_{gi}(h)w_{gj}(h) \left(\wh\mu(X_{gi}) - \mu(X_{gi}) \right) \left(\wh\mu(X_{gj}) -  \mu(X_{gj}) \right) . 
\end{align*}
We can decompose $\wh\mu(X_{gi}) - \mu(X_{gi})$ as follows:
\begin{align*}
	\wh\mu(X_{gi}) - \mu(X_{gi}) & = \left( (\hat b_0^+ - b_0^+) + (\hat b_1^+ - b_1^+)X_{gi} \right)\1{0 \leq X_{gi}} \\
	& \quad + \left( (\hat b_0^- - b_0^-) + (\hat b_1^- - b_1^-)X_{gi}  \right) \1{ X_{gi} < 0 } + B_{gi},
\end{align*}
where $B_{gi}$ is the remainder from a Taylor expansion of $\mu$ on the respective side of the cutoff that depends only on $X_{gi}$ and $\max_{g \in [G]}\max_{i \in I_g: |X_{gi}| \leq h}|B_{gi}| = O(h^2).$
It follows that
\begin{align*}
	\wh{se}_{\sss CRR}^2 - \wt{se}_{\sss CRR}^2
	& =  2 \sum_{\star \in \{+,-\}}   (\hat b_0^\star - b_0^\star) \sum_{g \in [G]} \sum_{i,j \in I_g} (-1)^{\1{\star=-}} w^\star_{gi}(h)w_{gj}(h)   (Y_{gj}- \mu(X_{gj})) \\
	& \quad + 2 \sum_{\star \in \{+,-\}}  h (\hat b_1^\star - b_1^\star) \sum_{g \in [G]} \sum_{i,j \in I_g} (-1)^{\1{\star=-}}w^\star_{gi}(h)w_{gj}(h) (X_{gi}/h)  (Y_{gj}- \mu(X_{gj})) \\
	&  \quad + 2 \sum_{g \in [G]} \sum_{i,j \in I_g} w_{gi}(h)w_{gj}(h) B_{gi}  (Y_{gj}- \mu(X_{gj})) \\
	& \quad + \sum_{g \in [G]} \sum_{i,j \in I_g} w_{gi}(h)w_{gj}(h) \left(\wh\mu(X_{gi}) - \mu(X_{gi}) \right) \left(\wh\mu(X_{gj}) -  \mu(X_{gj}) \right)  \\
	& \equiv 2A_1 + 2A_2 + 2A_3 + A_4.
\end{align*}	
We begin by studying the first two terms. Let
\begin{align*}
	T^{\star}_l = \sum_{g \in [G]} \sum_{i,j \in I_g} w^\star_{gi}(h)w_{gj}(h) (X_{gi}/h)^l (Y_{gj}- \mu(X_{gj})).
\end{align*}
Note that, $\E[T_l^{\star} | \mc X_n] = 0$ and, using the fact that $\Var(\varepsilon_{gi}|\mc X_g)$ is uniformly bounded,
\begin{align*}
	\Var(T_l^{\star} | \mc X_n) 
	& = \sum_{g \in [G]} \E\left[ \left(\sum_{i,j \in I_g} w^\star_{gi}(h) w_{gj}(h)   (X_{gi}/h)^{l} (Y_{gj}- \mu(X_{gj})) \right)^2 \Big| \mc X_g \right]  \\
	& \leq C \sum_{g \in [G]} \left(\sum_{i,j \in I_g} |w^\star_{gi}(h) w_{gj}(h)| \right)^2 \\
	& \leq C \max_{g \in [G]} \sum_{i,j \in I_g}|w^\star_{gi}(h)w_{gj}(h)| \sum_{g \in [G]} \sum_{i,j \in I_g} |w^\star_{gi}(h)w_{gj}(h)|.
\end{align*}
By Assumption~\ref{ass:HL_cluster_size}, it then follows that
\[
T_l^{\star}  / se^2(h) = o_P(1),
\]
and, in consequence, $A_1/se^2(h) = o_P(1)$ and  $A_2/se^2(h) = o_P(1)$, using $\hat b_0^\star - b_0^\star = o_P(1)$ and $h(\hat b_1^\star - b_1^\star) = o_P(1)$. By the same reasoning, $A_3/se^2(h) = O_P(h^2) = o_P(1)$. The last term is bounded as follows:
\[
|A_4| \leq\max_{g \in [G]}\max_{i \in I_g: |X_{gi}| \leq h} \left(\wh\mu(X_{gi}) - \mu(X_{gi})\right)^2 \sum_{g \in [G]} \sum_{i,j \in I_g} |w_{gi}(h)w_{gj}(h)|,
\]
such that $A_4/se^2(h) = o_P(1)$.

Second, we show that
\begin{equation}\label{eq:EHW_step2}
	\frac{\wt{se}_{\sss CRR}^2(h) - se^2(h)}{se^2(h)} = o_P(1).
\end{equation}
To see that, note that
\[
\E[\wt{se}_{\sss CRR}^2(h)|\mc X_n] - se^2(h)  =  \sum_{g \in [G]} \sum_{i,j \in I_g} w_{gi}(h)w_{gj}(h) \E\left[ \varepsilon_{gi} \varepsilon_{gj} - \sigma_{g,ij} | \mc X_g \right] =0,
\]
and
\begin{align*}
	\Var\big(\wt{se}_{\sss CRR}^2(h) |\mc X_n\big) 
	& \leq \sum_{g \in [G]} \left( \sum_{i,j \in I_g}| w_{gi}(h)w_{gj}(h) |\right)^2 \max_{i,j \in I_g}\Var(\varepsilon_{gi} \varepsilon_{gj} | \mc X_g )\\
	& \leq C  \max_{g \in [G]} \sum_{i,j \in I_g}|w_{gi}(h)w_{gj}(h)| \sum_{g \in [G]} \sum_{i,j \in I_g} |w_{gi}(h)w_{gj}(h)| \\
	& = o_P(1).
\end{align*}
The proof is concluded by combining \eqref{eq:EHW_step1} and~\eqref{eq:EHW_step2}.\qed

\section{Proofs for Section~4}

\subsection{Additional Notation and Lemmas}\label{A:sec:weights}
Let  $k^+(v)=k(v)\1{0 \leq v}$,  $k^-(v)=k(v)\1{v<0}$, and $k_h^\star(v)=k^\star(v/h)/h$ for $\star \in \{+,-\}$.
The local linear RD estimator is defined as
\begin{align*}
	\wh\tau(h) &= \sum_{g \in [G]} \sum_{i \in I_g} w_{gi}(h) Y_i,\quad 
	w_{gi}(h) = w_{gi}^+(h) - w_{gi}^-(h), \\
	w_{gi}^\star(h) &= e_{1}^\top \left(\sum_{g \in [G]} \sum_{i \in I_g} k^\star_h(X_i) \widetilde{X}_{gi} \widetilde{X}_{gi}^\top\right)^{-1} k^\star_h(X_{gi}) \widetilde{X}_{gi} \quad\text{ for } \star \in \{+,-\},
\end{align*}
where $\widetilde{X}_{gi}=(1,X_{gi})^\top$.
The local linear weights can be further expressed as
\[
w^\star_{gi}(h) = \frac{ \frac{1}{n} k^\star_h(X_{gi}) \left(S^\star_{n,2}   - S^\star_{n,1}  (X_{gi}/h) \right)  }{ S^\star_{n,2} S^\star_{n,0}- (S^\star_{n,1})^2 }, \quad S^\star_{n,l} = \frac{1}{n} \sum_{g\in[G]}\sum_{i \in I_g} k^\star_h(X_{gi}) (X_{gi}/h)^l.
\]

To prove the results in Section~\ref{sec:AFs}, we first establish two technical lemmas. The first lemma provides basic convergence results for $S_n$ and other kernel-weighted sums that are used in
all four asymptotic frameworks.
For $l,m \in \mathbb{N}_0$ and $\star, \diamond \in \{+,-\} $,
\begin{align*}
	& T^\star_{n,l} = \frac{1}{n^2} \sum_{g \in [G]} \sum_{i \in I_g} (k^\star_h(X_{gi}))^2 (X_{gi}/h)^l, \\
	& U^{\star\diamond}_{n,l,m} = \frac{1}{n^2} \sum_{g \in [G]} \sum_{\substack{ i \neq j  \\ i,j \in I_g} } k^{\star}_h(X_{gi})k^{\diamond}_h(X_{gj})(X_{gi}/h)^l (X_{gj}/h)^m.
\end{align*}
Furthermore, for $l\in \mathbb{N}_0$ and $\star, \diamond \in \{+,-\} $, let $\bar\mu^\star_l = \int k^\star(v)v^ldv$ and $\bar\kappa^\star_l = \int (k^\star(v))^2 v^l dv$.

\begin{lemma}\label{lemma::snj}
	Suppose that Assumption~\ref{ass:distX} holds, and either
	\begin{itemize}
		\item [(A)] Assumption~\ref{ass:LLI_clustersize}(i) holds and $\displaystyle \frac{1}{n^2} \sum_{g \in [G]} n_g^2  = o(1)$; or
		\item [(B)]  $\displaystyle \frac{1}{n^2} \sum_{g \in [G]} n_g^2  = o(h)$. 
	\end{itemize}
	Then the following hold for $l,m \in \mathbb{N}_0$ and $\star, \diamond \in \{+,-\} $.
	\begin{enumerate}[label=(\roman*)]
		\item $ S^\star_{n,l} = \bar\mu^\star_l f(0) + o_P(1),$
		\item $ \displaystyle T^\star_{n,l} = \frac{1}{nh} \left(\bar\kappa_l^\star f(0) + o_P(1)\right).$
		\item In case~(A),
		\begin{align*}
			nh U^{\star\diamond}_{n,l,m} 
			& = O_P\left( \frac{1}{nh}\sum_{g \in [G]}(n_gh)^2 + \frac{1}{nh} \right). 
		\end{align*}
		
		If in addition  all pairs $(X_{gi}, X_{gj})$, $i\neq j$, are identically distributed with continuous joint density $f(x_1,x_2)$ and
		$\displaystyle \frac{ \max_{g \in [G]} (n_gh)^2 }{nh + \sum_{g \in [G]} (n_gh)^2} = o(1)$,
		then 
		$$	nh U^{\star\diamond}_{n,l,m} = \lambda_n \bar\mu^\star_l \bar\mu^\diamond_m  f(0^{\star},0^{\diamond})+ o_P(1 + \lambda_n).$$
		
		\item In case~(B), 
		\begin{align*}
			nh U^{\star\diamond}_{n,l,m} & =  O_P\left( \frac{1}{n} \sum_{g \in [G]} n_g^2 + \sqrt{\frac{\max_{g \in [G]} n_g^2 }{nh}  \frac{1}{n}\sum_{g \in [G]} n_g^2} \right).
		\end{align*}	
		If in addition the realizations of the running variable are equal within each cluster and
		$\displaystyle\frac{\max_{g \in [G]} n_g^2}{\sum_{g \in [G]} n_g^2 } = o(h)$,
		then
		\[
		nh U^{\star\diamond}_{n,l,m}  = \bar\kappa^\star_{l+m} f(0) \lambda_n/h + o_P(\lambda_n/h).
		\]
		
	\end{enumerate}
\end{lemma}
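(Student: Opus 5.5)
The plan is a mean--variance argument for each kernel-weighted sum, conditioning on nothing (the $X$'s are random) and using independence across clusters. For every statistic $Z_n\in\{S^\star_{n,l},T^\star_{n,l},U^{\star\diamond}_{n,l,m}\}$ I will compute $\E[Z_n]$ from the common marginal density $f_X$ (or from the joint density of pairs), and bound $\Var(Z_n)=\sum_{g}\Var(\text{cluster-}g\text{ contribution})$. Chebyshev then gives $Z_n=\E[Z_n]+O_P(\Var(Z_n)^{1/2})$. Throughout, boundedness and compact support of $k$ give $|k^\star_h(x)(x/h)^l|\le C h^{-1}\1{|x|\le h}$.

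For (i), a change of variables and continuity of $f_X$ give $\E[S^\star_{n,l}]=\int k^\star(v)v^l f_X(hv)\,dv=\bar\mu^\star_l f_X(0)+o(1)$. For the variance, write $S^\star_{n,l}=n^{-1}\sum_g\sum_{i\in I_g}a_{gi}$ with $|a_{gi}|\le Ch^{-1}\1{|X_{gi}|\le h}$. By Cauchy--Schwarz within a cluster, $\Var(\sum_i a_{gi})\le n_g\sum_i\E a_{gi}^2\le Cn_g^2/h$, which yields $\Var(S^\star_{n,l})\le C\sum_g n_g^2/(n^2h)$. This is $o(1)$ in case (B). In case (A) I will refine the bound: the diagonal terms contribute $O(n_g/h)$, and the off-diagonal terms contribute at most $n_g^2\,\E|a_{gi}a_{gj}|\le Cn_g^2$ by the bounded joint density. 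Hence $\Var\le C(1/(nh)+\sum_g n_g^2/n^2)=o(1)$. Part (ii) is analogous: $\E[nhT^\star_{n,l}]=\int (k^\star)^2v^lf_X(hv)\,dv\to\bar\kappa^\star_lf_X(0)$, and the variance of $nhT$ is handled by the same two bounds, with one extra factor $1/h$ on the summands absorbed by the normalization $nh$.

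For (iii), note that $nhU$ is $(h/n)\sum_g\sum_{i\ne j}b_{gij}$ with $|b_{gij}|\le Ch^{-2}\1{|X_{gi}|,|X_{gj}|\le h}$. Under a bounded joint density, $\E|b_{gij}|\le C$, so $\E|nhU|\le C(h/n)\sum_g n_g^2=O(\frac{1}{nh}\sum_g(n_gh)^2)$, and Markov's inequality gives the $O_P$ bound (the extra $1/(nh)$ term is harmless). Under the additional identical-distribution and continuity assumptions, $\E[b_{gij}]\to\bar\mu^\star_l\bar\mu^\diamond_m f(0^\star,0^\diamond)$, so the mean equals $\lambda_n(\ldots)+o(\lambda_n)$. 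The variance requires bounding $\Var(\sum_{i\ne j}b_{gij})$ through fourth-order index patterns: patterns with four distinct indices contribute $O(n_g^4h^{-4}h^4)=O(n_g^4)$, patterns with three distinct indices $O(n_g^3/h)$, and patterns with two $O(n_g^2/h^2)$. Multiplying by $h^2/n^2$ and summing, the variance is $O(\max_g(n_gh)^2\cdot\frac{1}{(nh)^2}\sum_g(n_gh)^2+\text{lower order})$. Under the stated ratio condition this is $o((1+\lambda_n)^2)$, since $\lambda_n\asymp\frac{1}{nh}\sum_g(n_gh)^2$.

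Case (iv) with coincident running variables makes $b_{gij}$ equal across pairs within a cluster, namely $(k^\star_h(X_g))^2(X_g/h)^{l+m}$ times $n_g(n_g-1)$, nonzero only when $\star=\diamond$. Its mean is $h^{-1}\bar\kappa^\star_{l+m}f_X(0)(1+o(1))$, which gives $\bar\kappa^\star_{l+m}f(0)\lambda_n/h$. The variance per cluster is $O(n_g^4h^{-3})$, so the total is $O(\frac{h^2}{n^2}\sum_g n_g^4h^{-3})\le O(\frac{\max_g n_g^2}{nh}\cdot\frac1n\sum_g n_g^2)$, which matches the square-root term. It is $o((\lambda_n/h)^2)$ exactly when $\max_g n_g^2/\sum_g n_g^2=o(h)$. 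For the general case-(B) $O_P$ bound I use only $|b_{gij}|\le Ch^{-2}\1{\cdot}$ and the marginal density, giving $\E|b_{gij}|\le Ch^{-1}$, together with the same Cauchy--Schwarz variance bound.

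The main obstacle is the variance in (iii): enumerating the overlap patterns of $(i,j,k,l)$ within a cluster and showing that every pattern is controlled by the bounded joint densities of two, three and four points in Assumption~\ref{ass:LLI_clustersize}(i). The delicate part is the four-distinct-index pattern, where centering does not remove the term, so the $\max_g(n_gh)^2$ ratio condition is genuinely needed.
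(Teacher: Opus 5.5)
Your proposal is correct and follows the paper's own mean--variance argument. That means independence across clusters, bounded pair densities in case~(A), within-cluster Cauchy--Schwarz bounds in case~(B), and the same two/three/four-distinct-index counting for $\Var(U^{\star\diamond}_{n,l,m})$. Your deviations are minor:
\begin{itemize}
\item you obtain the first $O_P$ bound in (iii) from Markov's inequality, which needs only pair densities rather than the three- and four-point densities the paper's variance route uses;
\item you correctly remark that $U^{\star\diamond}_{n,l,m}\equiv 0$ for $\star\neq\diamond$ when running variables coincide within clusters, a restriction the stated lemma omits.
\end{itemize}

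One caution, which applies equally to the paper's statement. Since $\frac1n\sum_g n_g^2=1+\lambda_n/h$, the condition $\max_g n_g^2/\sum_g n_g^2=o(h)$ makes your variance bound in (iv) $o\big((1+\lambda_n/h)^2\big)$. It does not make it $o\big((\lambda_n/h)^2\big)$ in general, so your ``exactly when'' claim is wrong.

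For a counterexample, take $n-s$ singleton clusters and one cluster of size $s$ with $s^2=o(nh)$. Then $nhU^{++}_{n,0,0}=\frac{s(s-1)}{n}h^{-1}k^+(X/h)^2$, where $X$ is the common running variable of the large cluster. This is zero with probability tending to one, so the remainder $o_P(\lambda_n/h)$ fails. The remainder must be $o_P(1+\lambda_n/h)$, which is the form actually used in the proof of Lemma~\ref{lemma:se_2}.
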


The second lemma is relevant for the proofs under Asymptotic Frameworks~I and~III, where we assume that the density of the running variable within each cluster admits a bounded density.
Let $n_{g,h}$ denote the number of observations within the estimation window from cluster $g \in [G]$, i.e., $n_{g,h} = \sum_{i \in I_g} \1{|X_{gi}| \leq h}$, assuming that the support of the kernel function used is contained in $[-1,1]$.

\begin{lemma}\label{lemma:ngh}
	Suppose that Assumption~\ref{ass:LLI_clustersize}(i) holds, the kernel $k$ has bounded support, and $G \ge 2$. Then 
	\[
	\max_{g \in [G]} n_{g,h} = O_P\left( \max_{g \in [G]} n_g h + \log G \right).
	\]		
\end{lemma}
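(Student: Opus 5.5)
The plan is a union bound over clusters combined with a tail bound for $n_{g,h}$, where the tail bound comes from factorial moments controlled by the bounded joint density. Fix $g$ and write $n_{g,h} = \sum_{i \in I_g} \1{|X_{gi}| \leq h}$. For any integer $k \geq 1$, the $k$-th factorial moment is
\[
\E\Big[\tbinom{n_{g,h}}{k}\Big] = \sum_{i_1 < \cdots < i_k} P\big(|X_{gi_1}| \leq h, \ldots, |X_{gi_k}| \leq h\big).
\]
By Assumption~\ref{ass:LLI_clustersize}(i), each probability with $k \geq 2$ is at most $C^k (2h)^k$. This uses the uniform bound $B$ on the $k$-dimensional joint density. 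We need this bound to be geometric in $k$, of the form $B \leq C^k$, which is the natural reading of the assumption. If only a $k$-free bound is available, we instead bound $P(\text{all } k \text{ in } [-h,h]) \leq B (2h)^k$, and the argument below goes through with an extra constant factor. For $k = 1$ we use the marginal density. Hence
\[
\E\Big[\tbinom{n_{g,h}}{k}\Big] \leq \tbinom{n_g}{k} (c h)^k \leq \frac{(c\, n_g h)^k}{k!}.
\]
This is exactly the factorial-moment bound of a Poisson variable with mean $\nu_g = c\, n_g h$.

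Next we convert this into a tail bound. For $m \geq k$ we have $\1{n_{g,h} \geq m} \leq \binom{n_{g,h}}{k} / \binom{m}{k}$, so Markov gives
\[
P(n_{g,h} \geq m) \leq \frac{\nu_g^k}{k!\,\binom{m}{k}}.
\]
Taking $k = m$ yields $P(n_{g,h} \geq m) \leq \nu^m / m! \leq (e\nu/m)^m$, where $\nu = c \max_g n_g h$. This bound is uniform in $g$.

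Then apply the union bound. Set $m = \lceil K(\nu + \log G) \rceil$ with $K \geq e^2$, so that $e\nu/m \leq e/K \leq e^{-1}$. This gives
\[
P\Big(\max_g n_{g,h} \geq m\Big) \leq G\, e^{-m} \leq G\, e^{-K \log G} = G^{1-K}.
\]
We have $G^{1-K} \leq 2^{1-K}$ since $G \geq 2$, and this can be made arbitrarily small by choosing $K$ large. Hence $\max_g n_{g,h} = O_P(\max_g n_g h + \log G)$.

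The main obstacle is the first step: justifying that the $k$-point probability is at most $(ch)^k$ uniformly in $k$. The whole argument rests on obtaining Poisson-type factorial moments from the density bound in Assumption~\ref{ass:LLI_clustersize}(i), and if the constant grows in $k$ faster than geometrically, the choice of $k$ must be truncated. If the geometric-in-$k$ bound is not available, I would fall back to a fixed large $k$. That gives $P(n_{g,h} \geq m) \lesssim (\nu/m)^k$, and with $m$ as above the union bound over $G$ clusters still closes once $k$ is taken of order $\log G$.
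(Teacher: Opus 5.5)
Your proof is correct. It shares the paper's skeleton: the bounded joint density gives $\Pr(|X_{gi_1}|\le h,\ldots,|X_{gi_k}|\le h)\le C(2h)^k$, and this is followed by a union bound over the $G$ clusters. Where you differ is the tail device.

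\textbf{The paper's route.} It works with the moment generating function. It bounds $\Pr(n_{g,h}=k)$ by the sum over $k$-subsets and sums over all $k$ with the binomial formula, giving the Poisson-type bound $\E[e^{tn_{g,h}}]\le Ce^{2n_ghe^t}$. It then applies Chernoff with $B=\frac{e^t}{t}(2\max_g n_gh+\log G)$, which yields a tail of $CG^{-(e^t-1)}$.

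\textbf{Your route.} You use the exact identity $\E\binom{n_{g,h}}{m}=\sum_{i_1<\cdots<i_m}\Pr(\cdot)$, Markov's inequality at the single order $m$, and $m!\ge(m/e)^m$.

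\textbf{What each buys.} You need the density bound only at that one order. Since $m\ge K\log 2>5$ automatically, you never touch the $k=0,1$ terms that the paper's full binomial sum uses. The $k=1$ term implicitly requires a bounded marginal density near zero, which AF-I(i) alone does not supply. The paper's route, in turn, is the more standard Chernoff calculation and makes the Poisson domination explicit.

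\textbf{Your closing hedge is unnecessary.} AF-I(i) is uniform over all index tuples $i_1<\cdots<i_k$ for every $k$, so it yields one constant $C$ independent of $k$. This is what the paper uses, and your main line handles it with just an extra factor $C$ in the tail bound. A $k$-free constant is a special case of your geometric one, not a weaker alternative. The fallback with $k$ of order $\log G$ would not by itself cope with super-geometric constants, since those would enter the union bound, but you do not need it.
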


\subsection{Proof of Proposition~\ref{prop:AF_I-II}}
In the following, let $C$ denote a generic positive constant that might differ between equations.			
Recall from Appendix~\ref{A:sec:weights} that for $\star \in \{+,-\}$, the local linear weights can be expressed as
\[
w^\star_{gi}(h) = \frac{ \frac{1}{n} k^\star_h(X_{gi}) \left(S^\star_{n,2}   - S^\star_{n,1}  (X_{gi}/h) \right)  }{ S^\star_{n,2} S^\star_{n,0}- (S^\star_{n,1})^2 }.
\]
By Lemma~\ref{lemma::snj},
$S^+_{n,2} S^+_{n,0}- (S^+_{n,1})^2 = S^-_{n,2} S^-_{n,0}- (S^-_{n,1})^2 + o_P(1) = C + o_P(1)$. It follows that
\[
	\wt w^\star_{gi}(h) \equiv \frac{1}{n} k^\star_h(X_{gi}) \left(S^\star_{n,2}   - S^\star_{n,1}  (X_{gi}/h) \right) = w^\star_{gi}(h)/(C + o_P(1)).
\]
Let $\wt w_{gi}(h) = \wt w^+_{gi}(h) - \wt w^-_{gi}(h)$.

\medskip 

\noindent\textbf{Verification of Assumption~\ref{ass:HL_cluster_size}:} To begin with, note that the assumption that the eigenvalues of $\Sigma_g$ are bounded away from zero implies that
\[
se^2(h) = \frac{1}{C+o_P(1)} \sum_{g \in [G]} \sum_{i,j \in I_g} \wt w_{gi}(h) \wt w_{gj}(h) \sigma_{g, ij}  \ge \frac{1}{C+o_P(1)}\sum_{g \in [G]} \sum_{i \in I_g} \wt w_{gi}(h)^2.
\]
It follows that for any $g \in [G]$,
\begin{align}\label{eq:bound1}
	\frac{	\sum_{i,j \in I_g} |w_{gi}(h) w_{gj}(h)|}{se^2(h)} \leq  
	\frac{ \sum_{i,j \in I_g} |\wt w_{gi}(h) \wt w_{gj}(h)| }{ \sum_{g \in [G]} \sum_{i \in I_g} \wt w_{gi}(h)^2}(C + o_P(1)).
\end{align}

We use the bound in \eqref{eq:bound1} to verify Assumption~\ref{ass:HL_cluster_size}.
First, we note that the denominator of the bound satisfies
\begin{equation}\label{eq:step1}
	\sum_{g \in [G]} \sum_{i \in I_g} \wt w_{gi}(h)^2 = \sum_{\star \in \{+,-\}} \sum_{g \in [G]} \sum_{i \in I_g} \wt w^\star_{gi}(h)^2 = \frac{C+ o_P(1)}{nh}.
\end{equation}
This holds because by Lemma~\ref{lemma::snj}, for $\star \in \{+,-\}$, we have that
\begin{align*}
	\sum_{g \in [G]} \sum_{i \in I_g} \wt w^\star_{gi}(h)^2
	& =	\frac{1}{n^2}\sum_{g \in [G]} \sum_{i \in I_g} (k^\star_h(X_{gi}))^2 \left( (S^\star_{n,2})^2 - 2 S^\star_{n,2} S^\star_{n,1}  (X_{gi}/h) + (S^\star_{n,1}  (X_{gi}/h))^2 \right) \\
	& = (S^\star_{n,2})^2  T^\star_{n,0} - 2 S^\star_{n,2} S^\star_{n,1} T^\star_{n,1} + (S^\star_{n,1})^2  T^\star_{n,2}  \\
	& = \frac{1}{nh}\left( (\bar\mu_2^\star)^2 \bar\kappa_0^\star - 2 \bar\mu_2^\star \bar\mu_1^\star \bar\kappa_1^\star + (\bar\mu_1^\star)^2 \bar\kappa_2^\star     \right) f(0)^3 (1 + o_P(1)),
\end{align*}
where $(\bar\mu_2^\star)^2 \bar\kappa_0^\star - 2 \bar\mu_2^\star \bar\mu_1^\star \bar\kappa_1^\star + (\bar\mu_1^\star)^2 \bar\kappa_2^\star = \int (k^\star(v)(\bar\mu_2^\star - \bar\mu^\star_1 v ))^2 dv   > 0$.

Second, it holds that
\begin{align*}
	\max_{g \in [G]} \sum_{i,j \in I_g} |\wt w_{gi}(h) \wt w_{gj}(h)| 
	& \leq  \max_{g \in [G]} \max_{i \in I_g}  n_{g,h}^2 \wt w_{gi}(h)^2  = O_P \bigg( \frac{1 }{n^2h^2} \bigg) \max_{g \in [G]}n_{g,h}^2.
\end{align*}
By Assumption~\ref{ass:LLI_clustersize}, using Lemma~\ref{lemma:ngh}, or directly by Assumption~\ref{ass:LLII_clustersize} it then follows that
\begin{equation}\label{eq:step2}
	\max_{g \in [G]} \sum_{i,j \in I_g} |\wt w_{gi}(h) \wt w_{gj}(h)| = o_P\left( \frac{1}{nh} \right).
\end{equation}

Third, we show that
\begin{equation}\label{eq:step3}
	\sum_{g \in [G]}	\sum_{i,j \in I_g} |\wt w_{gi}(h) \wt w_{gj}(h)| = \sum_{\star, \diamond \in \{+, -\}} \sum_{g \in [G]}	\sum_{i,j \in I_g} |\wt w_{gi}^\star(h) \wt w^\diamond_{gj}(h)| = O_P\left(\frac{1}{nh}\right).
\end{equation}	
To prove this claim, note that
\begin{align}
	\sum_{g \in [G]}& 	\sum_{i,j \in I_g} |\wt w_{gi}^\star(h) \wt w_{gj}^\diamond(h)| \nonumber\\
	& \leq 	\frac{1}{n^2} \sum_{g \in [G]}	\sum_{i,j \in I_g} k^\star_h(X_{gi}) k^\diamond_h(X_{gj})  \left|  \big(S^\star_{n,2}   - S^\star_{n,1}  (X_{gi}/h) \big) \big(S^\diamond_{n,2}   - S^\diamond_{n,1}  (X_{gj}/h) \big) \right| \nonumber\\
	& \leq   S^\star_{n,2}S^\diamond_{n,2}  V^{\star\diamond}_{n,0,0} +  |S^\star_{n,2} S^\diamond_{n,1}  V^{\star\diamond}_{n,0,1}|  +  |S^\star_{n,1} S^\diamond_{n,2}   V^{\star\diamond}_{n,1,0}|   +  |S^\star_{n,1}S^\diamond_{n,1} V^{\star\diamond}_{n,1,1}| \label{eq:bound_V},
\end{align}
where $$	V^{\star\diamond}_{n,l,m} =  \frac{1}{n^2} \sum_{g \in [G]} \sum_{ i,j \in I_g} k^{\star}_h(X_{gi})k^{\diamond}_h(X_{gj})(X_{gi}/h)^l (X_{gj}/h)^m  = U^{\star\diamond}_{n,l,m} + T^\star_{n,l} \1{\star=\diamond }.$$
Under the assumptions of the proposition, $V^{\star\diamond}_{n,l,m} = O_P((nh)^{-1})$ by Lemma~\ref{lemma::snj}, and the conclusion in \eqref{eq:step3} follows.

Combining steps \eqref{eq:bound1}--\eqref{eq:step3}, both conditions of Assumption~\ref{ass:HL_cluster_size} follow:
\begin{align*}
	\max_{g \in [G]} \sum_{i,j \in I_g} \frac{ |w_{gi}(h) w_{gj}(h)|}{se^2(h)} & \leq (C + o_P(1))  \frac{ \max_{g \in [G]} \sum_{i,j \in I_g} |\wt w_{gi}(h) \wt w_{gj}(h)|}{\sum_{g \in [G]} \sum_{i \in I_g} \wt w_{gi}(h)^2}   =o_p(1),\\	
	\sum_{g \in [G]} \sum_{i,j \in I_g} \frac{ |w_{gi}(h) w_{gj}(h)|}{se^2(h)} & \leq (C + o_P(1)) \frac{ \sum_{g \in [G]}  \sum_{i,j \in I_g} |\wt w_{gi}(h) \wt w_{gj}(h)|}{\sum_{g \in [G]} \sum_{i \in I_g} \wt w_{gi}(h)^2}  = O_P(1).
\end{align*}

\medskip 

\noindent\textbf{Rate of the conditional variance:} 
We have already shown that $$se^2(h) \ge (C+o_P(1))/(nh).$$ Moreover,
\[
	se^2(h) \leq (C + o_P(1))	\sum_{g \in [G]}	\sum_{i,j \in I_g} |\wt w_{gi}(h) \wt w_{gj}(h)| =  O_P\left(\frac{1}{nh}\right).
\]
where the first step uses the assumption that the conditional variances (and hence also covariances) are bounded and the second step uses~\eqref{eq:step3}.
Together, these results imply that $se^2(h) \asymp_p (nh)^{-1}$. 

\medskip 

\noindent\textbf{Limit of the conditional worst-case bias:} Note that
\begin{equation}\label{eq:proof_bias}
	\sum_{g \in [G]} \sum_{i \in I_g} w^\star_{gi}(h) X_{gi}^2 = \frac{  (S^\star_{n,2})^2  - S^\star_{n,1}S^\star_{n,3}  }{ S^\star_{n,2} S^\star_{n,0}- (S^\star_{n,1})^2 } h^2 = (\bar\mu + o_P(1))h^2.
\end{equation}
where the second step follows by Lemma~\ref{lemma::snj}. \qed

\subsection{Proof of Proposition~\ref{prop:AF_III-IV}}
The proof has a similar structure as the proof of Proposition~\ref{prop:AF_I-II}, except that we directly use the assumed order of the conditional variance, rather than derive it. To begin with, we note that under either Assumption~\ref{ass:LL3_clustersize} or~\ref{ass:LL4_clustersize}, Lemma~\ref{lemma::snj} yields $S^\star_{n,l} = \bar\mu^\star_lf(0) + o_P(1)$ for $\star \in \{+,-\}$ and $l \in \mathbb{N}_0$. It follows that
$S^+_{n,2} S^+_{n,0}- (S^+_{n,1})^2 = S^-_{n,2} S^-_{n,0}- (S^-_{n,1})^2 + o_P(1) = C + o_P(1)$ for a positive constant $C$, and the limit of the worst-case conditional bias follows as in \eqref{eq:proof_bias}.

We verify Assumption~\ref{ass:HL_cluster_size} in two steps; first under Assumption~\ref{ass:LL3_clustersize} and then under Assumption~\ref{ass:LL4_clustersize}

\medskip

\noindent\textbf{Part 1:} 
Suppose that Assumption~\ref{ass:LL3_clustersize} holds.
First, by Lemma~\ref{lemma:ngh},
\[
\max_{g \in [G]} \sum_{i,j \in I_g} |w_{gi}(h)  w_{gj}(h)| = O_P\left( \frac{1}{n^2h^2} \right) \max_{g \in [G]} n_{g,h}^2 = O_P\left( \frac{\max_{g \in [G]} n_{g}^2h^2 + \log^2 G}{n^2h^2} \right).
\]
Second, using the inequality~\eqref{eq:bound_V}, Lemma~\ref{lemma::snj} yields 
\begin{align*}
	\sum_{g \in [G]} 	\sum_{i,j \in I_g} |w_{gi}(h) w_{gj}(h)| = O_P\left(\frac{1 + \lambda_n}{nh}\right).
\end{align*}
Since by assumption $se^2(h) \asymp_P (1 + \lambda_n)/(nh)$, Assumption~\ref{ass:HL_cluster_size} follows under the assumptions made.

\medskip

\noindent\textbf{Part 2:} Now suppose that Assumption~\ref{ass:LL4_clustersize} holds. First, note that
\[
\max_{g \in [G]} \sum_{i,j \in I_g} | w_{gi}(h) w_{gj}(h)| = O_P\left( \frac{\max_{g \in [G]} n_g^2}{n^2h^2} \right).
\]
Since by assumption $se^2(h) \asymp_P (1 + \lambda_n/h)/(nh) =  \sum_{g \in [G]}n_g^2/(n^2h)$, we obtain that
\begin{align*}
	\frac{\max_{g \in [G]} \sum_{i,j \in I_g} | w_{gi}(h) w_{gj}(h)|}{se^2(h)}  =  O_P\left( \frac{\max_{g \in [G]} n_{g}^2}{ h \sum_{g \in [G]} n_g^2}  \right).
\end{align*}
Part~(i) of Assumption~\ref{ass:HL_cluster_size} follows.

Second, using the inequality~\eqref{eq:bound_V}, Lemma~\ref{lemma::snj} yields 
\begin{align*}
	\sum_{g \in [G]} 	\sum_{i,j \in I_g} |w_{gi}(h) w_{gj}(h)| =  O_P\left(\frac{1 + \lambda_n/h}{nh}\right).
\end{align*}
This shows that part~(ii) of Assumption~\ref{ass:HL_cluster_size} is satisfied.\qed

\subsection{Proofs of Lemmas~\ref{lemma:se_1} and~\ref{lemma:se_2}}			
\textbf{Part (i):} Under Assumption~\ref{ass:epsilon}(i), by the triangle inequality,
\[
	se^2(h) \leq C \sum_{g \in [G]}	\sum_{i,j \in I_g} |w_{gi}(h) w_{gj}(h)|
\]
for a positive constant $C$.
The conclusions follow using the inequality~\eqref{eq:bound_V} and Lemma~\ref{lemma::snj}.

\medskip

\noindent\textbf{Part (ii):}
Under Assumption~\ref{ass:limit_variance}, we have that
\begin{align*}
	se^2(h) & =   \sum_{g \in [G]} \sum_{i \in I_g}  w_{gi}(h)^2 \sigma_{g,i}^2 +  \sum_{g \in [G]}  \sum_{\substack{ i \neq j  \\ i,j \in I_g} } w_{gi}(h) w_{gj}(h) \sigma_{g,ij} \\
	& =   \sum_{\star \in \{+,-\}} \sigma^2(0^\star) \sum_{g \in [G]} \sum_{i \in I_g}  w_{gi}^\star(h)^2 (1+ o_P(1)) \\
	& \quad +   \sum_{\star, \diamond \in \{+, -\}} \onepm{\star = \diamond} \sigma(0^\star,0^\diamond) \sum_{g \in [G]}  \sum_{\substack{ i \neq j  \\ i,j \in I_g} } w^\star_{gi}(h) w^\diamond_{gj}(h) (1+ o_P(1)).
\end{align*}
Recall that
$w^\star_{gi}(h) = \frac{1}{n} k^\star_h(X_{gi}) \left(S^\star_{n,2}   - S^\star_{n,1}  (X_{gi}/h) \right)  / ( S^\star_{n,2} S^\star_{n,0}- (S^\star_{n,1})^2 ).$	
Using Lemma~\ref{lemma::snj}, we obtain that 
\[
	\sum_{g \in [G]} \sum_{i \in I_g} w^\star_{gi}(h)^2 = \frac{1}{nh} \frac{\bar\kappa}{f(0)}(1+o_P(1)).
\]
The second component satisfies:
\begin{align*}
		\sum_{g \in [G]}  \sum_{\substack{ i \neq j  \\ i,j \in I_g} } w^\star_{gi}(h) w^\diamond_{gj}(h)  
		& = \frac{
		\bar{\mu}^\star_{2}\bar{\mu}^\diamond_{2} U^{\star\diamond}_{n,0,0}
		- \bar{\mu}^\star_{2}\bar{\mu}^\diamond_{1} U^{\star\diamond}_{n,0,1}
		- \bar{\mu}^\star_{1}\bar{\mu}^\diamond_{2} U^{\star\diamond}_{n,1,0}
		+ \bar{\mu}^\star_{1}\bar{\mu}^\diamond_{1} U^{\star\diamond}_{n,1,1}
		}{\left(\bar{\mu}^\star_2 \bar{\mu}^\star_0 - (\bar{\mu}^\star_1)^2\right) \left(\bar{\mu}^\diamond_2 \bar{\mu}^\diamond_0 - (\bar{\mu}^\diamond_1)^2\right)  f^2_X(0) } (1 + o_P(1)).
\end{align*}
Under the assumptions of Lemma~\ref{lemma:se_1}, plugging in $	nh U^{\star\diamond}_{n,l,m}  = \lambda_n \bar\mu_l^\star\bar\mu_m^\diamond f(0,0) + o_P(1 + \lambda_n)$, we obtain that
\begin{align*}
	\sum_{g \in [G]}  \sum_{\substack{ i \neq j  \\ i,j \in I_g} } w^\star_{gi}(h) w^\diamond_{gj}(h)  = 
	\frac{1}{nh} \left( \frac{f(0,0)}{f^2_X(0)} \lambda_n + o_p(1 + \lambda_n) \right).
\end{align*}
Under the assumptions of Lemma~\ref{lemma:se_2}, plugging in $	nh U^{\star\diamond}_{n,l,m}  = \lambda_n \bar\kappa^\star_{l+m}f_X(0)/h + o_P(1 + \lambda_n/h)$, we obtain that
\begin{align*}
	\sum_{g \in [G]}  \sum_{\substack{ i \neq j  \\ i,j \in I_g} } w^\star_{gi}(h) w^\diamond_{gj}(h)  = 
	\frac{1}{nh} \left( \frac{\bar\kappa}{f_X(0)} \frac{\lambda_n}{h} + o_P\left(1 + \frac{\lambda_n}{h}\right) \right) \1{\star = \diamond}.
\end{align*}
Part~(ii) of both lemmas follows.\qed

\section{Proofs of Auxiliary Lemmas}\label{sec:proofs_lemmas}	

\subsection{Proof of Lemma~\ref{lemma::snj}}
In case~(A), for any $g \in [G]$ and $i,j \in I_g$, $i \neq j$, let $f_{X_{gi},X_{gj}}$ denote the joint density of $X_{gi}$ and $X_{gj}$. Note that it is uniformly bounded under Assumption~\ref{ass:LLI_clustersize}(i).

\noindent\textbf{Part~(i).} We study the expectation and the variance of $S^\star_{n,l}$. First, note that
\[
\E[S^\star_{n,l}] = \E[k^\star_h(X_{gi})(X_{gi}/h)^l] = f(0)\bar\mu^\star_l + o(1). 
\]
Second, it holds that
\begin{align*}
	\Var(S^\star_{n,l} ) 
	& =  \frac{1}{(nh)^2}\sum_{g \in [G]} \Var\left( \sum_{i \in I_g} k^\star(X_{gi}/h)(X_{gi}/h)^l \right) \\
	& = \frac{1}{(nh)^2}\sum_{g \in [G]} n_g \Var( k^\star(X_{g1}/h)(X_{g1}/h)^l ) \\
	& \quad + \frac{1}{(nh)^2}\sum_{g \in [G]}  \sum_{\substack{ i \neq j  \\ i,j \in I_g} }  \Cov( k^\star(X_{gi}/h)(X_{gi}/h)^l,\,k^\star(X_{gj}/h)(X_{gj}/h)^l ) 
\end{align*}

We analyze this variance in cases (A) and (B) separately, using standard kernel calculations.
In Case~(A), 
\begin{align*}
	\Var(S^\star_{n,j}) 
	&  \leq \frac{1}{(nh)^2}\sum_{g \in [G]} n_g  \int (k^\star(x/h)(x/h)^l)^2 f_X(x)dx \\
	& \quad + \frac{1}{(nh)^2}\sum_{g \in [G]} \sum_{\substack{ i \neq j  \\ i,j \in I_g} }   \left( \int \int k^\star(x/h)(x/h)^l k^\star(z/h)(z/h)^l  f_{X_{gi},X_{gj}}(x,z)dxdz \right) \\
	& = \frac{1}{n^2 h}\sum_{g \in [G]} n_g \int (k^\star(v)(v)^l)^2 f(vh)dv \\
	& \quad + \frac{1}{n^2 }\sum_{g \in [G]} \sum_{\substack{ i \neq j  \\ i,j \in I_g} } \int \int k^\star(v)v^l k^\star(w)w^l f_{X_{gi},X_{gj}}(vh, wh)dvdw \\
	& = O\left( \frac{1}{n h} +  \frac{\sum_{g \in [G]} n_g(n_g-1)}{n^2} \right) = o(1).
\end{align*}	
In Case~(B), we employ the inequality
\[
	\Cov( k^\star(X_{gi}/h)(X_{gi}/h)^l,\,k^\star(X_{gj}/h)(X_{gj}/h)^l )  \leq \Var(k^\star(X_{g1}/h)(X_{g1}/h)^l ).
\]
With this bound, we obtain that
\begin{align*}
	\Var(S^\star_{n,j}) & \leq \frac{1}{(nh)^2}\sum_{g \in [G]} n_g^2 \Var( k^\star(X_{g1}/h)(X_{g1}/h)^l ) \\
	&  \leq \frac{1}{(nh)^2}\sum_{g \in [G]} n_g^2  \int (k^\star(x/h)(x/h)^l)^2 f(x)dx \\
	& =  \frac{1}{n^2 h}\sum_{g \in [G]} n_g^2 \int (k^\star(v)(v)^l)^2 f(vh)dv \\
	& = O\left(\frac{\sum_{g \in [G]} n_g^2 }{n^2 h}\right) = o(1).
\end{align*}	
This concludes the proof of part~(i).

\noindent\textbf{Part~(ii).} We study the expectation and the variance of $T^\star_{n,l}$. First, note that
\begin{align*}
	\E[T^\star_{n,l}] & = \frac{1}{n}\E[(k^\star_h(X_{gi}))^2(X_{gi}/h)^l] \\
	& = \frac{1}{n} \int (k^\star_h(x))^2(x/h)^l f(x)dx \\
	& =  \frac{1}{nh}(\bar\kappa^\star_l f(0) + o(1)).
\end{align*}
Second, it holds that
\begin{align*}
	\Var(T^\star_{n,l}) & =  \Var\left( \frac{1}{n^2h^2}\sum_{g \in [G]} \sum_{i \in I_g} k^\star(X_{gi}/h)^2(X_{gi}/h)^{l}  \right) \\
	& = \frac{1}{(nh)^4}\sum_{g \in [G]} n_g \Var( (k^\star(X_{g1}/h))^2(X_{g1}/h)^l ) \\
	& \quad + \frac{1}{(nh)^4}\sum_{g \in [G]}\sum_{\substack{ i \neq j  \\ i,j \in I_g} } \Cov( (k^\star(X_{gi}/h))^2 (X_{gi}/h)^l,\,(k^\star(X_{gj}/h))^2(X_{gj}/h)^l  ).
\end{align*}
We study this variance separately in cases (A) and (B), using standard kernel derivations.
In Case~(A),
\begin{align*}
	\Var(nhT^\star_{n,j}) 
	&  \leq \frac{1}{(nh)^2}\sum_{g \in [G]} n_g  \int ((k^\star(x/h))^2(x/h)^l)^2 f(x)dx \\
	& \quad + \frac{1}{(nh)^2}\sum_{g \in [G]} \sum_{\substack{ i \neq j  \\ i,j \in I_g} }  \left( \int \int (k^\star(x/h))^2(x/h)^l (k^\star(z/h))^2(z/h)^l  f(x,z)dxdz \right) \\
	& = \frac{1}{n^2 h}\sum_{g \in [G]} n_g \int ((k^\star(v))^2(v)^l)^2 f(vh)dv \\
	& \quad + \frac{1}{n^2 }\sum_{g \in [G]}  \sum_{\substack{ i \neq j  \\ i,j \in I_g} } \int \int (k^\star(v))^2v^l (k^\star(w))^2w^l f(vh, wh)dvdw \\
	& = O\left( \frac{1}{nh} +  \frac{\sum_{g \in [G]} n_g(n_g-1)}{n^2 } \right) = o(1).
\end{align*}
In Case~(B),
\begin{align*}
	\Var(nh T^\star_{n,l})	& \leq \frac{1}{(nh)^2}\sum_{g \in [G]} n_g^2 \Var( k^\star(X_{g1}/h)^2(X_{g1}/h)^l  ) \\
	&  \leq \frac{1}{(nh)^2}\sum_{g \in [G]} n_g^2  \int (k^\star(x/h)^2(x/h)^l)^2 f(x)dx \\
	& =  \frac{1}{n^2 h}\sum_{g \in [G]} n_g^2 \int (k^\star(v)^2(v)^l)^2 f(vh)dv \\
	& = O\left( \frac{\sum_{g \in [G]} n_g^2 }{n^2 h} \right) =o(1),
\end{align*}	
where the first inequality follows by the Cauchy-Schwarz inequality for the covariances. This concludes the proof of part~(ii).

\noindent\textbf{Part~(iii).}
We study the expectation and the variance of $U^{\star\diamond}_{n,l,m}$ in Case~(A). First, by standard kernel derivations,
\begin{align*}
	\E[U^{\star\diamond}_{n,l,m}] &= \frac{1}{n^2} \sum_{g \in [G]}  \sum_{\substack{ i \neq j  \\ i,j \in I_g} } \E[k^{\star}_h(X_{gi})(X_{gi}/h)^l k^{\diamond}_h(X_{gj})(X_{gj}/h)^m] \\
	& = \frac{1}{n^2} \sum_{g \in [G]} \sum_{\substack{ i \neq j  \\ i,j \in I_g} } \int\int k^{\star}_h(x_1)(x_1/h)^l k^{\diamond}_h(x_2)(x_2/h)^m f_{X_{gi},X_{gj}}(x_1,x_2)dx_1dx_2 \\
	& = \frac{1}{n^2} \sum_{g \in [G]}\sum_{\substack{ i \neq j  \\ i,j \in I_g} } \int\int k^{\star}(v)v^l k^{\diamond}(w)(w)^m f_{X_{gi},X_{gj}}(vh,wh)dvdw \\
	& = O\left( \frac{1}{n^2}\sum_{g \in [G]} n_g(n_g-1) \right).
\end{align*}

Next, we consider the variance. Let
\begin{align*}
	C_g(i_1,j_1,i_2,j_2) = \Cov\Big(& k^{\star}_h(X_{gi_1})k^{\diamond}_h(X_{gj_1})(X_{gi_1}/h)^l (X_{gj_1}/h)^m, \\
& \quad k^{\star}_h(X_{gi_2})k^{\diamond}_h(X_{gj_2})(X_{gi_2}/h)^l (X_{gj_2}/h)^m \Big).
\end{align*}
Note that for all indices $i_1,j_1,i_2,j_2 \in I_g$ such that $i_1\neq j_1$ and $i_2 \neq j_2$,
\[
|C_g(i_1,j_1,i_2,j_2)| \leq \begin{cases}
	C/h^2 & \text{ if } i_1=i_2 \text{ and } j_1 = j_2, \\
	C/h	& \text{ if there is exactly one pair of equal indices,}\\
	C	& \text{ if } i_1, i_2, j_1, j_2 \text{ are pairwise different.} \\		
\end{cases}
\]
for some constant $C$. We have that
\begin{align*}
	\Var(U^{\star\diamond}_{n,l,m})
	&= \frac{1}{n^4} \sum_{g \in [G]} \Var\left(   \sum_{\substack{ i \neq j  \\ i,j \in I_g} } k^{\star}_h(X_{gi})k^{\diamond}_h(X_{gj})(X_{gi}/h)^l (X_{gj}/h)^m \right) \\
	&= \frac{1}{n^4} \sum_{g \in [G]}  \sum_{\substack{ i \neq j   \\ i,j \in I_g} } C_g(i,j,i,j)
	+ \frac{1}{n^4} \sum_{g \in [G]}  \sum_{\substack{ i_1,j_1,i_2,j_2 \in I_g \\ \text{pairwise different}} } 	C_g(i_1,j_1,i_2,j_2) \\
	& \quad + \frac{1}{n^4} \sum_{g \in [G]}  \sum_{\substack{ i_1 \neq j_1, i_2 \neq j_2 \\ i_1 = i_2, j_1 \neq j_2  \\ i_1,j_1,i_2,j_2 \in I_g} } 	C_g(i_1,j_1,i_2,j_2) 
	+ \frac{1}{n^4} \sum_{g \in [G]}  \sum_{\substack{ i_1 \neq j_1, i_2 \neq j_2 \\ i_1 \neq i_2, j_1 = j_2  \\ i_1,j_1,i_2,j_2 \in I_g} } 	C_g(i_1,j_1,i_2,j_2) \\
	& \quad + \frac{1}{n^4} \sum_{g \in [G]}  \sum_{\substack{ i_1 \neq j_1, i_2 \neq j_2 \\ i_1 = j_2, i_2 \neq j_1  \\ i_1,j_1,i_2,j_2 \in I_g} } 	C_g(i_1,j_1,i_2,j_2) 
	+ \frac{1}{n^4} \sum_{g \in [G]}  \sum_{\substack{ i_1 \neq j_1, i_2 \neq j_2 \\ i_2 = j_1, i_1 \neq j_2  \\ i_1,j_1,i_2,j_2 \in I_g} } 	C_g(i_1,j_1,i_2,j_2) \\
	& = O\left(\frac{1}{n^4h^2}\sum_{g \in [G]}n_g^2 + \frac{1}{n^4 h}\sum_{g \in [G]} n_g^3 + \frac{1}{n^4}\sum_{g \in [G]}n_g^4 
	\right) \\
	& = O\left(\frac{1}{(nh)^4}\sum_{g \in [G]}(n_gh)^2 +  \frac{1}{(nh)^4}\sum_{g \in [G]}(n_gh)^4
	\right),
\end{align*}
where the last step uses the Cauchy-Schwarz inequality. The first statement of part~(iii) follows by noting that
\begin{align*}
	nh U^{\star\diamond}_{n,l,m} & = O_P\left( \frac{1}{nh}\sum_{g \in [G]}(n_gh)^2 + \frac{1}{nh}\sqrt{\sum_{g \in [G]}(n_gh)^2} +  \frac{1}{nh} \sqrt{\sum_{g \in [G]}(n_gh)^4} \right) \\
	& =  O_P\left( \frac{1}{nh}\sum_{g \in [G]}(n_gh)^2 + \frac{1}{nh}\left(1 + \sum_{g \in [G]}(n_gh)^2 \right) + \frac{1}{nh} \sqrt{ \left(\sum_{g \in [G]}(n_gh)^2\right)^2} \right) \\
	& = O_P\left( \frac{1}{nh}\sum_{g \in [G]}(n_gh)^2 + \frac{1}{nh} \right). 
\end{align*}

To prove the second claim, note that if all pairs $(X_{gi}, X_{gj})$, $i\neq j$, are identically distributed with continuous joint density $f(x_1,x_2)$, then
\begin{align*}
	\E[nh U^{\star\diamond}_{n,l,m}]	& = \frac{nh}{n^2} \sum_{g \in [G]} n_g(n_g-1) \int\int k^{\star}_h(x_1)(x_1/h)^l k^{\diamond}_h(x_2)(x_2/h)^m f(x_1,x_2)dx_1dx_2\\
	& = \lambda_n \int\int k^{\star}(v)v^l k^{\diamond}(w)(w)^m f(vh,wh)dvdw \\
	& = \lambda_n \int\int k^{\star}(v)v^l k^{\diamond}(w)(w)^m dvdw f(0^{\star},0^{\diamond}) + o(1+o(\lambda_n)) \\
	& = \lambda_n \bar\mu^\star_l \bar\mu^\diamond_m  f(0^{\star},0^{\diamond}) + o(1+o(\lambda_n)).
\end{align*}
If in addition,
\[
\frac{\max_{g \in [G]} (n_gh)^2}{nh + \sum_{g \in [G]} (n_gh)^2} = o(1),
\]
then we obtain that
\begin{align*}
	nh U^{\star\diamond}_{n,l,m} & = \lambda_n \bar\mu^\star_l \bar\mu^\diamond_m  f(0^{\star},0^{\diamond})(1+o(1))
	+ O_P\left( \sqrt{ \frac{1}{(nh)^2}\sum_{g \in [G]}(n_gh)^2\left(1 + \max_{g \in [G]} (n_gh)^2\right)} \right) \\
	& = \lambda_n \bar\mu^\star_l \bar\mu^\diamond_m  f(0^{\star},0^{\diamond})(1+o(1)) \\
	&  \quad + O_P\left(\sqrt{ \frac{ (nh + \sum_{g \in [G]} (n_gh)^2) \sum_{g \in [G]} (n_gh)^2}{(nh)^2} \frac{1 + \max_{g \in [G]} (n_gh)^2}{nh + \sum_{g \in [G]} (n_gh)^2}}\right) \\
	& = \lambda_n \bar\mu^\star_l \bar\mu^\diamond_m  f(0^{\star},0^{\diamond}) (1 + o_P(1)) + o_P(1).
\end{align*}
This concludes the proof of part~(iii).

\noindent\textbf{Part~(iv).}
We study the expectation and the variance of $U^{\star\diamond}_{n,l,m}$ in Case~(B). First, note that
\begin{align*}
	|\E[U^{\star\diamond}_{n,l,m}] | & \leq \frac{1}{n^2} \sum_{g \in [G]}  \sum_{\substack{ i \neq j  \\ i,j \in I_g} }  \E[k^{\star}_h(X_{gi}) k^{\diamond}_h(X_{gj}) ] \\
	& \leq \frac{1}{n^2} \sum_{g \in [G]} n_g(n_g-1)  \E[(k^{\star}_h(X_{g1}))^2] \\
	& = O\left( \frac{1}{n^2h} \sum_{g \in [G]} n_g(n_g-1)  \right),
\end{align*}
where the second inequality follows by the Cauchy-Schwarz inequality. If all the realizations of the running variable are equal within each cluster and  $f(0)>0$, then
\begin{align*}
	\E[	U^{\star\diamond}_{n,l,m}] & = \frac{1}{n^2} \sum_{g \in [G]} n_g(n_g-1)  \E\big[(k^{\star}_h(X_{g1}))^2 (X_{g1}/h)^{l+m} \big] \\
	& = \frac{1}{n^2} \sum_{g \in [G]} n_g(n_g-1) \bar\kappa^*_{l+m}f(0)(1+ o(1)).
\end{align*}

The variance of $U^{\star\diamond}_{n,l,m}$ is bounded as follows:
\begin{align*}
	\Var(U^{\star\diamond}_{n,l,m}) & = 
	\frac{1}{n^4} \sum_{g \in [G]} \Var\left(   \sum_{\substack{ i \neq j  \\ i,j \in I_g} } k^{\star}_h(X_{gi})k^{\diamond}_h(X_{gj})(X_{gi}/h)^l (X_{gj}/h)^m \right) \\
	& = \frac{1}{n^4} \sum_{g \in [G]}  \sum_{\substack{ i_1 \neq j_1  \\ i_1,j_1 \in I_g} } \sum_{\substack{ i_2 \neq j_2  \\ i_2,j_2 \in I_g} } 	C_g(i_1,j_1,i_2,j_2) \\
	&  \leq \frac{1}{n^4} \sum_{g \in [G]} (n_g(n_g-1))^2 \E\big[k_h(X_{g1})^4\big] \\
	& = O\left(\frac{1}{n^4 h^3} \sum_{g \in [G]} n_g^4 \right) \\
	& = O\left(\frac{1}{n^2 h^2} \frac{1}{n} \sum_{g \in [G]} n_g^2 \frac{\max_{g \in [G]} n_g^2 }{nh}  \right),
\end{align*} 
where $C_g(i_1,j_1,i_2,j_2)$ denotes respective variances, as defined in the proof of part~(iii). Both statements in part~(iv) follow from the above observations. \qed

\subsection{Proof of Lemma~\ref{lemma:ngh}}
By the union bound and Chernoff's inequality, for any $B>0$ and $t>0$,
\begin{align*} 
	\Pr\left(\max_{g \in [G]} n_{g,h} > B\right) & \leq G \max_{g \in [G]} \Pr\left( n_{g,h} > B  \right)  \leq G \frac{\max_{g \in [G]}\E[e^{tn_{g,h}}]}{e^{tB}}.
\end{align*}
Next, for any $g \in [G]$,
\begin{align*}
	\E[e^{tn_{g,h}}] & = \sum_{k=0}^{n_g}  e^{tk} \Pr(n_{g,h} = k) \\
	& \leq \sum_{k=0}^{n_g}  e^{tk} \sum_{1 \leq i_1 < \ldots < i_k \leq n_g} \Pr\left( \bigcap_{l=1}^k \left\{|X_{gi_l}| \leq h \right\} \right) \\
	& \leq \sum_{k=0}^{n_g} e^{tk} {n_g \choose k}  C(2h)^k \\
	& = C ( 2he^t + 1)^{n_g} \\
	& \leq C e^{2 n_g h e^t},
\end{align*}
where the second step uses the union bound, the third step uses the assumption of bounded join densities, the fourth step uses the binomial formula, and the last step uses the bound $1+x \leq e^{x}$.

It follows that
\[
\Pr\left(\max_{g \in [G]} n_{g,h} > B\right) \leq C G \frac{e^{m h e^t}}{e^{tB}}.
\]
where $m = 2 \max_{g \in [G]} n_{g}$.

Letting $B = \frac{e^t}{t}( mh + \log G)$, we obtain
\begin{align*}
	\Pr\left(\max_{g \in [G]} n_{g,h} > \frac{e^t}{t}( mh + \log G) \right) &  \leq C \exp\left( \log G +  mh e^t - t \frac{e^t}{t} (m  h  + \log G ) \right) \\
	& =   C \exp\left( - (e^t - 1) \log G \right).
\end{align*}
The last bound can be made arbitrarily small by choosing $t$ large enough, which concludes the proof. \qed

\singlespacing
\bibliography{bib}

\end{document}